\newtheorem{theorem}{Theorem}[section]
\newtheorem{lemma}{Lemma}[section]
\newtheorem{claim}{Claim}[section]
\newtheorem{corollary}{Corollary}[section]
\newtheorem{definition}{Definition}[section]
\newtheorem{observation}{Observation}[section]
\newtheorem{proposition}{Proposition}[section]
\newcommand{\splitcontract}{{\textsc{Split Contraction}}\xspace}
\newcommand{\fpt}{{\textsf{FPT}}\xspace}
\newcommand{\denseclique}{\textsc{Densest-$k$-Subgraph}\xspace}
\newcommand{\densecliqueperfect}{\textsc{Densest-$k$-Subgraph with Perfect Completeness}\xspace}
\newcommand{\colordenseclique}{{\textsc{Multicolored Densest-$k$-Subgraph}}\xspace}
\newcommand{\colordensecliqueperfect}{{\textsc{Multicolored Densest-$k$-Subgraph with Perfect Completeness}}\xspace}
\newcommand{\edgegadget}{{\mathtt{ES}}\xspace}
\newcommand{\spec}{{\em special vertices}\xspace}
\newcommand{\specialvert}{{\mathtt{SV}}\xspace}
\newcommand{\tp}{{\em cap vertex}\xspace}
\newcommand{\tps}{{cap vertex}\xspace}
\newcommand{\specs}{{\em special vertex}\xspace}
\newcommand{\specss}{{special vertex}\xspace}
\newcommand{\w}{{W}}
\newcommand{\extra}{\texttt{extra edges}}
\newcommand{\SpC}{\textsc{SpC}}
\newcommand{\chc}{\textsc{Chordal Contraction}}
\newcommand{\calA}{\mathcal{A}}
\newcommand{\ClC}{\textsc{ClC}}
\newcommand{\calF}{\mathcal{F}}
\newcommand{\calO}{\ensuremath{{\mathcal O}}}
\newcommand{\OO}{\mathcal{O}}
\newcommand{\OPT}{\textsc{OPT}}
\newcommand{\calS}{\mathcal{S}}
\newcommand{\calW}{\mathcal{W}}
\newcommand{\WH}{\textsf{W[1]-Hard}}
\newcommand{\WTH}{\textsf{W[2]-Hard}}
\newcommand{\CONPpoly}{\textsf{coNP/poly}}
\newcommand{\yes}{{yes}}
\newtheorem{reduction rule}{Reduction Rule}[section]
\newtheorem{marking-scheme}{Marking Scheme}[section]
\newtheorem{remark1}{Remark}[section]
\newtheorem{condition1}{Condition}[section]
\newcounter{condition}[section]
\newenvironment{condition}[1][]{\refstepcounter{condition}\par\medskip \noindent \textit{Condition~\thecondition}:~[#1]\rmfamily}{}
\newtheorem*{rep@theorem}{\rep@title}
\newcommand{\newreptheorem}[2]{%
\newenvironment{rep#1}[1]{%
 \def\rep@title{#2 \ref{##1}}%
 \begin{rep@theorem}}%
 {\end{rep@theorem}}}
\newcommand{\defparproblem}[4]{
\begin{tcolorbox}[colback=green!5!white,colframe=green!75!black]
  \begin{tabular*}{\textwidth}{@{\extracolsep{\fill}}lr} #1  & {\bf{Parameter:}} #3 \\ \end{tabular*}
  {\bf{Input:}} #2  \\
  {\bf{Question:}} #4
\end{tcolorbox}
}
\DeclareOldFontCommand{\rm}{\normalfont\rmfamily}{\mathrm}
\DeclareOldFontCommand{\sf}{\normalfont\sffamily}{\mathsf}
\DeclareOldFontCommand{\tt}{\normalfont\ttfamily}{\mathtt}
\DeclareOldFontCommand{\bf}{\normalfont\bfseries}{\mathbf}
\DeclareOldFontCommand{\it}{\normalfont\itshape}{\mathit}
\DeclareOldFontCommand{\sl}{\normalfont\slshape}{\@nomath\sl}
\DeclareOldFontCommand{\sc}{\normalfont\scshape}{\@nomath\sc}
\date{}
\begin{document}
\thispagestyle{empty}
\title{On the Parameterized Approximability of Contraction to Classes of Chordal Graphs}

\author{Spoorthy Gunda\thanks{Simon Fraser University, Burnaby, Canada. \texttt{sgunda@sfu.ca}} 
\and 
Pallavi Jain\thanks{Indian Institute of Technology Jodhpur, Jodhpur, India. \texttt{pallavi@iitj.ac.in}}
\and 
Daniel Lokshtanov\thanks{University of California, Santa Barbara, USA. \texttt{daniello@ucsb.edu}}
\and
Saket Saurabh\thanks{The Institute of Mathematical Sciences, HBNI, Chennai, India, and University of Bergen, Norway. \texttt{saket@imsc.res.in}}
\and
Prafullkumar Tale\thanks{Max Planck Institute for Informatics, Saarland Informatics Campus, Saarbr\"ucken, Germany. \texttt{prafullkumar.tale@mpi-inf.mpg.de}}
}

\maketitle

\let\thefootnote\relax\footnotetext{
\begin{minipage}{0.7\textwidth}
\textbf{Funding:} \emph{Saket Saurabh:} 
This project has received funding from the European Research Council (ERC) under the European Union's Horizon $2020$ research and innovation programme (grant agreement No $819416$), and Swarnajayanti Fellowship (No DST/SJF/MSA01/2017-18).
\end{minipage}
\begin{minipage}{0.25\textwidth}
  \centering
    \includegraphics[scale=0.75]{./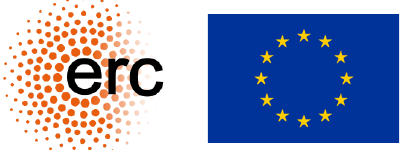}
\end{minipage}
\begin{minipage}{\textwidth}
\noindent \emph{Prafullkumar Tale:} This research is a part of a project that has received funding from the European Research Council (ERC) under the European Union's Horizon $2020$ research and innovation programme under grant agreement SYSTEMATICGRAPH (No. $725978$). Most part of this project was completed when the author was a Senior Research Fellow at The Institute of Mathematical Sciences, HBNI, Chennai, India.
\end{minipage}
}

\begin{abstract}
A graph operation that {\em contracts edges} is one of the fundamental operations in the theory of graph
minors. Parameterized Complexity of editing to a family of graphs by contracting $k$ edges has recently gained substantial scientific attention, and several new results have been obtained.  
Some important families of graphs, namely the subfamilies of
chordal graphs, in the context of edge contractions, have proven to be significantly difficult 
than one might expect. In this paper, we study the  \textsc{$\cal F$-Contraction} problem, where $\cal F$ is a subfamily of chordal graphs, in the realm of parameterized approximation. Formally, given a graph $G$ and an integer $k$, \textsc{ $\cal F$-Contraction} asks whether there exists $X \subseteq E(G)$ such that $G/X \in \cal F$ and $|X| \leq k$. Here, $G/X$ is the graph obtained from $G$ by contracting edges in $X$. We obtain the following results for the \textsc{ $\cal F$-Contraction} problem. 
\begin{itemize}
\item \textsc{Clique Contraction} is known to be \FPT. However, unless $\NP \subseteq \CONPpoly$, it does not admit a polynomial kernel. We show that it admits a polynomial-size approximate kernelization scheme (\textsf{PSAKS}). That is, it admits a $(1 + \epsilon)$-approximate kernel with $\calO(k^{f(\epsilon) })$ vertices for every $\epsilon >0$.

\item \textsc{Split Contraction} is known to be \WH. We deconstruct this intractability result in two ways. Firstly, we give a $(2+\epsilon)$-approximate polynomial kernel for \textsc{Split Contraction} (which also implies a factor $(2+\epsilon)$-\FPT-approximation algorithm for \textsc{ Split Contraction}). Furthermore, we show that, assuming \textsf{ Gap-ETH}, there is no $\left(\frac{5}{4}-\delta \right)$-\FPT-approximation algorithm for \textsc{Split Contraction}. Here, $\epsilon, \delta>0$ are fixed constants. 

\item \textsc{Chordal Contraction} is known to be \WTH. We complement this result by observing that the existing \textsf{W[2]-hardness} reduction can be adapted to show that, assuming \FPT $\neq$ \textsf{W[1]}, there is no $F(k)$-\FPT-approximation algorithm for \textsc{Chordal Contraction}. Here, $F(k)$ is an arbitrary function depending on $k$ alone.
\end{itemize}
We say that an algorithm is an $h(k)$-\FPT-approximation algorithm for the 
\textsc{$\cal F$-Contraction} problem, if it runs in \FPT\ time, and on any input $(G, k)$ such that there exists $X \subseteq E(G)$ satisfying $G/X \in \cal F$ and $|X| \leq k$, it outputs an edge set $Y$ of size at most $h(k) \cdot k$ for which $G/Y$ is in $\cal F$. We find it extremely interesting that three closely related problems have different behavior with respect to \FPT-approximation.

\end{abstract}

\pagestyle{plain}
\setcounter{page}{1}
\section{Introduction}
\label{sec:intro}
Graph modification problems have been extensively studied since the inception of Parameterized Complexity in the early `90s. The input of a typical graph modification problem consists of a graph $G$ and a positive integer $k$, and the objective is to edit $k$ vertices (or edges) so that the resulting graph belongs to some particular family, $\cal F$, of graphs. These problems are not only mathematically and structurally challenging, but have also led to the discovery of several important techniques in the field of Parameterized Complexity. It would be completely appropriate to say that solutions to these problems played a central role in the growth of the field. In fact, just in the last few years, parameterized algorithms have been developed for several graph editing problems~\cite{caoM14,cao15ICALP,cao2015,cao16SODA,ProperInterval14,IntervalSODA16,fomin2013,
fomin2014,thresholdESA,drangeSTACS14,drangeESA15,splitGraphs}.  
The focus of all of these papers and  the vast majority of papers on parameterized graph editing problems 
has so far been limited to edit operations that delete vertices, delete edges or add edges.

In recent years, a different edit operation has begun to attract significant scientific attention. This operation, which is arguably the most natural edit operation apart from deletions/insertions of vertices/edges, is the one that contracts an edge. Here, given an edge $uv$  
that exists in the input graph, we remove the edge from the graph and merge its two endpoints. Edge contraction is a fundamental operation in the theory of graph minors. For some particular family of graphs, $\cal F$, we say that a graph $G$ belongs to ${\cal F}+kv$, ${\cal F}+ke$ or ${\cal F}-ke$ if some graph in ${\cal F}$ can be obtained by deleting at most $k$ vertices from $G$, deleting at most $k$ edges from $G$ or adding at most $k$ edges to $G$, respectively. Using this terminology, we say that a graph $G$ belongs to ${\cal F}|ke$ if some graph in ${\cal F}$ can be obtained by contracting at most $k$ edges in $G$.  In this paper, we study the following problem.

\smallskip

\defparproblem{\textsc{$\mathcal{F}$-Contraction}}{A graph $G$ and an integer $k$}{$k$}
{Does $G$ belong to $\mathcal{F}|ke$?}
\smallskip

\noindent 
For several families of graphs $\cal F$, early papers by Watanabe et al.~\cite{contractEarly2,contractEarly3}, and Asano and Hirata~\cite{contractEarly1} showed that \textsc{ ${\cal F}$-Edge Contraction} is \NP-complete.

In the framework of Parameterized Complexity, these problems exhibit properties that are quite different from those problems where we only delete or add vertices and edges. Indeed, a well-known result by Cai \cite{cai1996} states that in case $\cal F$ is a hereditary family of graphs with a finite set of forbidden induced subgraphs, then the graph modification problems, ${\cal F}+kv$, ${\cal F}+ke$ or ${\cal F}-ke$, defined by $\cal F$ admits a simple \FPT\ algorithm (an algorithm with running time $f(k)n^{\OO(1)}$). However, for {\sc ${\cal F}$-Contraction}, the result by Cai \cite{cai1996} does not hold. In particular, Lokshtanov et al.~\cite{elimNew} and Cai and Guo~\cite{CliqueContractUpp} independently showed that if $\cal F$ is either the family of $P_\ell$-free graphs for some $\ell\geq 5$ or the family of $C_\ell$-free graphs for some $\ell\geq 4$, then {\sc ${\cal F}$-Contraction} is \WTH\ ({\sf W[i]}-{\sf hardness}, for $i\geq 1$, is an analogue to \NP-{\sf hardness} in Parameterized Complexity, and is used to rule out \FPT-algorithm for the problem) when parameterized by $k$ (the number of edges to be contracted). These results immediately imply that {\sc Chordal Contraction} is \WTH\ when parameterized by $k$.  
The parameterized hardness result for {\sc Chordal Contraction} led to finding subfamilies of chordal graphs, where the problem could be shown to be \FPT. Two subfamilies that have been considered in the literature are families of 
{\em split graphs} and {\em cliques}. Cai and Guo~\cite{CliqueContractUpp} showed that {\sc Clique Contraction} is 
\FPT, however, it does not admit a polynomial kernel. Later, Cai and Guo \cite{SplitContractErr} also claimed to design an algorithm that solves {\sc Split Contraction} in time $2^{\OO(k^2)}\cdot n^{\OO(1)}$, which proves that the problem is \FPT. However, Agrawal et al.~\cite{DBLP:conf/stacs/AgrawalLSZ17} found an error with the proof and showed that {\sc Split Contraction}  is  \WH.

%
  
  \begin{tcolorbox}[colback=red!5!white,colframe=red!75!black]
 Inspired by the intractable results that {\sc Chordal Contraction}, {\sc Split Contraction} and {\sc Clique Contraction} are \WTH, \WH, and does not admit polynomial kernel, respectively, we study them from the viewpoint of parameterized approximation. 
\end{tcolorbox}

\vspace{-0.35cm}
\paragraph{Our Results and Methods.} We start by defining a few basic definitions in parameterized approximation. To formally define these, we need a notion of parameterized optimization problems. 
We defer formal definitions to Section~\ref{sec:prelim} and give intuitive definitions here. 
We say that an algorithm is an $h(k)$-\FPT-approximation algorithm for the {\sc $\cal F$-Contraction} problem, if it runs in \FPT\ time, and on any input $(G, k)$ if there exists $X \subseteq E(G)$ such that $G/X \in \cal F$ and $|X| \leq k$, it outputs an edge set $Y$ of size at most $h(k)\cdot k$ and $G/Y \in \cal F$. Let $\alpha \geq 1$ be a real number. We now give an informal definition of $\alpha$-approximate kernels. The kernelization algorithm takes an instance $I$ with parameter $k$, runs in polynomial time, and produces a new instance $I'$ with parameter $k'$. Both $k'$ and the size of $I'$ should be bounded in terms of just the parameter $k$. That is, there exists a function $g(k)$ such that $|I'| \leq g(k)$ and $k' \leq g(k)$. This function $g(k)$ is called the {\em size} of the kernel. For minimization problems, we also require the following from $\alpha$-approximate kernels: For every $c \geq 1$, a $c$-approximate solution $S'$ to $I'$ can be transformed in polynomial time into a $(c \cdot \alpha)$-approximate solution $S$ to $I$. However, if the quality of $S'$ is ``worse than'' $k'$, or $(c \cdot \alpha) \cdot OPT(I) > k$, the algorithm that transforms $S'$ into $S$ is allowed to fail. Here, $OPT(I)$ is the value of the optimum solution of the instance $I$.

Our first result is about {\sc Clique Contraction}. It is known to be \FPT. However, unless $\NP \subseteq \CONPpoly$, it does not admit a polynomial kernel~\cite{CliqueContractUpp}. We show that it admits a {\sf PSAKS}. That is, it admits a $(1 + \epsilon)$-approximate polynomial kernel with $\calO(k^{f(\epsilon) })$ vertices for every $\epsilon >0$. In particular, we obtain the following result.

 \begin{theorem}\label{thm:clique-lossy} For any $\epsilon > 0$, \textsc{Clique Contraction} parameterized by the size of solution $k$, admits a time efficient $(1 + \epsilon)$-approximate polynomial kernel with $\calO(k^{d+1})$ vertices,  where $d = \lceil \frac{1 }{\epsilon} \rceil$.
\end{theorem}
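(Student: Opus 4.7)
The plan is to construct the kernel in two phases: first reduce to an instance with $\OO(k)$ \emph{essential} vertices by exploiting a vertex-cover structure in the complement graph, and then prune the remaining clique-inducing vertices using a marking scheme whose granularity is controlled by $d = \lceil 1/\epsilon \rceil$. The starting observation is structural: if $G$ admits a solution $X$ of size at most $k$, then at most $2k$ vertices lie in non-trivial witness sets of $G/X$, and any two vertices lying in singleton witness sets must be adjacent in $G$ (otherwise the quotient misses an edge between them). Hence the set of non-singleton vertices is a vertex cover of $\bar G$, so $\bar G$ admits a vertex cover of size at most $2k$. The first step of the kernelization is to compute a $2$-approximate vertex cover $C$ of $\bar G$ in polynomial time. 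If $|C| > 4k$, the instance is a no-instance and can be replaced by a trivial no-instance; otherwise $|C| \leq 4k$, and $H := V(G) \setminus C$ induces a clique in $G$.

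The second phase reduces $|H|$. For each $v \in H$, define its \emph{conflict set} $c(v) := C \setminus N(v)$, and note that $v$ can be placed as a singleton in a witness partition iff no witness set lies wholly inside $c(v)$. Fixing $d = \lceil 1/\epsilon \rceil$, the marking scheme I would apply is: for each $T \subseteq C$ with $|T| \leq d$, mark up to $k+1$ vertices $v \in H$ whose conflict set equals $T$; delete every remaining $H$-vertex, including all those with $|c(v)| > d$. The number of candidate sets $T$ is $\sum_{i=0}^{d} \binom{|C|}{i} = \OO(k^d)$, and each contributes at most $k+1$ marked vertices, so the kernel has $|C| + \OO(k^{d+1}) = \OO(k^{d+1})$ vertices, matching the desired bound.

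The main technical work is to show that the resulting instance is a $(1+\epsilon)$-approximate kernel. The easy direction is that an optimal solution for $G$ restricts to a solution for the kernel of at most the same size, since the marked vertices faithfully represent the deleted vertices in each conflict-set class. Lifting a $c$-approximate solution $S'$ of the kernel back to a solution $S$ of $G$ with $|S| \leq c(1+\epsilon) \cdot \mathrm{OPT}(G)$ is more delicate. For this I would first argue that, without loss of generality, every non-trivial witness set of $S'$ has size at most $d+1$; this is where the $(1+\epsilon)$-slack enters, via a splitting argument that converts larger witness sets into smaller ones at a $(1+1/d)$-factor cost in contractions. Under this assumption, any deleted $H$-vertex $v$ with $|c(v)| > d$ automatically has no witness set of $S'$ contained in $c(v)$ and can therefore be re-inserted as a singleton for free; a deleted $v$ with $|c(v)| \leq d$ is handled by swapping with a marked surrogate of the same conflict type, the existence of which follows from the pigeonhole reservation of $k+1$ copies against the $\leq k$ non-trivial witness sets.

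I expect the main obstacle to be this approximation-preservation argument, particularly making precise the ``WLOG every non-trivial witness set has size at most $d+1$'' splitting reduction while keeping the resulting partition feasible (each split piece must remain connected in $G$ and still yield a complete quotient). Coordinating that splitting step with the marking-based surrogate swap for low-conflict deleted vertices, and thereby bounding the total number of extra contractions incurred during lifting by $\epsilon \cdot \mathrm{OPT}(G)$, is where the bulk of the technical bookkeeping should lie.
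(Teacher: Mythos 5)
Your overall architecture matches the paper's first phase: compute a $2$-approximate vertex cover $C$ of $\bar G$, reject if $|C|>4k$, and use the fact that $H=V(G)\setminus C$ is a clique. The difficulties start in the marking scheme and, especially, in the solution-lifting step, where the proposal has a genuine gap.

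The marking by \emph{exact} conflict set $c(v) = C \setminus N(v)$ is not strong enough, and the claim that ``any deleted $H$-vertex $v$ with $|c(v)|>d$ automatically has no witness set of $S'$ contained in $c(v)$'' is false. Under your WLOG assumption all witness sets have size at most $d+1$, but a witness set $W \subseteq C$ of size, say, $2$ can perfectly well be contained in a conflict set $c(v)$ of size $>d$ for some deleted $v$; nothing in the assumption rules this out. Re-inserting $v$ as a singleton then produces two non-adjacent witness sets $\{v\}$ and $W$, so the quotient is not a clique. What the paper's Marking Scheme~2 accomplishes, and what your scheme does not, is precisely to forbid this configuration: for every set $A\subseteq X$ with $|A|\le d$, it marks $2k+1$ vertices of $Y$ whose neighborhood is \emph{disjoint} from $A$ (i.e.\ whose conflict set \emph{contains} $A$, not equals some specific $T$). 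Then if a witness set $W\subseteq X$ of size $\le d$ had a deleted non-neighbor, at least one of those $2k+1$ marked vertices survives as a singleton in $G'$ and is not adjacent to $W$, contradicting that $W$ is a witness set of a clique. This is the engine of the lifting lemma: it forces every ``problematic'' witness set (one contained in $X$ and having a deleted non-neighbor) to have at least $d+1$ vertices, and the $(1+\tfrac1d)$ factor comes from \emph{merging} each such large set with a single witness set meeting $Y'$, at cost one extra edge per at least $d$ existing edges. Your marking marks at most $k+1$ vertices per \emph{exact} type $T$, so if every $y$ with $W\subseteq c(y)$ happens to have $|c(y)|>d$, none of them is marked and the argument collapses.

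The ``WLOG every non-trivial witness set has size at most $d+1$'' splitting step is also not salvageable. In a clique witness structure each witness set must be adjacent to every other; splitting a large connected set $W$ into pieces $W_1,\dots,W_p$ does not preserve this, since a piece $W_i$ need not be adjacent to another witness set $W'$ that was only reachable from $W\setminus W_i$, nor to the other pieces $W_j$. Moreover splitting \emph{decreases} the number of contracted edges (from $|W|-1$ to $|W|-p$), so it cannot be the source of a multiplicative $(1+1/d)$ overhead; the $(1+\epsilon)$ factor must come from \emph{adding} edges, and the paper's mechanism for that is the merge step described above. You will also want the paper's Marking Scheme~1 (one common neighbor in $Y$ for each $A\subseteq X$ of size $\le d$) to keep $G'$ connected and to carry out the substitution needed for the $\OPT(G',k)\le\OPT(G,k)$ direction; the conflict-exact marking does not supply those surrogates either.
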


\vspace{-0.65cm}
\paragraph{Overview of the proof of Theorem~\ref{thm:clique-lossy}.} 
Let us fix an input $(G,k)$ and a constant $\epsilon > 0$. 
Given a graph $G$, contracting edges of $G$ to get into a graph class $\cal F$ 
 is same as partitioning the vertex set $V(G)$ into connected sets, $W_1, W_2, \ldots ,W_\ell$, and then contracting each connected set to a vertex. These connected sets are called {\em witness sets}. A witness set $W_i$ is called {\em non-trivial}, if $|W_i|\geq 2$, and {\em trivial} otherwise.  

Observe that if a graph $G$ can be transformed into a clique by contracting edges in $F$, then $G$ can also be converted into a clique by deleting all the endpoints of edges in $F$.   
This observation implies that if $G$ is \emph{$k$-contractible} to a clique, then there exists an induced clique of size at least $|V(G)| - 2k$. Let $I$ be a set of vertices in $G$, which induces this large clique and let $C = V(G) \setminus I$. Observe that $C$ forms a vertex cover in the graph $\overline{G}$ (graph with vertex set $V(G)$ and those edges that are not present in $E(G)$). Using a factor $2$-approximation algorithm, we find a vertex cover $X$ of $\overline{G}$. Let $Y=V(\overline{G})-X$ be an independent set in $\overline{G}$. 
If $|X|>4k$, we immediately say No. Now, suppose that we have some solution and let 
 $W_1, W_2, \ldots ,W_{\ell}$ be those witness sets that are either non-trivial or contained in $X$. Now, let us say that a set $W_i$ is {\em nice} if it has at least one vertex outside $X$, and {\em small} if it contains less than 
 $\OO(1/\epsilon)$ vertices. A set that is not {\em small} is {\em large}. {Observe that there exists a 
 $(1+\epsilon)$-approximate solution where {\em the only sets} that are not nice are small.} Also, observe that 
 all nice sets are adjacent. 
 Now, we classify all subsets of $X$ of size at most $\OO(1/\epsilon)$ as {\em possible} and {\em impossible} small witness sets. Notice that if a set $A \subseteq X$ has more than $2k$ non-neighbors, then it can not possibly be a witness set, as one of these non-neighbors will be a trivial witness set. Now for every set, $A \subseteq X$ of size at most $\OO(1/\epsilon)$ mark all of its non-neighbors, but if there are more than $2k$, then mark $2k+1$ of them. Now, look at an unmarked vertex in $Y$, the only reason it could still be relevant if it is part of some $W_i$. So its job is $(a)$ connecting the vertices in $W_i$, or $(b)$ potentially being the vertex in $Y$ that is 
 making some $W_i$ nice, or $(c)$ it is a neighbor to {\em all} the small (not nice) subsets of $X$ in the solution.  
Now notice that any vertex in $Y$ that is unmarked does jobs $(b)$ and $(c)$ equally well. So we only need to care about connectivity. Look at some nice and small set $W_i$; we only need to preserve the neighborhoods of the vertices of 
$Y$ into $W_i$.  
For every subset of size $\OO(1/\epsilon)$, we keep one vertex in $Y$ that has that set in its neighborhood. Notice that we do not care that different $W_i$'s use different marked vertices for connectivity because merging two $W_i$'s is more profitable for us. Finally, we delete all unmarked vertices and obtain an $(1+\epsilon)$-approximate kernel of size roughly $k^{\OO(1/\epsilon)}$.
We argue that this kernelization algorithm is time efficient i.e. the running time is polynomial in the size of an input and the constant in the exponent is independent of $\epsilon$.
This completes the overview of the proof for Theorem~\ref{thm:clique-lossy}.
Next, we move to {\sc Split Contraction}.

{\sc Split Contraction} is known to be \WH~\cite{DBLP:conf/stacs/AgrawalLSZ17}.  We ask ourselves whether {\sc Split Contraction} is {\em completely \FPT-inapproximable} or admits an $\alpha$-\FPT-approximation algorithm, for some fixed constant $\alpha>0$.  
We obtain two results towards our goal.

\begin{theorem}\label{thm:split-lossy} 
For every $\epsilon > 0$, \textsc{Split Contraction} admits a factor $(2+\epsilon)$-\FPT-approximation algorithm. In fact, for any $\epsilon > 0$, \textsc{Split Contraction} admits a $(2 + \epsilon)$-approximate kernel with $\calO(k^{f(\epsilon)})$ vertices.
\end{theorem}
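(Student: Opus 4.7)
The plan is to construct a $(2+\epsilon)$-approximate polynomial kernel of size $\mathcal{O}(k^{f(\epsilon)})$; the $(2+\epsilon)$-\FPT-approximation algorithm then follows by running brute force on the kernel. The overall strategy mirrors the proof of Theorem~\ref{thm:clique-lossy}, with an approximate \emph{split modulator} playing the role of the approximate clique modulator used there.

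The first step is to compute an approximate split modulator. I would use the observation that if $G$ is $k$-contractible to a split graph via $X$, then deleting the at most $2k$ endpoints of edges in $X$ leaves a subgraph isomorphic to $G/X$ restricted to the trivial witness sets, which is itself split. Hence $G$ always has a split modulator of size at most $2k$. Because split graphs are precisely the $\{2K_2, C_4, C_5\}$-free graphs, a simple greedy ``find-and-delete'' routine that repeatedly removes all (at most five) vertices of a forbidden induced subgraph yields a constant-factor approximation for \textsc{Split Vertex Deletion}. Applying it produces $D \subseteq V(G)$ with $G - D$ split and $|D| = \mathcal{O}(k)$; if the routine exceeds this budget I would output a trivial No-instance.

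Next I would reduce outside the modulator by a marking scheme. Fix a split partition $(K, I)$ of $G - D$ and set $d = \Theta(1/\epsilon)$. For every subset $S \subseteq D$ of size at most $d$, I would mark up to $2k+1$ vertices of $K$ that are fully adjacent to $S$ and up to $2k+1$ vertices of $I$ that are fully non-adjacent to $S$, and then delete every unmarked vertex of $K \cup I$. Since $|D| = \mathcal{O}(k)$, the kernel has $\mathcal{O}(k^{d+1})$ vertices. The justification is that if $X^\star$ is any optimum solution with split partition $(\hat K, \hat I)$ of $G/X^\star$, and $V(G) = A \cup B$ is the pre-image partition, then the independent-side witness sets are exactly the connected components of $G[B]$, and since at most $2|X^\star|$ vertices sit in non-trivial witness sets, almost all of $K$ remains as singleton clique-side witness sets and almost all of $I$ as singleton independent-side witness sets. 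Only adjacency patterns between $D$ and small subsets of $K \cup I$ are relevant for reconstructing an almost-optimum, and the marking retains ample representatives of each such pattern.

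For the lifting step, given a solution $X'$ on the reduced instance I would extend it to a solution $X$ on $G$ by reinserting the deleted vertices of $K$ as singleton clique-side witness sets and the deleted vertices of $I$ as singleton independent-side witness sets. The marking guarantees that for every subset $S \subseteq D$ of size at most $d$ the kernel still contains enough representatives to substitute for any deleted vertex used by $X^\star$. The factor-$2$ loss arises because a non-trivial witness set of $X^\star$ that depends on a now-deleted vertex is emulated by contracting a vertex of $D$ into a marked substitute at a cost of up to two contractions per one used by $X^\star$; the additive $\epsilon$ comes from witness sets that contain more than $d$ vertices of $D$, of which there are at most $|X^\star|/d$, each handled by $\mathcal{O}(1)$ extra contractions for a total of $\epsilon \cdot |X^\star|$. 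The hard part will be making this lift rigorous: through a case analysis over the clique-side/independent-side placements of $D$-vertices and over small versus large witness sets, one has to show that the marked representatives indeed suffice to reproduce an almost-optimum solution and then charge the extra contractions back to edges of $X^\star$ so that no edge is charged more than $(2+\epsilon)$ times.
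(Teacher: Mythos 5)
Your high-level plan (approximate split modulator $D$, marking over small subsets of $D$, reconstruct solutions across the deletion) is the same outline the paper follows, but the marking scheme you describe is both incomplete and partly wrong, and that is where the proof actually lives. First, the paper's marking is \emph{two-phase}: it marks a set $Y'$ inside the independent side $I$ first, and then marks vertices in the clique side $K$ over subsets of $D\cup Y'$, not just subsets of $D$. This is essential because a clique-side witness set can intersect $D$, $K$, and $I$ simultaneously; replacing a deleted $K$-vertex requires a substitute in $K$ adjacent to the $Y'$-vertices of that witness set as well, and those are not captured by subsets of $D$ alone. This is also why the paper's bound is $\calO(k^{d^2+d+1})$ rather than the $\calO(k^{d+1})$ you state. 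Second, on the independent side you mark vertices that are \emph{fully non-adjacent} to small subsets of $D$, but the paper marks \emph{neighbors} (Marking Scheme~\ref{mrk-nbrs-in-I}) and \emph{common neighbors} (Marking Scheme~\ref{mrk-cmn-nbrs-in-I}) of such subsets: these are exactly what is needed to rule out the case where a reinserted $I$-vertex is adjacent to an independent-side singleton $\{s\}\subseteq D$, and to find connectivity substitutes when re-routing a witness set through $I$. Your direction of marking gives no handle on either. Third, you have no analogue of the non-neighbor marking in $K$ (Marking Scheme~\ref{mrk-nonnbrs-in-C}), which is the step that forces any clique-side witness set in $D\cup Y'$ that misses a deleted $K$-vertex to have size at least $d+1$, and that largeness is precisely what yields the $(1+\epsilon/2)$ factor.

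Beyond the marking, two structural ingredients are absent. You never invoke the normalization that all non-trivial witness sets can be assumed to lie on the clique side (Observation~\ref{obs:bags_in_clique}); the paper's case analysis relies on this throughout, and without it the independent side would also host non-trivial witness sets and the whole accounting breaks. You also conflate the two inequalities that an $\alpha$-approximate kernel requires. In the paper the factor $2$ appears in transporting the original optimum into a solution on the kernel (Lemma~\ref{lem-denominator-SC}, bounding $\OPT(G',k)\le 2\cdot\OPT(G,k)$, with pairs of edges replacing single edges when re-routing via marked substitutes), whereas the $(1+\epsilon/2)$ factor plus an additive $+1$ appear in the other direction, lifting a kernel solution back to $G$ (Lemma~\ref{lem-numerator-SC}); the product of the two, after a brute-force fallback that absorbs the additive $+1$ when $|F'|$ is small (Lemma~\ref{lemma:solution-lifting-split}), gives $(2+\epsilon)$. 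Your write-up attributes both factors to a single transformation of $X^\star$, leaves the additive term unaddressed, and mislocates the $\epsilon$ as coming from large witness sets in $D$ rather than from large clique-side witness sets inside $D\cup Y'$. These are genuine gaps, not just omitted details.
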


Given, Theorem~\ref{thm:split-lossy}, it is natural to ask whether \textsc{Split Contraction} admits a factor $(1+\epsilon)$-\FPT-approximation algorithm, for every $\epsilon > 0$. We show that this is not true and obtain the following hardness result. 

\begin{theorem} \label{thm:split-no-lossy} 
 Assuming {\sf Gap-ETH}, no \FPT\ time algorithm can approximate  {\sc Split Contraction} within a 
 factor of $\left(\frac{5}{4}-\delta \right)$, for any fixed constant $\delta>0$. 
\end{theorem}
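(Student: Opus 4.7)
The plan is to give a gap-preserving FPT reduction from \colordensecliqueperfect, which under \textsf{Gap-ETH} admits no constant-factor \FPT-approximation: for every fixed $\epsilon>0$ it is hard to distinguish whether a multicolored instance with $k$ color classes contains a clique of size $k$ or whether every choice of one vertex per class spans at most $\epsilon\binom{k}{2}$ edges of the underlying graph $H$. I would adapt the \WH-reduction of Agrawal et al.~\cite{DBLP:conf/stacs/AgrawalLSZ17} from (multicolored) \textsc{Clique} to \splitcontract so that each covered multicolored edge produces a fixed ``saving'' in contraction cost, and then tune the gadget weights so that the ratio between the worst-case completeness cost and the best-case soundness cost approaches $5/4$.

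Concretely, I would attach to a shared clique-side scaffold (i) a selection gadget per color class $V_i$ whose cheapest contraction encodes a single vertex $u_i \in V_i$ at cost $a$, and (ii) an edge-verification gadget per pair $(V_i,V_j)$ whose cheapest contraction costs $b$ if $u_i u_j \in E(H)$ and $b+c$ otherwise. Summing, the completeness cost is
\[
k' \;=\; a k + b\binom{k}{2},
\]
while, if every multicolored selection covers at most an $\epsilon$-fraction of pairs, the soundness lower bound is
\[
a k + b\binom{k}{2} + (1-\epsilon)\, c\binom{k}{2}.
\]
Taking $k$ large and $\epsilon$ small, I would pick small integers $a,b,c$ (with $b:c$ roughly $4:1$) so that
\[
\frac{a k + (b+c)\binom{k}{2}}{a k + b\binom{k}{2}} \;\longrightarrow\; \frac{5}{4}.
\]
A hypothetical \FPT-approximation for \splitcontract within factor $5/4-\delta$ would then decide the gap and contradict \textsf{Gap-ETH}.

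The main obstacle, where I expect the bulk of the work to go, is establishing soundness: showing that no global contraction pattern in $G$ can beat the sum of per-gadget lower bounds by ``sharing'' contractions across distinct selection or edge-verification gadgets. The approach is to fix any candidate edge set $F$, analyze an optimal split partition $(K, I)$ of $G/F$, and prove gadget-by-gadget a rigidity lemma: inside each gadget the relevant edges of $F$ must correspond to some consistent vertex choice $u_i$, and the rigid interface with the scaffold rules out cross-gadget cancellations, so the costs add up. Completeness is comparatively routine: a multicolored $k$-clique in $H$ produces an explicit edge set of size $k'$ whose contraction yields a split graph. Combining these ingredients gives the claimed $(5/4-\delta)$ \FPT-inapproximability under \textsf{Gap-ETH}.
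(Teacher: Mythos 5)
Your high-level direction --- a gap-preserving \FPT\ reduction from a densest-$k$-subgraph variant under \textsf{Gap-ETH} --- matches the paper's, but the tactical plan diverges substantially and leaves the crucial step unresolved. The paper does not build $k$ vertex-selection gadgets plus $\binom{k}{2}$ edge-verification gadgets with discrete per-gadget costs $a$, $b$, $b+c$. Instead, starting from \colordensecliqueperfect with $t=\binom{k}{2}$ \emph{edge}-color classes, it builds $t$ edge-selection gadgets $\edgegadget_1,\ldots,\edgegadget_t$, each anchored by a cap vertex $g_i$ and a special vertex $s_i$ that are forced (via pendant guards) onto the clique side and together cost exactly two contractions per gadget, plus a large scaffold clique $Z$ containing $\rho = \lceil \delta t / k\rceil$ copies of each original vertex. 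There is no per-pair verification gadget: selecting an edge $e=uv$ (by merging $w_e$ with $g_i$ and $s_i$) puts $w_e$ on the clique side while $w_e$ remains non-adjacent to the $\rho$ copies of $u$ and $v$ in $Z$, so all those copies must also be contracted away. The cleanup cost is therefore $\rho$ times the number of \emph{distinct} endpoints used across all selected edges, minimized to $\rho k$ exactly when the endpoints form a single $k$-set. It is not clear your independent edge-verification gadgets with costs $b$ or $b+c$ can be realized for \splitcontract: the split-graph property is global and does not decompose gadget-by-gadget the way your rigidity lemma assumes, and the paper's $Z$-based aggregate counting mechanism is exactly what replaces that assumption.

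The constant $5/4$ also arises differently from your $b{:}c = 4{:}1$ heuristic. The completeness bound is $k' = 2t + \rho k \approx (2+\delta)t$. Because $2t$ of those contractions are always \emph{forced} (an accounting-function argument shows each cap and each special vertex uniquely owns one contracted edge), a solution of cost at most $(\frac{5}{4}-\delta)k'$ has only about $t/2$ non-forced contractions, which can touch at most about $t$ vertices of $Z$, i.e.\ at most about $k/\delta$ original vertices, and yet those must still span at least $\frac{3\delta}{2}t$ colorful edges (Lemma~\ref{lem:rev}). This contradicts the inapproximability of \colordensecliqueperfect only after it is \emph{strengthened} to rule out output sets of size $\alpha k$ rather than exactly $k$ (Lemma~\ref{lemma:original-problem}), a strengthening your plan does not mention but would also need. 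In short, the soundness/rigidity work that you flag as ``where the bulk of the work will go'' is precisely where the paper's reduction earns its keep, via the $Z$-scaffold non-neighbor mechanism and the accounting-function claims about cap vertices; your proposal identifies the destination but supplies neither the gadgets nor the argument that gets there.
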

 
 \vspace{-0.65cm}
 \paragraph{Overview of the proofs of Theorems~\ref{thm:split-lossy} and~\ref{thm:split-no-lossy}.}  
Our proof for Theorem~\ref{thm:split-lossy} uses ideas for $(1+\epsilon)$-approximate kernel for {\sc Clique Contraction} (Theorem~\ref{thm:clique-lossy}) and thus we omit its overview. 
Towards the proof of Theorem~\ref{thm:split-no-lossy}, we give a gap preserving reduction from a variant of the {\sc Densest-$k$-Subgraph} problem (given a graph $G$ and an integer $k$, find a subset $S \subseteq V(G)$ of $k$ vertices that induces maximum number of edges).
Chalermsook et al.~\cite{DBLP:conf/focs/ChalermsookCKLM17} showed that, assuming {\sf Gap-ETH}\footnote{We refer the readers to~\cite{DBLP:conf/focs/ChalermsookCKLM17} for the definition of \textsf{Gap-ETH} and related terms.}, for any $g=o(1)$, there is no \FPT-time algorithm that, given an integer $k$ and any graph $G$ on $n$ vertices that contains at least one $k$-clique, 
 always output $S\subseteq V(G)$, of size $k$, such that ${\sf Den}(S)\geq k^{-g(k)}$. Here, ${\sf Den}(S)=|E(G[S])|/{|S|\choose 2}$. We need a strengthening of this result that says that assuming {\sf Gap-ETH}, for any $g=o(1)$ and for any constant $\alpha >1$, there is no \FPT-time algorithm that, given an integer $k$ and any graph $G$ on $n$ vertices that contains at least one $k$-clique, always outputs $S\subseteq V(G)$, of size $\alpha k$, such that ${\sf Den}(S)\geq k^{-g(k)}$. Starting from this result, we give a gap-preserving reduction to {\sc Split Contraction} that takes \FPT\ time and obtain Theorem~\ref{thm:split-no-lossy}. Given an instance $(G,k)$ of {\sc Densest-$k$-Subgraph}, we first use color coding to partition the edges into $t = \binom{k}{2}$ color classes such that every color class contains exactly one edge of a ``densest subgraph'' (or a clique). For each color class we make one {\em edge selection gadget}. Each edge selection gadget corresponding to the color class $j$ consists of an independent set $\texttt{ES}_j$ that contains a vertex corresponding to each edge in the color class$j,$ and a \emph{cap vertex} $g_j$ that is adjacent to every vertex in $\texttt{ES}_j$. 
 Next, we add a sufficiently large clique $Z$ of size $\rho \cdot |V(G)|$, where for every vertex $v\in V(G)$, we have $\rho$ vertices. Every vertex in an edge selection gadget is adjacent to every vertex of $Z$, except those corresponding to the endpoints of the edge the vertex represents. Finally, we add a clique $\texttt{SV}$ of size $t$ that has one vertex $s_j$ for each edge selection gadget. Make the vertex $s_j$ adjacent to every vertex in $\texttt{ES}_j$. We also add sufficient guards on vertices everywhere, so that ``unwanted'' contractions do not happen. The idea of the reduction is to contract edges in a way that the vertices in 
$\texttt{SV}$, $Z$, and $g_j$, $j\in \{1,\ldots,t\}$,  become a giant clique and other vertices become part of an independent set, resulting in a split graph. Towards this we first use $2t$ contractions so that $g_j$, $s_j$, and a vertex $a_j \in \texttt{ES}_j$ are contracted into one.
One way to ensure that they form a clique along with $Z$ is to contract each of them to a vertex in $Z$.
However, this will again require $t$ edge contractions. 
We set our budget in a way that this is not possible. Thus, what we need is to destroy the non-neighbors of $a_j$.  One way to do this again will be to match the vertices obtained after the first round of $2t$ contractions in a way that there are no non-adjacencies left. However, this will also cost $t/2$, and our budget does not allow this. The other option (which we take) is to take the union of all non-neighbors of $a_j$, say $N$, and contract each of them to one of the vertex in $Z\setminus N$. Observe that to minimize the contractions to get rid of non-neighbors of $a_j$, we would like to minimize $|N|$. This will happen when $N$ spans a large number of edges. Thus, it precisely captures the {\sc Densest-$k$-Subgraph} problem. The budget is chosen in a way that we get the desired gap-preserving reduction, which enables us to prove Theorem~\ref{thm:split-no-lossy}.

Our final result concerns {\sc Chordal Contraction}. Lokshtanov et al.~\cite{elimNew} showed that {\sc Chordal Contraction} is \WTH. We observe that the existing {\sf W[2]-hardness} reduction can be adapted to show the following theorem. 
 
 \begin{theorem} \label{thm:chordal-no-lossy} 
 Assuming \FPT $\neq$ {\sf W[1]}, no \FPT\ time algorithm can approximate  {\sc Chordal Contraction} within a factor of $F(k)$. Here, $F(k)$ is a function depending on $k$ alone. 
\end{theorem}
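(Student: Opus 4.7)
The plan is to observe that the existing \textsf{W[2]}-hardness reduction to \textsc{Chordal Contraction} of Lokshtanov et al.~\cite{elimNew} is in fact a gap-preserving FPT reduction, and then to compose it with the known FPT-inapproximability of \textsc{Dominating Set} (or \textsc{Set Cover}) under the assumption \FPT $\neq$ \textsf{W[1]}. The source problem whose inapproximability I will pipe through is \textsc{Dominating Set}, for which it is known that under \FPT $\neq$ \textsf{W[1]} there is no \FPT\ $F(k)$-approximation for any computable function $F$.

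First, I would recall the construction of~\cite{elimNew}: starting from an instance $(H,k)$ of a \textsc{Dominating Set}-style \textsf{W[2]}-hard problem, one builds a graph $G$ together with a contraction budget $k'$ that depends linearly on $k$, such that $(H,k)$ is a \yes-instance iff $G$ admits $k'$ contractions making it chordal. The construction is based on local gadgets: long induced cycles that can only be destroyed by contracting designated edges, together with selection gadgets that force each efficient contraction to encode the choice of a vertex into the dominating set.

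Second, I would verify that the correspondence is linear throughout its range rather than only at the yes/no threshold. Concretely, for every integer $t$, if the optimum dominating set of $H$ has size $\tau$ then the minimum number of edge contractions needed to turn $G$ into a chordal graph is $\Theta(\tau)$, with an absolute constant independent of the instance. The forward direction is immediate from the construction. For the reverse direction I would argue that any contraction solution $F$ of size $t$ can be post-processed gadget by gadget into a dominating set of size $O(t)$: each contracted edge is charged to the selection gadget it destroys (if any), and locality of the gadgets ensures that a single contraction cannot usefully serve more than a constant number of gadgets simultaneously. Composing this multiplicative-gap-preserving reduction with the FPT-inapproximability of \textsc{Dominating Set} under \FPT $\neq$ \textsf{W[1]} immediately yields an $F(k)$-inapproximability for \textsc{Chordal Contraction}, up to absorbing the constant from the reduction into the function $F$.

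The main obstacle will be the second step. The reduction of~\cite{elimNew} is engineered for a strict \yes/\no\ threshold at budget $k'$, and its original analysis does not, a priori, rule out cheap ``non-canonical'' contraction solutions once the allowed budget is inflated to $F(k)\cdot k$. One must revisit each gadget and argue that every efficient collection of contractions still decomposes, up to multiplicative constants, into a set of selections that can be read off as a dominating set in $H$. I expect this to reduce to a careful but essentially routine gadget-by-gadget case analysis, exploiting the fact that the long forbidden induced cycles in the construction are far enough apart that no single contraction can efficiently destroy more than a constant number of them at once, which is exactly what is needed to keep the hidden multiplicative constants independent of the inflated budget.
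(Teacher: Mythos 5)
Your proposal is correct and follows essentially the same route as the paper: reuse the \textsf{W[2]}-hardness reduction of Lokshtanov et al.\ and compose it with the FPT-inapproximability of \textsc{Set Cover}/\textsc{Dominating Set} due to Karthik et al.\ under $\FPT \ne \textsf{W[1]}$. The paper works from \textsc{Set Cover} (equivalent for this purpose to your \textsc{Dominating Set} choice), with contraction budget $k' = k$ identically, not just linearly.

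The one place where your plan is more pessimistic than the actual situation is the step you flag as the ``main obstacle,'' namely controlling non-canonical contraction solutions when the budget is inflated. The paper's analysis shows this is not a multi-case charging argument with hidden constants but a clean \emph{exact} transformation: given any contraction solution $F^{\circ}$, replace every edge of the form $ga_i$, $gb_i$, $a_ic_i$, $b_ic_i$ by some $gs_j$ with $u_i \in S_j$, and replace every edge of the form $s_jc_i$ by $gs_j$. Each such replacement is size-non-increasing and destroys at least as many induced four-cycles as the original edge (since the only four-cycles involving the replaced edge are $\{g,a_i,c_i,b_i\}$, $\{s_j,g,a_i,c_i\}$, $\{s_j,g,b_i,c_i\}$, all of which also get a chord from contracting $gs_j$). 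After this canonicalization all solution edges are of the form $gs_j$ and the sets $S_j$ so selected form a set cover of the same size. This yields a $1$-approximate polynomial parameter transformation (not merely constant-factor gap preservation), so there is no loss anywhere and no constants to absorb into $F(k)$. In short: your approach is right, but the ``careful gadget-by-gadget case analysis'' you anticipate collapses to a two-rule local edge replacement, because the gadgets share only the hub vertex $g$ and every four-cycle must be killed by a chord at $g$.
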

 
 \vspace{-0.65cm}
\paragraph{Overview of the proof of Theorem~\ref{thm:chordal-no-lossy}.} 
Towards proving Theorem~\ref{thm:chordal-no-lossy}, we give a $1$-approximate polynomial parameter transformation (1-appt)
from {\sc Set Cover} (given a universe $U$, a family of subsets $\cal S$, and an integer $k$, 
we shall decide the existence of a subfamily of size $k$ that contains all the elements of $U$) to {\sc Chordal Contraction}. That is, given any solution of size at most 
$\ell$ for {\sc Chordal Contraction}, we can transform this into a solution for {\sc Set Cover} of size at most $\ell$. Karthik et al.~\cite{DBLP:conf/stoc/SLM18} showed that assuming \FPT $\neq$ {\sf W[1]}, no \FPT\ time algorithm can approximate  {\sc Set Cover} within a factor of $F(k)$. Pipelining this result with our reduction we get Theorem~\ref{thm:chordal-no-lossy}. 

\vspace{-0.35cm}
\paragraph{Related Work.} 
To the best of our knowledge, Heggernes et al.~\cite{treePathContract} was the first to explicitly study {\sc ${\cal F}$-Contraction} from the viewpoint of Parameterized Complexity. They showed that in  case $\cal F$ is the family of trees, {\sc ${\cal F}$-Contraction} is \FPT\ but does not admit a polynomial kernel, while in case $\cal F$ is the family of paths, the corresponding problem admits a faster algorithm and an $\OO(k)$-vertex kernel. Golovach et al.~\cite{planarContract} proved that if $\cal F$ is the family of planar graphs, then {\sc ${\cal F}$-Contraction} is again \FPT. Moreover, Cai and Guo~\cite{CliqueContractUpp} showed that in case $\cal F$ is the family of cliques, {\sc ${\cal F}$-Contraction} is solvable in time $2^{\OO(k\log k)}\cdot n^{\OO(1)}$, while in case $\cal F$ is the family of chordal graphs, the problem is \WTH. 
Heggernes et al.~\cite{bipartiteContract} developed an \FPT\ algorithm for the case where $\cal F$ is the family of bipartite graphs. Later, a faster algorithm was proposed by Guillemot and Marx~\cite{bipartiteContract2}.

Pioneering work of Lokshtanov et al.~\cite{lossy} on the approximate kernel is being followed by a series of papers generalizing/improving results mentioned in this work and establishing lossy kernels for various other problems.
Lossy kernels for some variations of \textsc{Connected Vertex Cover}~\cite{eiben2017lossy,krithika2018revisiting}, {\sc Connected Feedback Vertex Set}~\cite{DBLP:conf/esa/Ramanujan19}, \textsc{Steiner Tree}~\cite{dvorak2018parameterized} and \textsc{Dominating Set}~\cite{eiben2018lossy, siebertz2017lossy} have been established (also see~\cite{manurangsi2018note, van20181+}).    
Krithika et al.~\cite{lossy-fst} were first to study graph contraction problems from the lenses of lossy kernelization. They proved that for any $\alpha > 1$, \textsc{Tree Contraction} admits an $\alpha$-lossy kernel with $\mathcal{O}(k^d)$ vertices, where $d = \lceil \alpha/(\alpha - 1) \rceil$. Agarwal et al.~\cite{agarwal2017parameterized} proved similar result for \textsc{$\mathcal{F}$-Contraction} problems where graph class $\mathcal{F}$ is defined in parametric way from set of trees. Eiben et al.~\cite{eiben2017lossy} obtained similar result for {\sc Connected $\cal H$-Hitting Set} problem. 

\vspace{-0.35cm}
\paragraph{Guide to the paper.} We start by giving the notations and preliminaries that we use throughout the paper in Section~\ref{sec:prelim}. This section is best used as a reference, rather than being read linearly. 
In Section~\ref{sec:lossy_clique} we give the $(1+\epsilon)$--approximate polynomial kernel for \textsc{Clique Contraction}. Section~\ref{sec:lossy_split} gives the $(2+\epsilon)$--approximate polynomial kernel for \textsc{Split Contraction}. The ideas here are  similar to those used in Section~\ref{sec:lossy_clique}, and thus an  eager reader  could skip further. 
In Section~\ref{hardness_splt},  we show that assuming {\sf Gap-ETH}, no \FPT\ time algorithm can approximate  {\sc Split Contraction} within a 
 factor of $\left(\frac{5}{4}-\delta \right)$, for any fixed constant $\delta>0$. Section~\ref{hardness_chordal} shows that,  
  assuming \FPT $\neq$ {\sf W[1]}, no \FPT\ time algorithm can approximate  {\sc Chordal Contraction} within a factor of $F(k)$. This is an adaptation of the existing {\sf W[2]-hardness} reduction  and may be skipped. Thus, our main technical results appear in Sections~\ref{sec:lossy_clique} and \ref{hardness_splt}. We conclude the paper with some interesting open problems in Section~\ref{section_conclusion}.

\section{Preliminaries}
\label{sec:prelim}

In this section, we give notations and definitions that we use throughout the paper. Unless specified, we will be using all general graph terminologies from the book of Diestel~\cite{diestel-book}. 

\subsection{Graph  Theoretic Definitions and Notations}
For an undirected graph $G$, sets $V(G)$ and $E(G)$ denote the set of vertices and edges, respectively. Two vertices $u, v$ in $V(G)$ are said to be \emph{adjacent} if there is an edge $uv$ in $E(G)$. The neighborhood of a vertex $v$, denoted by $N_G(v)$, is the set of vertices adjacent to $v$ in $G$. For subset $S$ of vertices, we define $N(S) =\bigcup_{v \in S} N(v)) \setminus S$. 
The subscript in the notation for the neighborhood is omitted if the graph under consideration is clear. For a set of edges $F$, set $V(F)$ denotes the endpoints of edges in $F$. For a subset $S$ of $V(G)$, we denote the graph obtained by deleting $S$ from $G$ by $G - S$ and the subgraph of $G$ induced on set $S$ by $G[S]$. For two subsets $S_1, S_2$ of $V(G)$, we say $S_1, S_2$ {\em are adjacent if there exists an edge} with one endpoint in $S_1$ and other in $S_2$. 

An edge $e$ in $G$ is a \emph{chord} of a cycle $C$ (resp. path $P$) if (i) both the endpoints of $e$ are in $C$ (resp. in $P$), and (ii) edge $e$ is not in $C$ (resp. not in $P$). An \emph{induced cycle} (resp. \emph{path}) is a cycle (resp. path) which has no chord. We denote induced cycle and path on $\ell$ vertices by $C_{\ell}$ and $P_{\ell}$, respectively.  
A \emph{complete graph} $G$ is an undirected graph in which for every pair of vertices $u,v\in V(G)$, there is an edge $uv$ in $E(G)$. As an immediate consequence of definition we get the following. 

\begin{lemma}
\label{lem:complete-characterization}
A connected graph $G$ is  complete if and only if $G$ does not contain an induced $P_3$.
\end{lemma}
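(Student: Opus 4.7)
The statement is a classical characterization, and my plan is to prove the two directions separately, with the reverse direction being the substantive one.

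For the forward direction, I would simply note that if $G$ is complete then any three distinct vertices $u, v, w \in V(G)$ are pairwise adjacent and hence induce a triangle $K_3$, which is not $P_3$. Therefore $G$ cannot contain an induced $P_3$. This is immediate from the definition of completeness.

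For the reverse direction, I plan to argue by contrapositive: assume $G$ is connected but not complete, and exhibit an induced $P_3$. Non-completeness gives two distinct vertices $u, v \in V(G)$ with $uv \notin E(G)$. Since $G$ is connected, there is a $u$-$v$ path in $G$; fix a \emph{shortest} such path $u = x_0, x_1, x_2, \ldots, x_\ell = v$. Because $u$ and $v$ are non-adjacent we have $\ell \geq 2$, so the vertices $x_0, x_1, x_2$ exist and are distinct. By construction $x_0 x_1, x_1 x_2 \in E(G)$. Moreover, $x_0 x_2 \notin E(G)$, for otherwise $x_0, x_2, x_3, \ldots, x_\ell$ would be a strictly shorter $u$-$v$ walk, contradicting the minimality of $\ell$. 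Hence $\{x_0, x_1, x_2\}$ induces a $P_3$ in $G$, completing the contrapositive.

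There is no real obstacle here beyond noticing that a shortest path automatically has no chord between its first and third vertex, which is the standard trick for producing induced $P_3$'s from non-adjacent connected pairs. No additional machinery from the paper is required.
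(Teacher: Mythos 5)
Your proof is correct. The paper actually omits a proof entirely, stating only that the lemma is an immediate consequence of the definition of a complete graph; the shortest-path argument you give for the contrapositive is the standard way to make that "immediate" step rigorous, and both directions are handled cleanly.
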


A \emph{clique} is a subset of vertices in the graph that induces a complete graph. A set $I \subseteq V(G)$ of pairwise non-adjacent vertices is called an {\em independent set}.
A graph $G$ is a \emph{split graph} if $V(G)$ can be partitioned into a clique and an independent set. For split graph $G$, partition $(X, Y)$ is \emph{split partition} if $X$ is a clique and $Y$ is an independent set. In this article, whenever we mention a split partition, we first mention the clique followed by the independent set. We will also use the following well-known characterization of split graphs. Let, $2K_2$ be a graph induced on four vertices, which contains exactly two edges and no isolated vertices. 

\begin{lemma}[\cite{golumbic2004algorithmic}]
\label{lem:split-characterization}
A graph $G$ is a split graph if and only if it does not contain $C_4, C_5$ or $2K_2$ as an induced subgraph.  
\end{lemma}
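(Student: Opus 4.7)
The plan is to prove both implications of the biconditional separately; this is a classical result of F\"oldes and Hammer, and its proof fits into two short case analyses.

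For the forward direction, I would fix a split partition $(K, I)$ of $V(G)$ with $K$ a clique and $I$ an independent set, and use the fact that for any induced subgraph $H$ of $G$ one has $|V(H) \cap K| \le \omega(H)$ and $|V(H) \cap I| \le \alpha(H)$. For $H = C_5$ we have $\omega(H) = \alpha(H) = 2$, so only $4 < 5$ vertices can be placed, a contradiction. For $H = C_4$ and $H = 2K_2$ the bound $\omega(H) + \alpha(H) = 4$ is tight, so I additionally exploit the edge structure: in $C_4$ each diagonal pair is a non-edge, yet every assignment of the four vertices to $(K, I)$ forces two adjacent vertices into $I$; in $2K_2$ the analogous case check puts two non-adjacent vertices into $K$. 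In each case we contradict that $(K,I)$ is a split partition.

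For the backward direction, I assume $G$ is $\{C_4, C_5, 2K_2\}$-free, take a \emph{maximum} clique $K$ of $G$, and show that $I := V(G) \setminus K$ is independent. Suppose instead that $uv$ is an edge with $u, v \in I$. Maximality of $K$ supplies non-neighbors $u', v' \in K$ of $u$ and $v$ respectively. I split on whether $u' = v'$. In that case, pick any $w \in K \setminus \{u'\}$, which exists since $|K| \ge 2$ (otherwise $\{u,v\}$ would be a larger clique than $K$), and analyse the four possible adjacency patterns of $w$ with $u$ and $v$; each produces either a forbidden induced subgraph on $\{u, v, u', w\}$ or a clique of size $|K|+1$. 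When $u' \ne v'$, a similar case split on the edges $u'v$, $v'u$, and $u'v'$ (the last of which is present because $K$ is a clique) yields an induced $2K_2$, $C_4$, or $C_5$ on four or five vertices drawn from $\{u, v, u', v'\}$, again a contradiction.

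The only real subtlety is organizing the backward case analysis so that every adjacency pattern on the small set $\{u, v, u', v', w\}$ yields one of the three forbidden subgraphs or contradicts the maximality of $K$. Beyond the careful choice of $K$ as a \emph{maximum} clique---which is what lets us rule out excess adjacencies from $u$ or $v$ into $K$---the argument uses no new ideas and proceeds mechanically.
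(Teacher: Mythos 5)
The paper does not prove this lemma; it cites it from Golumbic's book, so there is no in-paper argument to compare against. Your forward direction is sound (the counting $|V(H)\cap K|\le\omega(H)$, $|V(H)\cap I|\le\alpha(H)$ rules out $C_5$, and the tight cases $C_4$, $2K_2$ fall to the edge-structure check). The backward direction, however, has a genuine gap.

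The central claim of your backward direction --- that for \emph{any} maximum clique $K$ the complement $V(G)\setminus K$ is independent --- is false. Take $G = P_4$ with vertices $a,b,c,d$. This graph is $\{C_4,C_5,2K_2\}$-free and split, and $\{a,b\}$ is a maximum clique, yet $V(G)\setminus\{a,b\}=\{c,d\}$ contains the edge $cd$. Tracing your case analysis on this example exposes where it breaks: with $u=c$, $v=d$, $u'=a$, $v'=b$ (so $u'\neq v'$), the set $\{u,v,u',v'\}$ induces a $P_4$, which is neither a forbidden subgraph nor a witness to a larger clique. The same problem recurs in Case 1: if $w$ is adjacent to both $u$ and $v$, the four vertices $\{u,v,u',w\}$ induce a paw, again neither forbidden nor contradicting maximality of $K$ when $|K|\ge 3$. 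For instance, the graph on $\{k_1,k_2,k_3,u,v\}$ with $\{k_1,k_2,k_3\}$ a triangle, $uv,uk_2,vk_2,uk_3\in E$, is $\{C_4,C_5,2K_2\}$-free (indeed split, with clique $\{u,k_2,k_3\}$), but the maximum clique $\{k_1,k_2,k_3\}$ leaves an edge $uv$ outside, and every choice of $w$ lands in one of these non-contradictory patterns.

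The fix is the standard one in F\"oldes--Hammer: one cannot take an arbitrary maximum clique and hope the rest is independent. Instead, one selects $K$ subject to a secondary optimization (e.g.\ among all maximum cliques, choose one minimizing the number of edges in $G - K$, or choose a clique/independent-set pair $(K,I)$ maximizing $|K|+|I|$). Then, in the cases that currently yield only a $P_4$ or a paw, one swaps a vertex between $K$ and $V\setminus K$ to strictly improve the secondary objective, giving the contradiction. Without some such extremal choice of $K$, the argument cannot close.
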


A graph \(G\) is \emph{chordal} if every induced cycle in \(G\) is a triangle; equivalently, if every cycle of length at least four has a chord.  
A vertex subset $S \subseteq V(G)$ is said to \emph{cover} an edge $uv \in E(G)$ if $S \cap \{u,v\} \neq \emptyset$. A vertex subset $S \subseteq V(G)$ is called a \emph{vertex cover} in $G$ if it covers all the edges in $G$.

We start with the following observation, which is useful to find a large induced clique in the input graph.
The \emph{complement} of $G$, denoted by $\bar{G}$, is a graph whose vertex set is $V(G)$ and edge set is precisely those edges which are not present in $E(G)$.
Note that given a graph $G$, if $S$ is a set of vertices such that $G-S$ is a clique, then $S$ is a vertex cover in the complement graphs of $G$, denoted by $\bar{G}$, as $\bar{G}-S$ is edgeless. Using the well-known factor $2$-approximation algorithm for {\sc Vertex Cover}~\cite{bar1981linear}, we have following.

\begin{observation}[\cite{bar1981linear}]
 \label{obs:factor-2-approx}
 There is a factor $2$-approximation algorithm to compute a set of vertices whose deletion results in a complete graph.
\end{observation}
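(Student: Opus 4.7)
The plan is to reduce the problem directly to \textsc{Vertex Cover} via the complement graph and then invoke the classical factor-$2$ approximation algorithm of Bar-Yehuda and Even~\cite{bar1981linear} (or Gavril's matching-based algorithm). First I would construct $\bar{G}$ in polynomial time from $G$; this is trivially doable in $\mathcal{O}(|V(G)|^2)$ time. The key structural step, already hinted at in the text immediately preceding the statement, is the following equivalence: for any $S \subseteq V(G)$, the induced subgraph $G - S$ is a complete graph if and only if $\bar{G} - S$ has no edges, i.e.\ $S$ is a vertex cover of $\bar{G}$. This is immediate from the definition of the complement: an edge $uv$ exists in $\bar{G} - S$ precisely when $u, v \notin S$ and $uv \notin E(G)$, which witnesses a non-adjacency in $G - S$.

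Given this equivalence, I would define $\mathrm{opt}(G)$ as the minimum size of a set whose deletion leaves a clique in $G$, and $\tau(\bar{G})$ as the size of a minimum vertex cover in $\bar{G}$. The equivalence yields $\mathrm{opt}(G) = \tau(\bar{G})$. Running the known polynomial-time factor-$2$ approximation for \textsc{Vertex Cover} on $\bar{G}$ returns a set $X$ with $|X| \leq 2 \tau(\bar{G}) = 2\,\mathrm{opt}(G)$, and by the equivalence $G - X$ is a complete graph. Thus $X$ is the required factor-$2$ approximate solution.

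There is essentially no obstacle here: the observation is a one-line reduction, and the only thing to verify carefully is the biconditional between "$G - S$ is complete" and "$S$ is a vertex cover of $\bar{G}$," which follows immediately from the definition of $\bar{G}$. The output algorithm and its guarantee then follow verbatim from the cited \textsc{Vertex Cover} approximation, so no further calculation is required.
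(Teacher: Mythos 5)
Your proposal is correct and matches the paper's own justification: the paragraph immediately preceding the observation establishes exactly the same equivalence between sets whose deletion yields a clique in $G$ and vertex covers of $\bar{G}$, and then invokes the classical factor-$2$ approximation for \textsc{Vertex Cover}. Nothing further is needed.
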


Using, Lemma~\ref{lem:split-characterization} one can  obtain a simple factor $5$-approximation algorithm for deleting vertices to get a split graph. 

 \begin{observation}\label{obsn:approx-svd-solution} There is a factor $5$-approximation algorithm to compute a set of vertices whose deletion results in a split graph.
 \end{observation}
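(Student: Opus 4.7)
The plan is to exploit Lemma~\ref{lem:split-characterization}, which says that a graph is split if and only if it contains no induced $C_4$, $C_5$, or $2K_2$. Since each of these forbidden subgraphs has at most $5$ vertices, we are in the classical setting where a "hit every forbidden obstruction" strategy yields a constant-factor approximation equal to the maximum size of a forbidden pattern.

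The algorithm I would propose is the following standard greedy: initialize $S = \emptyset$; while $G - S$ contains an induced $C_4$, $C_5$, or $2K_2$, find such a subgraph $H$ in polynomial time (by brute-force enumeration of all $4$- and $5$-element subsets of the current vertex set, which takes $O(n^5)$ time), and add \emph{all} vertices of $H$ to $S$. When the loop terminates, $G - S$ is free of the three forbidden induced subgraphs, so by Lemma~\ref{lem:split-characterization} it is a split graph. Each iteration deletes at least $4$ vertices and adds at most $5$ to $S$, so the procedure runs in polynomial time overall.

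For the approximation ratio, let $S^{\star}$ be an optimum set whose deletion yields a split graph, and let $H_1, H_2, \ldots, H_t$ be the successive forbidden subgraphs found by the algorithm. These subgraphs are pairwise vertex-disjoint, because after round $i$ the vertices of $H_i$ are removed from further consideration. Since $G - S^{\star}$ is split, each $H_i$ must contain at least one vertex of $S^{\star}$ (otherwise $H_i$ would survive as an induced forbidden subgraph of $G - S^{\star}$). By the disjointness of the $H_i$'s, the vertices they contribute to $S^{\star}$ are distinct, giving $|S^{\star}| \ge t$. On the other hand $|S| \le 5t$, so $|S| \le 5 |S^{\star}|$, as required.

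I do not expect any real obstacle here: the argument is the textbook "local ratio / hitting-set" reduction enabled by the finite forbidden-subgraph characterization of split graphs, and the factor $5$ is simply the maximum number of vertices in a member of $\{C_4, C_5, 2K_2\}$. The only mildly delicate point is emphasizing that the obstructions the algorithm identifies are vertex-disjoint, so that the lower bound $|S^{\star}| \ge t$ is valid; this is immediate from the fact that we remove the hit vertices from the working graph before searching for the next obstruction.
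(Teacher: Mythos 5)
Your proof is correct and matches the approach the paper intends (the paper omits the proof and simply points to Lemma~\ref{lem:split-characterization}; the obvious argument based on it is precisely the disjoint-obstructions / hitting-set argument you give). No issues.
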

Recently, for every $\epsilon >0$, a factor $(2+\epsilon)$-approximation algorithm for deleting vertices to get a split graph has been obtained~\cite{LokshtanovMPPS20}. However, for our purposes Observation~\ref{obsn:approx-svd-solution} will suffice. 

\subsection{Graph Contraction}
The {\em contraction} of edge $e = uv$ in $G$ deletes vertices $u$ and $v$ from $G$, and adds a new vertex, which is made adjacent to vertices that were adjacent to either $u$ or $v$. Any parallel edges added in the process are deleted so that the graph remains simple. The resulting graph is denoted by $G/e$. Formally, for a given graph $G$ and edge $e = uv$, we define $G/e$ in the following way: $V(G/e) = (V(G) \cup \{w\}) \backslash \{u, v\}$ and $E(G/e) = \{xy \mid x,y \in V(G) \setminus \{u, v\}, xy \in E(G)\} \cup \{wx |\ x \in N_G(u) \cup N_G(v)\}$. 
For a subset of edges $F$ in $G$, graph $G/ F$ denotes the graph obtained from $G$ by repeatedly contracting edges in $F$ until no such edge remains.
We say that a graph $G$ is \emph{contractible} to a graph $H$ if there exists an onto function $\psi: V(G) \rightarrow V(H)$ such that the following properties hold.
\begin{itemize}
\setlength{\itemsep}{-2pt}
\item For any vertex $h$ in $V(H)$, graph $G[W(h)]$ is connected, where set $W(h) := \{v \in V(G) \mid \psi(v)= h\}$.
\item For any two vertices $h, h'$ in $V(H)$, edge $hh'$ is present in $H$ if and only if there exists an edge in $G$ with one endpoint in $W(h)$ and another in $W(h')$.
\end{itemize}
For a vertex $h$ in $H$, set $W(h)$ is called a \emph{witness set} associated with $h$. We define $H$-\emph{witness structure} of $G$, denoted by $\mathcal{W}$, as collection of all witness sets. Formally, $\mathcal{W}=\{W(h) \mid h \in V(H)\}$. Witness structure $\mathcal{W}$ is a partition of vertices in $G$, where each witness forms a connected set in $G$. Recall that if a \emph{witness set} contains more than one vertex, then we call it \emph{non-trivial} witness set, otherwise  a \emph{trivial} witness set. 

If graph $G$ has a $H$-witness structure, then graph $H$ can be obtained from $G$ by a series of edge contractions. For a fixed $H$-witness structure, let $F$ be the union of spanning trees of all witness sets. By convention, the spanning tree of a singleton set is an empty set. Thus, to obtain  $H$ from $G$, it is {\em sufficient} to contract edges in $F$. If such witness structure exists, then we say that graph $G$ is contractible to $H$. We say that graph $G$ is \emph{$k$-contractible} to $H$ if cardinality of $F$ is at most $k$. In other words, $H$ can be obtained from $G$ by at most $k$ edge contractions. Following observation is an immediate consequence of definitions.
\begin{observation}
 \label{obs:witness-structure-property} If graph $G$ is $k$-contractible to graph $H$, then the following statements are true.  
\begin{itemize}
\setlength{\itemsep}{-2pt}
\item For any witness set $W$ in a $H$-witness structure of $G$, the cardinality of $W$ is at most $k+1$. 
\item For a fixed $H$-witness structure, the number of vertices in $G$, which are contained in non-trivial witness sets is at most $2k$.
\end{itemize}
\end{observation}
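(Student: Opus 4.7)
The plan is to exploit the fact that $k$-contractibility means there exists a set $F$ of at most $k$ edges whose contraction yields $H$, and, as noted in the paragraph preceding the observation, one may take $F$ to be the union of spanning trees of the witness sets in a fixed $H$-witness structure $\mathcal{W}$ of $G$. A spanning tree of a witness set $W$ has exactly $|W|-1$ edges (and the empty tree of a trivial $W$ contributes $0$), so the essential inequality I would write down at the start is
\[
\sum_{W \in \mathcal{W}} (|W|-1) \;\leq\; |F| \;\leq\; k.
\]

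For the first bullet, I would simply note that each summand on the left is non-negative, so no single term can exceed the total; hence $|W|-1 \leq k$, i.e.\ $|W| \leq k+1$, for every witness set $W$.

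For the second bullet, I would restrict attention to the sub-collection $\mathcal{W}^\star \subseteq \mathcal{W}$ of non-trivial witness sets (those with $|W|\geq 2$). The number of vertices contained in non-trivial witness sets is exactly $\sum_{W \in \mathcal{W}^\star} |W|$. The elementary inequality $|W| \leq 2(|W|-1)$ holds precisely because $|W|\ge 2$, and summing it over $\mathcal{W}^\star$ together with the displayed bound above yields
\[
\sum_{W \in \mathcal{W}^\star} |W| \;\leq\; 2 \sum_{W \in \mathcal{W}^\star} (|W|-1) \;\leq\; 2 \sum_{W \in \mathcal{W}} (|W|-1) \;\leq\; 2k,
\]
which is the desired bound.

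There is really no obstacle here beyond keeping the bookkeeping straight: the key observation that $F$ can be taken as the disjoint union of spanning trees of witness sets is already established in the preceding discussion, so both items reduce to the two trivial arithmetic facts that $|W|-1 \le k$ for a single term in a non-negative sum bounded by $k$, and that $|W|\le 2(|W|-1)$ whenever $|W|\ge 2$.
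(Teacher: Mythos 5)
Your proof is correct. The paper states this observation without proof, calling it ``an immediate consequence of definitions,'' and your argument is exactly the bookkeeping the authors have in mind: take $F$ to be the union of spanning trees of the witness sets so that $\sum_{W}(|W|-1) \le |F| \le k$, read off the first bullet term-by-term, and for the second bullet use $|W| \le 2(|W|-1)$ on each non-trivial $W$ and sum.

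One minor point worth being explicit about, since the observation's phrasing is slightly loose: the bounds hold for a witness structure $\mathcal{W}$ whose associated edge set $F$ has $|F|\le k$ (equivalently, $\sum_{W\in\mathcal{W}}(|W|-1)\le k$). The quantifier ``for any witness set $W$ in a $H$-witness structure'' should be read relative to such a structure, not over all possible $H$-witness structures of $G$; your proof, which fixes $F$ first and derives everything from it, handles this correctly.
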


In the following two observations, we state that if a graph can be transformed into a clique or a split graph by contracting few edges, then it can also be converted into a clique or split graph by deleting few vertices.

\begin{observation}
 \label{obs:vertex-solution}
 If a graph $G$ is $k$-contractible to a clique, then $G$ can be converted into a clique by deleting at most $2k$ vertices.
\end{observation}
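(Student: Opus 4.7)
My plan is to exhibit the required deletion set explicitly from the witness structure. Since $G$ is $k$-contractible to a clique $K$, fix a $K$-witness structure $\calW$ of $G$, and let $S \subseteq V(G)$ be the union of all non-trivial witness sets in $\calW$. By the second item of Observation~\ref{obs:witness-structure-property}, we have $|S| \le 2k$, so it suffices to argue that $G - S$ is a clique.

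The key observation is that $G - S$ is precisely the disjoint union of the trivial witness sets of $\calW$. Pick any two distinct vertices $u, v \in V(G - S)$; each forms a singleton witness set, corresponding to two distinct vertices $h_u, h_v$ of $K$. Since $K$ is a clique, $h_u h_v \in E(K)$, and by the definition of a witness structure this forces an edge in $G$ between the witness set of $h_u$ and the witness set of $h_v$, which is exactly the edge $uv$. Hence every pair of vertices in $G - S$ is adjacent, so $G - S$ is a clique.

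I do not foresee a real obstacle here: the statement is essentially an immediate corollary of Observation~\ref{obs:witness-structure-property} together with the definition of a witness structure. The only thing to be careful about is that trivial witness sets are singleton vertices of $G$ (so ``deleting'' the non-trivial witness sets really leaves those singletons untouched), and that the existence of an edge between two distinct witness sets in $G$ is equivalent to adjacency of the corresponding vertices in $K$; both facts are built into the definition of contractibility recalled just above the observation.
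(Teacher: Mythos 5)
Your proof is correct and follows essentially the same argument as the paper: take the union of the non-trivial witness sets (bounded by $2k$ via Observation~\ref{obs:witness-structure-property}), and observe that the remaining singleton witness sets must be pairwise adjacent because the corresponding vertices form a clique in the contracted graph.
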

\begin{proof} Let $F$ be a set of edges of size at most $k$ such that $G/F$ is a clique. Let $\calW$ be a $G/F$-witness structure of $G$. Let $X$ be a set of all vertices which are contained in the non-trivial witness sets in $\calW$. 
 By Observation~\ref{obs:witness-structure-property}, size of $X$ is at most $2k$.
 Any two vertices in $V(G) \setminus X$ are adjacent to each other as these vertices form singleton sets, which are adjacent in $G/F$. Hence, $G$ can be converted into a clique by deleting vertices in $X$.
\end{proof}

\begin{observation}\label{obsn-splitvd-solution} If a graph $G$ is $k$-contractible to a split graph then $G$ can be converted into a split graph by deleting at most $2k$ vertices. 
\end{observation}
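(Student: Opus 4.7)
The plan is to follow exactly the blueprint of Observation~\ref{obs:vertex-solution}, replacing ``clique'' by ``split graph'' throughout. Concretely, I would fix an edge set $F \subseteq E(G)$ with $|F| \leq k$ such that $G/F$ is a split graph, take the associated $(G/F)$-witness structure $\calW$, and let $X$ be the union of all non-trivial witness sets of $\calW$. By Observation~\ref{obs:witness-structure-property}, $|X| \leq 2k$, so it suffices to verify that $G - X$ is itself a split graph.

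For this, I would fix a split partition $V(G/F) = K \sqcup I$ of the quotient, where $K$ is a clique and $I$ an independent set. Every vertex of $V(G) \setminus X$ is, by construction, a trivial (singleton) witness set in $\calW$, and so corresponds bijectively with its image under the quotient map $\psi$. Pulling back the partition gives a natural partition of $V(G) \setminus X$ into $K' := \{v \in V(G) \setminus X \mid \psi(v) \in K\}$ and $I' := \{v \in V(G) \setminus X \mid \psi(v) \in I\}$. The key observation is the standard one about contractions restricted to singletons: for $u, v \in V(G) \setminus X$, we have $uv \in E(G)$ if and only if $\psi(u)\psi(v) \in E(G/F)$, since $\{u\}$ and $\{v\}$ are the entire witness sets mapping to $\psi(u)$ and $\psi(v)$. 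Therefore $K'$ induces a clique and $I'$ an independent set in $G$, and $(K', I')$ is a split partition of $G - X$.

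I do not anticipate any real obstacle. The argument is essentially a carbon copy of the clique case (Observation~\ref{obs:vertex-solution}); the only ingredient specific to split graphs is that the defining property ``admits a partition into a clique and an independent set'' is automatically inherited by the subgraph induced on trivial witnesses, which is immediate from the bijective correspondence between such witnesses and their images in $G/F$. In particular, no appeal to the forbidden induced subgraph characterization (Lemma~\ref{lem:split-characterization}) is needed.
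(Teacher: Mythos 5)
Your proof is correct, but it takes a different route from the paper. The paper deletes $V(F)$ (the endpoints of $F$) and argues via the forbidden-induced-subgraph characterization of split graphs (Lemma~\ref{lem:split-characterization}): if $G - V(F)$ contained an induced $C_4$, $C_5$ or $2K_2$, that same induced subgraph would survive in $G/F$, since contraction does not touch those vertices---contradiction. You instead pull back the split partition of $G/F$ along the quotient map $\psi$, restricted to the trivial witness sets, using the fact that two singleton witness sets are adjacent in $G$ exactly when their images are adjacent in $G/F$. Both arguments are sound and of comparable length. Your version has the virtue of being a genuine ``carbon copy'' of Observation~\ref{obs:vertex-solution} (clique case) and generalizes readily to any class defined by a vertex partition with per-part edge constraints; the paper's version generalizes in a different direction, to any hereditary class with a finite (or even just vertex-deletion-closed) forbidden-induced-subgraph characterization, and requires no mention of witness structures at all. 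One small point worth making explicit in your write-up: you take $X$ to be the union of non-trivial witness sets rather than $V(F)$; these coincide once one assumes (harmlessly) that $F$ consists exactly of spanning trees of the witness sets, but as stated $V(F) \subseteq X$ and the bound $|X| \leq 2k$ is what Observation~\ref{obs:witness-structure-property} supplies, so your choice is the right one for your argument.
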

\begin{proof}
 For  graph $G$, let $F$ be the set of edges such that $G/F$ is a split graph and $|F| \leq k$. Let $V(F)$ be the collection of all endpoints of edges in $F$. Since cardinality of $F$ is at most $k$, $|V(F)|$ is at most $2k$. We argue that $G - V(F)$ is a split graph. For the sake of contradiction, assume that $G - V(F)$ is not a split graph. We know that a graph is split if and only if it does not contain induced $C_4, C_5$ or $2K_2$. This implies that there exists a set of vertices $V'$ in $V(G) \setminus V(F)$ such that $G[V']$ is either $C_4, C_5$ or $2K_2$. Since no edge in $F$ is incident on any vertices in $V'$, $G/F[V']$ is isomorphic to $G[V']$. Hence, there exists a $C_4, C_5$ or $2K_2$ in $G/F$ contradicting the fact that $G/F$ is a split graph. Hence, our assumption is wrong and $G - V(F)$ is a split graph. \end{proof}

Consider a connected graph $G$ which is $k$-contractible to the clique $K_{\ell}$. Let $\calW$ be a $K_{\ell}$-witness structure of $G$. The following observation gives a sufficient condition for obtaining a witness structure of 
an induced subgraph of $G$ from  $\calW$. 
%

\begin{observation} \label{obs:merging-witness-sets} Let $\calW$ be a clique witness structure of  $G$. If there exists two different witness sets $W(t_1), W(t_2)$ in $\calW$ and a vertex $v$ in $W(t_1)$ such that the set $W(t) = (W(t_1) \cup W(t_2)) \setminus \{v\}$ is a connected set in $G - \{v\}$, then $\calW'$ is a clique witness structure of $G - \{v\}$, where $\calW'$ is obtained from $\calW$ by removing $W(t_1), W(t_2)$ and adding $W(t)$.
\end{observation}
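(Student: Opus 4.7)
The plan is to verify the three defining properties of a clique witness structure for $\calW'$ in $G - \{v\}$: that $\calW'$ partitions $V(G - \{v\})$, that each set in $\calW'$ induces a connected subgraph, and that every pair of sets in $\calW'$ is adjacent in $G - \{v\}$ (so that contracting them yields a complete graph).

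First I would check that $\calW'$ is a partition of $V(G - \{v\})$. Since $\calW$ partitions $V(G)$ and $\calW'$ is obtained by replacing $W(t_1)$ and $W(t_2)$ with $W(t) = (W(t_1) \cup W(t_2)) \setminus \{v\}$, while leaving all other witness sets untouched, the union of sets in $\calW'$ is exactly $V(G) \setminus \{v\} = V(G - \{v\})$, and pairwise disjointness follows from that of $\calW$.

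Next, I would verify connectivity of every set in $\calW'$ in the graph $G - \{v\}$. Every witness set $W \in \calW'$ other than $W(t)$ is already in $\calW$ and does not contain $v$; hence $G[W] = (G - \{v\})[W]$ is connected by the hypothesis on $\calW$. For the new set $W(t)$, connectivity in $G - \{v\}$ is exactly the assumption made in the statement, so there is nothing more to verify.

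Finally, and this is the main step that actually uses that $\calW$ was a clique witness structure, I would show pairwise adjacency of the sets of $\calW'$. For any two distinct witness sets $W(a), W(b) \in \calW'$ with $\{a,b\} \cap \{t\} = \emptyset$, neither set contains $v$, so their adjacency in $G$ is preserved in $G - \{v\}$. The only remaining case is adjacency between $W(t)$ and some other $W(a) \in \calW'$. Since $\calW$ witnesses a clique, $W(a)$ was adjacent to $W(t_2)$ in $G$, and since $v \notin W(a) \cup W(t_2)$, this adjacency is witnessed by an edge that survives in $G - \{v\}$ and has one endpoint in $W(a)$ and one in $W(t_2) \subseteq W(t)$. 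Hence $W(a)$ and $W(t)$ are adjacent in $G - \{v\}$, concluding that $\calW'$ is a clique witness structure of $G - \{v\}$.

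The argument has no real obstacle; the only subtle point is the last step, where one must be careful to use the $W(t_2)$-side (rather than the $W(t_1)$-side) to exhibit adjacency to $W(a)$, because an adjacency between $W(a)$ and $W(t_1)$ could in principle have been witnessed solely by edges incident to $v$, which disappear when $v$ is removed.
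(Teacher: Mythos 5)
Your proposal is correct and follows essentially the same approach as the paper's proof: verify that $\calW'$ partitions $V(G)\setminus\{v\}$, check connectivity of each set (immediate for unchanged sets, hypothesis for $W(t)$), and establish pairwise adjacency by routing the $W(t)$-to-$W(a)$ case through $W(t_2)$, which avoids $v$. You even explicitly flag the one subtle point the paper relies on — that one must use $W(t_2)$ rather than $W(t_1)$ to witness adjacency because edges from $W(t_1)$ to $W(a)$ could all be incident to $v$.
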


\begin{proof} Let $G' = G - \{v\}$. Note that $\calW'$ is a partition of vertices in $G'$. Any set in $\calW' \setminus \{W(t)\}$ is a witness set in $\calW$ and does not contain $v$. Hence, these sets are connected in $G'$. Since $G'[W(t)]$ is also connected, all the witness sets in $\calW'$ are connected in $G'$.

 Consider any two witness sets $W(t'), W(t'')$ in $\calW'$. If none of these two is equal to $W(t)$ then both of these sets are present in $\calW$. Since none of these witness sets contains vertex $v$, they are adjacent to each other in $G'$. Now, consider a case when one of them, say $W(t'')$, is equal to $W(t)$. As witness sets $W(t')$ and $W(t_2)$ are present in $\calW$, there exists an edge with one endpoint in $W(t')$ and another in $W(t_2)$. The same edge is present in graph $G'$ as it is not incident on $v$. Since $W(t_2)$ is subset of $W(t)$, sets $W(t')$ and $W(t)$ are adjacent in $G'$. Hence any two witness sets in $\calW'$ are adjacent to each other. This implies that $\calW'$ is a clique witness structure of graph $G - \{v\}$. 
\end{proof}

In the case of \textsc{Split Contraction}, the following observation guarantees the existence of witness structure with a particular property.

\begin{observation}\label{obs:bags_in_clique} For a connected graph $G$, let $F$ be a set of edges such that $G/F$ is a split graph. Then, there exists a set of edges $F'$ which satisfy the following properties: $(i)$ $G/F'$ is a split graph. $(ii)$ The number of edges in $F'$ is at most $|F|$. $(iii)$ There exists a split partition of $G/F'$ such that all vertices in $G/F'$ which correspond to a non-trivial witness set in $G/F'$-witness structure of $G$ are in clique side. 
\end{observation}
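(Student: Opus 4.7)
The plan is an iterative modification of $F$ that preserves $|F|$ and strictly decreases the number of non-trivial witness sets lying on the independent-set side of some split partition. Start from $F$ and any split partition $(C, I)$ of $G/F$. If no non-trivial witness set maps into $I$, take $F' = F$. Otherwise pick $v \in I$ with $|W(v)| \geq 2$, and let $\Omega := \bigcup_{u \in C} W(u)$. Since $(C, I)$ is a split partition, the sets $\{W(u)\}_{u \in C}$ are pairwise adjacent and individually connected, so $\Omega$ is connected in $G$; and since $G$ is connected (dismissing the trivial case $V(G) = W(v)$, where picking the partition $(\{v\}, \emptyset)$ already works), $W(v)$ has at least one $G$-edge to $\Omega$, and these are its only external $G$-edges (because $v \in I$).

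The first key step is to choose a vertex $y \in W(v)$ that is a non-cut vertex of $G[W(v) \cup \Omega]$. Such a $y$ exists: since $G[W(v)]$ is connected with at least two vertices it has at least two non-cut vertices of its own, and if $R \subseteq W(v)$ denotes the vertices with an external edge to $\Omega$, we may always select a non-cut vertex of $G[W(v)]$ that lies outside $R$ (when $|R|=1$, at least one of the $\geq 2$ non-cut vertices of $G[W(v)]$ avoids the unique $r \in R$; when $|R| \geq 2$, any non-cut vertex of $G[W(v)]$ works). For such $y$, the graph $G[(W(v)\setminus\{y\}) \cup \Omega]$ is connected. Fix a spanning tree $T$ of it rooted at some vertex of $\Omega$, and for each $y' \in W(v)\setminus\{y\}$ let $u_{y'} \in C$ be the index such that the path from $y'$ to the root in $T$ first enters $\Omega$ through $W(u_{y'})$. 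The set $W'(u) := W(u) \cup \{y' : u_{y'} = u\}$ is then connected in $G$, because for any $y' \in W'(u)\setminus W(u)$ the entire $T$-path from $y'$ to $W(u)$ stays inside $W'(u)$.

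Now redefine the witness structure to be $\{W'(u)\}_{u\in C} \cup \{\{y\}\} \cup \{W(u')\}_{u' \in I\setminus\{v\}}$ and let $F'$ be the union of spanning trees of these sets. A direct count shows $|F'| = |F|$: we drop the $|W(v)|-1$ tree edges of $W(v)$ but gain exactly $|W(v)|-1$ new tree edges, one per vertex of $W(v)\setminus\{y\}$ absorbed into some $W'(u)$. Moreover, $(C^*, I^*) := (\{W'(u)\}_{u\in C},\ \{\{y\}\}\cup\{W(u')\}_{u' \in I\setminus\{v\}})$ is a split partition of $G/F'$: the clique side is pairwise adjacent because the original $W(u)$'s already were, and the independent side is still independent because $y$ and each $W(u')$ with $u' \in I\setminus\{v\}$ have all their external $G$-neighbors inside $\Omega$, hence inside $C^*$. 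The only non-trivial witness sets of the new structure lying in $I^*$ are the unchanged $W(u')$ with $u' \in I\setminus\{v\}$, so the count of such ``bad'' witness sets has strictly decreased.

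Iterating this procedure eliminates every non-trivial witness set from the independent-set side in at most $|V(G/F)|$ steps, while maintaining the invariants $|F'| \leq |F|$ and ``$G/F'$ is split''. The main subtlety is the first step---guaranteeing a non-cut vertex $y \in W(v)$ of $G[W(v)\cup\Omega]$---since this is exactly what lets us cleanly transfer the vertices of $W(v)\setminus\{y\}$ into the clique-side witness sets without incurring extra contractions; the remainder is edge-counting and a check of the induced adjacencies in the new quotient.
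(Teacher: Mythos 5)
Your proof is correct, and it follows the same high-level strategy as the paper: iterate over non-trivial witness sets on the independent side, shrink each to a singleton while absorbing the remaining vertices into the clique side, and argue that $|F|$ does not increase per step. The implementations differ, though, in a way worth noting. The paper picks a single clique-side witness set $W_b$ adjacent to $W_a$, takes a connecting edge $u_au_b$, roots a spanning tree of $G[W_a]$ at $u_a$, removes a non-root leaf $v_1$, and absorbs all of $W_a\setminus\{v_1\}$ into $W_b$ at once; connectivity of the merged set is guaranteed by the single edge $u_au_b$, and the vertex left behind only has to be a non-root leaf of a spanning tree of $W_a$. You instead leave behind a vertex $y$ that is a non-cut vertex of the larger graph $G[W(v)\cup\Omega]$ (hence the extra care to avoid $R$ when $|R|=1$), build a spanning tree of $G[(W(v)\setminus\{y\})\cup\Omega]$, and distribute $W(v)\setminus\{y\}$ across potentially many clique-side bags according to where their tree paths first enter $\Omega$. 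Both are valid; your version requires the stronger non-cut-vertex condition and more bookkeeping precisely because you route through multiple $W'(u)$'s, whereas the paper's single-merger argument only needs local connectivity inside $W_a\setminus\{v_1\}$ plus one crossing edge. Your edge count ``$|F'|=|F|$'' implicitly assumes $F$ restricted to each witness set is a spanning tree, which can be arranged without loss of generality; stating this would tidy the argument, but it is not a gap.
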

\begin{proof}
 Let $(C,\;I)$ be a split partition of vertices of $G/F$ such that $C$ is a clique and $I$ is an independent set. If all the vertices corresponding to non-trivial witness sets are in $C$, then the observation is true. Consider a vertex $a$ in $I$ which corresponds to a non-trivial witness set $W_a$. Since $G$ is connected, $G/F$ is a connected split graph. This implies that there exists a vertex, say $b$, in $C$ which is adjacent to $a$ in $G/F$. We denote witness set corresponding to $b$ by $W_b$. We construct a new witness structure by shifting all but one vertices in $W_a$ to $W_b$. Since $ab$ is an edge in $G/F$, there exists an edge in $G$ with one endpoint in $W_a$ and another in $W_b$. Let that edge be $u_au_b$ with vertices $u_a$ and $u_b$ contained in sets $W_a$ and $W_b$, respectively. Consider a spanning tree $T$ of graph $G[W_a]$ which is rooted at $u_a$. We can replace edges in $F$ whose both endpoints are in $V(W_a)$ with $E(T)$ to obtain another set of edges $F^*$ such that $G/F^*$ is a split graph. Formally, $F^* = (F \cup E(T)) \setminus (E(G[W_a]) \cap F)$. Note that the number of edges in $F^*$ and $F$ are same. Let $v_1$ be a leaf vertex in $T$ and $v_2$ be its unique neighbor. Consider $F' = (F^* \cup \{u_au_b\}) \setminus \{v_1v_2\}$. Since edge $v_1v_2$ is in $F^*$ and $u_au_b$ is not in $F^*$, $|F'| = |F^*|$. We now argue that $G/F'$ is also a split graph.
 Let $\calW^{\prime}$ be the $G/F'$-witness structure of $G$. Note that $\calW^{\prime}$ can be obtained from $G/F^*$-witness structure $\calW^*$ of $G$ by replacing $W_a$ by $\{v_1\}$ and $W_b$ by $W_b \cup (W_a \setminus \{v_1\})$. Since all other witness set remains unchanged any witness set which was adjacent to $W_b$ is also adjacent to $W_b \cup (W_a \setminus \{v_1\})$. Similarly, any witness set which was not adjacent to  $W_a$ is not adjacent to $\{v_2\}$. In other words, this operation of shifting edges did not remove any vertex from the neighborhood of $b$ (which is in $C)$ nor it added any vertex in the neighborhood of $a$ (which is in $I$). Hence, $G/F'$ is also a split graph with $(C, I)$ as one of its split partition. Note that there exists a split partition of $G/F'$ such that the number of vertices in the independent side corresponding to non-trivial witness set is one less than the number of vertices in $I$ which corresponds to non-trivial witness sets. Hence, by repeating this process at most $|V(G)|$ times, we get a set of edges that satisfy three properties mentioned in the observation.   
\end{proof}

\subsection{Parameterized Complexity and Lossy Kernelization}
An important notion in parameterized complexity is \emph{kernelization}, which captures the efficiency of data reduction techniques. A parameterized problem $\rm \Pi$ admits a kernel of size $g(k)$ (or $g(k)$-kernel) if there is a polynomial time algorithm (called {\em kernelization algorithm}) which takes as  input $(I,k)$, and returns an instance $(I',k')$ of $\Pi$ such that: $(i)$ $(I, k)$ is a \yes-instance if and only if $(I', k')$ is a \yes-instance; and $(ii)$ $|I'| + k' \leq g(k)$, where $g(\cdot)$ is a computable function whose value depends only on $k$. Depending on whether the function $g(\cdot)$ is \emph{linear, polynomial} or \emph{exponential}, the problem is said to admit a \emph{linear, polynomial} or \emph{exponential kernel}, respectively. We refer to the corresponding chapters in the books~\cite{fomin2019kernelization,saurabh-book,DF-new,flumgrohe,niedermeier2006} for a detailed introduction to the field of kernelization.

In lossy kernelization, we work with the optimization analog of parameterized problem. Along with an instance and a parameter, an optimization analog of the problem also has a string called \emph{solution}. We start with the definition of a {\em parameterized optimization problem}. It is the parameterized analog of an optimization 
problem used in the theory of approximation algorithms.

\begin{definition}[Parameterized Optimization Problem] A parameterized optimization problem is a computable function $\Pi : \Sigma^* \times \mathbb{N} \times \Sigma^* \mapsto \mathbb{R} \cup \{\pm \infty\}$. The instances of $\Pi$ are pairs $(I,k) \in \Sigma^* \times \mathbb{N}$ and a solution to $(I,k)$ which is simply a string $S \in \Sigma^*$ such that $|S| \leq |I|+k$. 
\end{definition}

The {\em value} of a solution $S$ is $\Pi(I,k,S)$. In this paper, all optimization problems are minimization problems. Therefore, we present the rest of the section only with respect to parameterized \emph{minimization} problem. 
The {\em optimum value} of $(I,k)$ is defined as: 
$$\textsc{OPT}_{\Pi}(I, k)= \min_{S \in \Sigma^*,\, |S| \leq |I|+k} \Pi(I,k,S),$$ 
and an {\em optimum solution} for $(I,k)$ is a solution $S$ such that $\Pi(I,k,S)=\textsc{OPT}_{\Pi}(I, k)$. For a constant $c > 1$, $S$ is \emph{$c$-factor approximate} solution for $(I,k)$ if $\frac{\Pi(I, k, S)}{OPT_{\Pi}(I, k)} \le c$. 
We omit the subscript $\Pi$ in the notation for optimum value if the problem under consideration is clear from the context.

For some parameterized optimization problems we are unable to obtain \FPT\ algorithms, and we are also unable to find satisfactory polynomial time approximation algorithms. In this case one might aim for \FPT\-approximation algorithms, algorithms that run in time $f(k)n^c$ and provide good approximate solutions to the instance. 

\begin{definition}\label{def:fptAppx}
Let $\alpha \geq 1$ be constant. A fixed parameter tractable $\alpha$-approximation algorithm for a parameterized optimization problem $\Pi$ is an algorithm that takes as input an instance $(I,k)$, runs in time $f(k)|I|^{\OO(1)}$, and outputs a solution $S$ such that $\Pi(I,k,S) \leq \alpha \cdot \mbox{{\sc OPT}}(I,k)$ if $\Pi$ is a minimization problem, and $\alpha \cdot \Pi(I,k,S) \geq  \mbox{{\sc OPT}}(I,k)$ if $\Pi$ is a maximization problem.
\end{definition}
Note that Definition~\ref{def:fptAppx} only defines constant factor FPT-approximation algorithms. The definition can in a natural way be extended to approximation algorithms whose approximation ratio depends on the parameter $k$, on the instance $I$, or on both. Next, we define an {\em $\alpha$-approximate polynomial-time preprocessing algorithm} for a parameterized minimization problem $\Pi$ as follows.

\begin{definition}[$\alpha$-Approximate Polynomial-time Preprocessing Algorithm] \label{def:lossy-kernel} 

Let $\alpha \ge 1$ be a real number and $\Pi$ be a parameterized minimization problem. An $\alpha$-approximate polynomial-time preprocessing algorithm is defined as a pair of polynomial-time algorithms, called the {\em reduction algorithm} and the {\em solution lifting algorithm}, that satisfy the following properties.
\begin{itemize}
\setlength{\itemsep}{-2pt}
\item Given an instance $(I,k)$ of $\Pi$, the reduction algorithm computes an instance $(I',k')$ of $\Pi$. 
\item Given instances $(I,k)$ and $(I',k')$ of $\Pi$, and a solution $S'$ to $(I',k')$, the solution lifting algorithm computes a solution $S$ to $(I,k)$ such that $\frac{\Pi(I,k,S)}{\textsc{OPT}(I,k)} \leq \alpha \cdot \frac{\Pi(I',k',S')}{\textsc{OPT}(I',k')}$.
\end{itemize}
\end{definition}

We sometimes refer $\alpha$-approximate polynomial-time preprocessing algorithm kernel as \emph{$\alpha$-lossy rule} or \emph{$\alpha$-reduction rule}. 

\section{Lossy Kernel for \textsc{Clique Contraction}}
\label{sec:lossy_clique}

In this section, we present a lossy kernel for \textsc{Clique Contraction}. We first define a natural optimization version of the problem.
$$\textsc{ClC}(G, k, F) = \left\{ \begin{array}{rl} \min\{|F|, k + 1\} &\mbox{\text{ if } $G/F$ \text{ is a clique }} \\ \infty &\mbox{ otherwise} \end{array} \right.$$

If the number of vertices in the input graph is at most $k+3$, then we can return the same instance as a kernel for the given problem. Further, we assume that the input graph is connected; otherwise, it can not be edited into a clique by edge contraction only.  
Thus, we only consider {\em connected graphs with at least $k + 3$ vertices}. By the definition of optimization problem, for any set of edges $F$, if $G/F$ is a clique, then the maximum value of $\textsc{ClC}(G, k, F)$ is $k + 1$. Hence, any spanning tree of $G$ is a solution of cost $k + 1$. We call it a \emph{trivial solution} for the given instance. Consider an instance $(P_4, 1)$, where $P_4$ is a path on four vertices. One needs to contract at least two edges to convert $P_4$ into a clique. We call $(P_{4}, 1)$  a \emph{trivial \textsc{No}-instance} for this problem. Finally, we assume that we are given an $\epsilon>0$.

We start with a reduction rule, which says that if the minimum number of vertices that need to be deleted from an input graph to obtain a clique is large, then we can return a trivial instance as a lossy kernel.

\begin{reduction rule}
 \label{rr:large-cvd}
 For a given instance $(G, k)$, apply the algorithm mentioned in Observation~\ref{obs:factor-2-approx} to find a set $X$ such that $G - X$ is a clique. If the size of $X$ is greater than $4k$, then return $(P_4, 1)$.  
\end{reduction rule}

\begin{lemma} \label{lemma:rr-large-cvd-safe}
 Reduction Rule~\ref{rr:large-cvd} is a $1$-reduction rule.
\end{lemma}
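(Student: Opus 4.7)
\medskip
\noindent\emph{Proof plan.} The plan is to verify the two parts of Definition~\ref{def:lossy-kernel} with $\alpha = 1$. The reduction algorithm is the one given in the statement of the rule: if $|X| \le 4k$ it returns $(G, k)$ unchanged, and if $|X| > 4k$ it returns $(P_4, 1)$. It thus remains to describe the solution lifting algorithm and to check the ratio inequality. In the case $|X| \le 4k$ the instance is unchanged, the solution lifting algorithm can be taken to be the identity, and the inequality holds trivially; the meaningful case is $|X| > 4k$.

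The crux I plan to establish is that, whenever $|X| > 4k$, the instance $(G, k)$ is ``hopeless'' in the sense that $\textsc{OPT}(G, k) = k+1$. Suppose for contradiction that there is an edge set $F$ of size at most $k$ with $G/F$ a clique. Then Observation~\ref{obs:vertex-solution} supplies a vertex set $S$ of size at most $2k$ with $G - S$ complete. However, Observation~\ref{obs:factor-2-approx} guarantees that the $X$ produced by the $2$-approximation satisfies $|X| \le 2|S| \le 4k$, contradicting the hypothesis $|X| > 4k$. Hence no such $F$ exists, and by the definition of $\textsc{ClC}$ the optimum value is $k+1$, witnessed for instance by any spanning tree of $G$.

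The solution lifting algorithm will then simply ignore the given solution $S'$ of $(P_4, 1)$ and return an arbitrary spanning tree $S$ of $G$. This yields $\textsc{ClC}(G, k, S) = k+1 = \textsc{OPT}(G, k)$, so the left-hand ratio in Definition~\ref{def:lossy-kernel} equals $1$. On the other side, $\textsc{OPT}(P_4, 1) = 2$ (a single contraction leaves $P_3$, which is not a clique), and every valid solution $S'$ satisfies $\textsc{ClC}(P_4, 1, S') \ge 2 = \textsc{OPT}(P_4, 1)$, so the right-hand ratio is at least $1$. This gives the required inequality with $\alpha = 1$.

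I do not foresee any real obstacle here: the only substantive step is the short contradiction argument linking Observations~\ref{obs:factor-2-approx} and~\ref{obs:vertex-solution} to pin down $\textsc{OPT}(G, k)$, and the remainder is just bookkeeping against the definition of an $\alpha$-approximate preprocessing algorithm.
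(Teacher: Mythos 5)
Your proposal is correct and follows essentially the same argument as the paper's own proof: use Observation~\ref{obs:vertex-solution} together with the factor-$2$ guarantee of Observation~\ref{obs:factor-2-approx} to conclude that whenever $|X|>4k$ we must have $\OPT(G,k)=k+1$, have the solution lifting algorithm output a spanning tree, and then verify the ratio inequality against $\OPT(P_4,1)=2$. The only cosmetic difference is that you phrase the key step as a contradiction and bound the right-hand ratio below by $1$, whereas the paper computes both ratios as exactly equal; the substance is identical.
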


\begin{proof} Let $(G, k)$ be an instance of \textsc{Clique Contraction} such that the Reduction Rule~\ref{rr:large-cvd} returns $(P_4, 1)$ when applied on it. The solution lifting algorithm returns a spanning tree $F$ of $G$.  Note that for a set of edges $F'$, if $P_4/F'$ is a clique then $F'$ contains at least two edges. This implies $\ClC(P_4, 1, F') = 2$ and $\OPT(P_4, 1) = 2$. 

 Since a factor $2$-approximation algorithm returned a set of size strictly more than $4k$, for any set $X'$ of size at most $2k$, $G - X'$ is not a clique. But by Observation~\ref{obs:vertex-solution}, if $G$ is $k$-contractible to a clique then $G$ can be edited into a clique by deleting at most $2k$ vertices. Hence, for any set of edges $F^*$ if $G/F^*$ is a clique, then the size of $F^*$ is at least $k + 1$. This implies that $\OPT(G, k) = k + 1$, and for a spanning tree $F$ of $G$, $\ClC(G, k, F) = k + 1$.
  
Combining these values, we get $\frac{\ClC(G, k, F)}{\OPT(G, k)} = \frac{k + 1}{k + 1} = \frac{2}{2} = \frac{\ClC(P_4, 1, F')}{\OPT(P_4, 1)}$. This implies that if $F'$ is factor $c$-approximate solution for $(P_4, 1)$, then $F$ is factor $c$-approximate solution for $(G,k)$. This concludes the proof.
\end{proof}

We now consider an instance $(G, k)$ for which Reduction Rule~\ref{rr:large-cvd} does not return a trivial instance. This implies that for a given graph $G$, in polynomial time, one can find a partition $(X, Y)$ of $V(G)$ such that $G - X = G[Y]$ is a clique and $|X|$ is at most $4k$. For $\epsilon > 0$, find a smallest integer $d$, such that $\frac{d + 1}{d} \le 1 + \epsilon$. In other words, fix $d = \lceil \frac{1}{\epsilon} \rceil$.
We note that if the number of vertices in the graph is at most $\calO(k^{d + 1})$, then the algorithm returns this graph as a lossy kernel of the desired size. Hence, without loss of generality, we assume that the number of vertices in the graph is larger than $\calO(k^{d + 1})$.

Given an instance $(G, k)$, a partition $(X, Y)$ of $V(G)$ with $G[Y]$ being a clique, and an integer $d$, consider the following two marking schemes.

\begin{marking-scheme}
 \label{marking:nbrs}
 For a subset $A$ of $X$, let $M_1(A)$ be the set of vertices in $Y$ whose neighborhood contains $A$. For every subset $A$ of $X$ which is of size at most $d$, mark a vertex in $M_1(A)$.
\end{marking-scheme}
\noindent Formally, $M_1(A) = \{y \in Y | A \subseteq N(y)\}$. If $M_1(A)$ is an empty set, then the marking scheme does not mark any vertex. If it is non-empty, then the marking scheme arbitrarily chooses a vertex and marks it.
\begin{marking-scheme}
 \label{marking:non-nbrs}
 For a subset $A$ of $X$, let $M_2(A)$ be the set of vertices in $Y$ whose neighborhood does not intersect $A$. For every subset $A$ of $X$ which is of size at most $d$, mark $2k + 1$ vertices in $M_2(A)$.
\end{marking-scheme}
\noindent Formally, $M_2(A) = \{ y \in Y | N(y) \cap A = \emptyset \}$. If the number of vertices in $M_2(A)$ is at most $2k + 1$, then the marking scheme marks all vertices in $M_2(A)$. If it is larger than $2k + 1$, then it arbitrarily chooses $2k + 1$ vertices and marks them.

\begin{reduction rule}
 \label{rr:delete-marked} For a given instance $(G, k)$, partition $(X, Y)$ of $V(G)$ with $G[Y]$ being a clique, and an integer $d$, apply the Marking Schemes~\ref{marking:nbrs} and \ref{marking:non-nbrs}. Let $G'$ be the graph obtained from $G$ by deleting all the unmarked vertices in $Y$. Return the instance $(G', k)$.
\end{reduction rule}

Above reduction rule can be applied in time $|X|^{d} \cdot |V(G)|^{\calO(1)} = \OO(k^{\OO(d)}|V(G)|^{\calO(1)})$ as $|X|$ is at most $4k$. 
Note that $G'$ is an induced subgraph of $G$. We first show that since $G$ is a connected graph, $G'$ is also connected. 
In the following lemma, we prove a stronger statement. 

\begin{lemma} \label{lemma:connected-clique} 
Consider instance $(G, k)$ of \textsc{Clique Contraction}. 
Let $Y'$ be the set of vertices marked by Marking Scheme~\ref{marking:nbrs} or \ref{marking:non-nbrs} for some positive integer $d$. 
For any subset $Y''$ of $Y \setminus Y'$, let $G''$ be the graph obtained from $G$ by deleting $Y''$.
Then, $G''$ is connected.
\end{lemma}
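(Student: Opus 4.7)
The plan is to verify connectedness of $G''=G-Y''$ by exhibiting, for every vertex of $G''$, a path in $G''$ into the sub-clique $G''[Y\setminus Y'']$. The main tool will be Marking Scheme~\ref{marking:nbrs} applied to singletons $A=\{x\}\subseteq X$, which is legal because $d$ is a positive integer. Attention can be restricted to the case $X,Y\neq\emptyset$, since if $X=\emptyset$ then $G''\subseteq G[Y]$ is a sub-clique, and if $Y=\emptyset$ then $Y''=\emptyset$ and $G''=G$; in both degenerate cases connectedness is immediate.

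The first step is to observe that $G''[Y\setminus Y'']$ is a non-empty clique. As an induced subgraph of the clique $G[Y]$ it is automatically complete, so only non-emptiness requires argument. Since $Y''\subseteq Y\setminus Y'$ by hypothesis, we have $Y\setminus Y''\supseteq Y'$, and it suffices to show $Y'\neq\emptyset$. Connectedness of $G$ together with $X,Y\neq\emptyset$ produces some edge $xy$ with $x\in X$ and $y\in Y$; then $y\in M_1(\{x\})$, and Marking Scheme~\ref{marking:nbrs} with $A=\{x\}$ places some vertex of $M_1(\{x\})$ into $Y'$.

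The core step is to attach every $x\in X$ to $Y\setminus Y''$ inside $G''$. The plan is to choose a \emph{shortest} path $P\colon x=x_0,x_1,\dots,x_t,y$ in $G$ from $x$ to the first vertex of $Y$ that $P$ meets. Minimality of $P$ forces $x_0,x_1,\dots,x_t\in X$, for otherwise the corresponding prefix of $P$ would already reach $Y$ and contradict the choice of $y$. In particular $x_t\in X$ has the neighbor $y\in Y$, so Marking Scheme~\ref{marking:nbrs} applied with $A=\{x_t\}$ marks some $y^\star\in Y$ adjacent to $x_t$, placing $y^\star\in Y'\subseteq Y\setminus Y''$. The sequence $x_0,x_1,\dots,x_t,y^\star$ is then a walk that visits only vertices of $X\cup\{y^\star\}\subseteq V(G'')$ and uses only edges of $G$, so it is a valid path in $G''$ from $x$ into the sub-clique $G''[Y\setminus Y'']$.

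Combining the two steps, every vertex of $V(G'')=X\cup(Y\setminus Y'')$ is connected inside $G''$ to the non-empty clique $G''[Y\setminus Y'']$, proving $G''$ is connected. The one delicate point is the shortest-path argument that keeps the interior of $P$ inside $X$ and hence safely outside $Y''$; without the minimality of $P$ one might try to re-route through deleted vertices. Beyond this observation the argument is entirely routine.
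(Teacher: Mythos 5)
Your proof is correct and takes essentially the same route as the paper's: for each $x\in X$ take a path into $Y$ that first meets $Y$ at its terminal vertex, and if that terminal vertex has been deleted, swap it for a vertex marked by Marking Scheme~\ref{marking:nbrs} on the singleton consisting of its last $X$-neighbor on the path. The only differences are cosmetic: you add explicit handling of the degenerate cases $X=\emptyset$ or $Y=\emptyset$ and a one-line justification that $Y'\neq\emptyset$, both of which the paper leaves implicit.
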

\begin{proof}
Recall that, by our assumption, $G$ is connected  and $Y$ is a clique in $G$.
Hence, for every vertex in $X$, there exists a path from it to some vertex in $Y$.
By the construction of $G''$, $(X, Y\setminus Y'')$ forms a partition of $V(G'')$ and $Y \setminus Y''$ is a clique in $G''$.
To prove that $G''$ is connected, it is sufficient to prove that for every vertex in $X$, there exists a path from it to a vertex in $Y \setminus Y''$ in $G$.

Fix an arbitrary vertex, say $x$, in $X$.
Consider a path $P$ from $x$ to $y$ in $G$, where $y$ is some vertex in $Y$. 
Without loss of generality, we can assume that $y$ is the only vertex in $V(P) \cap Y$.
We argue that there exists another path, say $P_1$, from $x$ to a vertex in $Y \setminus Y''$.
If $y$ is in $Y \setminus Y''$ then $P_1 = P$ is a desired path. 
Consider the case when $y$ is in $Y''$.
Let $x_0$ be the vertex in $V(P)$ which is adjacent with $y$. 
Note that $x_0$ may be same as $x$.
As Marking Scheme~\ref{marking:nbrs} considers all subsets of size at most $d$, it considered singleton set $\{x_0\}$.
As $x_0$ is adjacent with $y$, we have $\{x_0\} \subseteq N(y)$.
Since $y$ is in $Y''$, and hence unmarked, there exists a vertex, say $y_1$, in $Y$ which has been marked by Marking Scheme~\ref{marking:nbrs}.
Consider a path $P_1$ obtained from $P$ by deleting vertex $y$ (and hence edge $x_0y$) and adding vertex $y_1$ with edge $x_0y_1$. 
This is a desired path from $x$ to a vertex in $Y \setminus Y''$.
As $x$ is an arbitrary vertex in $X$, this statement is true for any vertex in $X$ and
hence $G''$ is connected.
\end{proof}

\begin{figure}[t]
 \centering
 \includegraphics[scale=0.60]{./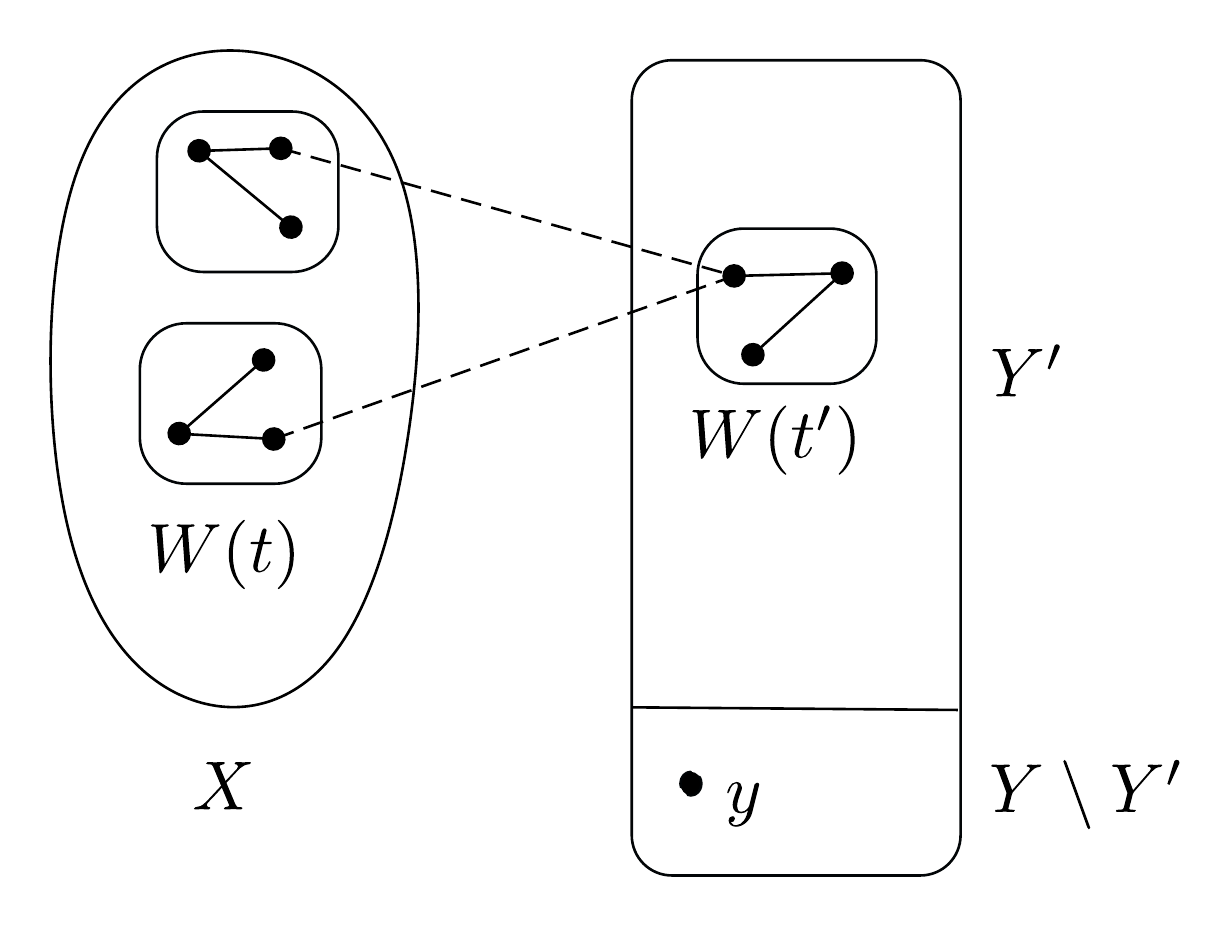}
 \caption{Straight lines (e.g within $W(t)$) represent edges in original solution $F$. Dashed lines (e.g. across $W(t)$ and $W(t')$) represents extra edges added to solution $F$. Please refer to the proof of Lemma~\ref{lemma:solution-lifting}. \label{fig:solution-lifting}}
\end{figure}

Thus, because of Lemma~\ref{lemma:connected-clique}, from now onwards, we assume that $G'$ is connected. 
In fact, in our one of the proof, we will iteratively remove vertices from $Y\setminus Y'$, and Lemma~\ref{lemma:connected-clique} ensures that the graph at each step remains connected.  In the following lemma, we argue that given a solution for $(G', k)$,  we can construct a solution of almost the same size for $(G, k)$. 

\begin{lemma} \label{lemma:solution-lifting} Let $(G', k)$ be the instance returned by Reduction Rule~\ref{rr:delete-marked} when applied on an instance $(G, k)$. If there exists a set of edges of size at most $k$, say $F'$, such that $G'/F'$ is a clique, then there exists a set of edges $F$ such that $G/F$ is a clique and cardinality of $F$ is at most $(1 + \epsilon) \cdot |F'|$.
\end{lemma}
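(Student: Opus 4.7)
The plan is to construct, from $F'$ and the witness structure $\calW'$ of $G'/F'$, a clique witness structure $\calW$ of $G$ together with a corresponding edge set $F$ of only slightly larger size. Partition $\calW'$ into three classes: $\calA$ is the collection of \emph{nice} witness sets, namely those intersecting $Y' := Y \cap V(G')$; $\calB_S$ consists of \emph{small non-nice} sets $W \subseteq X$ with $|W| \le d$; and $\calB_L$ consists of \emph{large non-nice} sets $W \subseteq X$ with $|W| \ge d+1$. Reduction Rule~\ref{rr:delete-marked} is applied only when $|V(G)|$ exceeds $\calO(k^{d+1})$, whereas $|X| \le 4k$, so $Y \neq \emptyset$; then Marking Scheme~\ref{marking:nbrs} applied to $A = \emptyset$ yields $Y' \neq \emptyset$, and hence $\calA \neq \emptyset$.

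I build $\calW$ in two stages. First, for each $W \in \calB_L$, pick an arbitrary $A_W \in \calA$; since $G'/F'$ is a clique, $W$ and $A_W$ are adjacent witness sets in $\calW'$, so some edge $e_W \in E(G')$ joins them. Replace $W$ and $A_W$ in $\calW$ by their union $W \cup A_W$, which is connected (via $e_W$) and remains nice. Second, for every vertex $v \in Y \setminus Y'$, add $\{v\}$ to $\calW$ as a trivial witness set. Take $F$ to be a union of spanning trees of the sets in $\calW$. A direct count gives
\[
  |F| \;=\; \sum_{W \in \calW}(|W|-1) \;=\; \sum_{W \in \calW'}(|W|-1) + |\calB_L| \;\le\; |F'| + |\calB_L|,
\]
since each merger contributes exactly one edge beyond the spanning trees of the original witness sets and the new singletons contribute none.

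Next, I need to verify that $\calW$ is indeed a clique witness structure of $G$, i.e., that every two witness sets are adjacent in $G$. Adjacencies among sets inherited from $\calW'$ (including those involving mergers) are preserved. For a new singleton $\{v\}$ with $v \in Y \setminus Y'$, adjacency with any nice set is immediate because $v \in Y$, the nice set contains some $y \in Y' \subseteq Y$, and $Y$ is a clique in $G$; adjacency between two singletons in $Y \setminus Y'$ follows the same way. The only delicate case, and the principal obstacle, is adjacency between $\{v\}$ and $W \in \calB_S$. This is handled by the following claim, which is the crux of the proof: $|M_2(W)| \le 2k$ for every $W \in \calB_S$. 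Assuming the claim, every non-neighbor of $W$ in $Y$ is marked by Marking Scheme~\ref{marking:non-nbrs}, so the unmarked $v$ is a neighbor of $W$. To prove the claim, suppose $|M_2(W)| \ge 2k + 1$; then Marking Scheme~\ref{marking:non-nbrs} marks $2k + 1$ non-neighbors of $W$, each of which lies in $V(G')$. In $\calW'$, each such marked non-neighbor $y$ must belong to a non-trivial witness set (otherwise the singleton $\{y\}$ would fail to be adjacent to $W$ in $\calW'$, contradicting that $G'/F'$ is a clique); but by Observation~\ref{obs:witness-structure-property}, non-trivial witness sets of $\calW'$ cover at most $2|F'| \le 2k$ vertices in total, a contradiction.

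It remains to control the size of $F$. Every $W \in \calB_L$ satisfies $|W|-1 \ge d$ and contributes $|W|-1$ edges to the spanning-tree budget of $\calW'$, so $d\,|\calB_L| \le \sum_{W \in \calW'}(|W|-1) \le |F'|$, giving $|\calB_L| \le |F'|/d$. Therefore,
\[
  |F| \;\le\; |F'| + |\calB_L| \;\le\; \left(1 + \tfrac{1}{d}\right) |F'| \;\le\; (1 + \epsilon)\,|F'|,
\]
where the final inequality uses $d = \lceil 1/\epsilon \rceil \ge 1/\epsilon$.
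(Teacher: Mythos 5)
Your proposal is correct and follows essentially the same approach as the paper: add singleton witness sets for the unmarked vertices of $Y$, merge $X$-contained witness sets of size $\geq d+1$ into nice sets meeting $Y'$, and use Marking Scheme~\ref{marking:non-nbrs} together with the $|V(F')| \le 2k$ budget to show that the small $X$-contained witness sets are already adjacent to every unmarked vertex. The only cosmetic differences are that the paper merges exactly the $X$-contained sets that have a non-adjacent unmarked vertex (and proves these must be large) into a single fixed nice set $W(t')$, whereas you merge \emph{all} large $X$-contained sets into arbitrarily chosen nice sets; this is the contrapositive formulation of the same marking-scheme argument, and your merged family is a (possibly proper) superset of the paper's, but the spanning-tree count and the $(1+\epsilon)$ bound come out identically.
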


\begin{proof} If no vertex in $Y$ is deleted, then $G'$ and $G$ are identical graphs, and the statement is true. We assume that at least one vertex in $Y$ is deleted. Let $Y'$ be the set of vertices in $Y$, which are marked. Note that the sets $X, Y'$ forms a partition of $V(G')$ such that $Y'$ is a clique and a proper subset of $Y$. Let $\calW'$ be a $G'/F'$-witness structure of $G'$. We construct a clique witness structure $\calW$ of $G$ from $\calW'$ by adding singleton witness sets $\{y\}$ for every vertex $y$ in $Y \setminus Y'$. Since $G[Y \setminus Y']$ is a clique in $G$, any two newly added witness sets are adjacent to each other. Moreover, any witness set in $\calW'$, which intersects $Y'$ is also adjacent to the newly added witness sets. We now consider witness sets in $\calW'$, which do not intersect $Y'$.

Let $\calW^\star$ be a collection of witness sets $W(t)$ in $\calW'$ such that $W(t)$ is contained in $X$ and there exists a vertex $y$ in $Y \setminus Y'$ whose neighborhood does not intersect with $W(t)$. See Figure~\ref{fig:solution-lifting}. We argue that every witness set in $\calW^\star$ has at least $d + 1$ vertices. For the sake of contradiction, assume that there exists a witness set $W(t)$ in $\calW^\star$ which contains at most $d$ vertices. Since Marking Scheme~\ref{marking:non-nbrs} iterated over all the subsets of $X$ of size at most $d$, it also considered $W(t)$ while marking. Note that the vertex $y$ belongs to the set $M_2(W(t))$. Since $y$ is unmarked, there are $2k + 1$ vertices in $M_2(W(t))$ which have been marked. All these marked vertices are in $G'$. Since the cardinality of $F'$ is at most $k$, the number of vertices in $V(F')$ is at most $2k$. Hence, at least one marked vertex in $M_2(W(t))$ is a singleton witness set in $\calW'$. However, there is no edge between this singleton witness set and $W(t)$. This non-existence of an edge contradicts the fact that any two witness sets in $\calW'$ are adjacent to each other in $G'$. Hence, our assumption is wrong, and $W(t)$ has at least $d + 1$  vertices.

Next, we show that there exists a witness set in $\calW'$ that intersects $Y'$. This is ensured by the fact that $G'$ is connected, and we are in the case where at least one vertex in $Y$ is deleted. The last assertion implies that  $Y'$ is non-empty, and hence there must be a witness set in $\calW'$ that intersects $Y'$. 
Let $W(t')$ be a witness set in  $\calW'$ that intersects  $Y'$. 
Note that $W(t')$ is adjacent to every vertex in $Y\setminus Y'$. Let $W(t)$ be a witness set in $\calW^\star$. Since $W(t')$ and $W(t)$ are two witness sets in the $G'/F'$-witness structure, there exists an edge with one endpoint in $W(t')$ and another in $W(t)$. Therefore, the set $W(t') \cup W(t)$ is adjacent to every other witness set in $\calW$. 

We now describe how to obtain $F$ from $F'$. We initialize $F = F'$. For every witness set $W(t)$ in $\calW^\star$ add an edge between $W(t)$ and $W(t')$ to the set $F'$. Equivalently, we construct a new witness set by taking the union of $W(t')$ and all witness sets $W(t)$ in $\calW^\star$. This witness set is adjacent to every vertex in $Y \setminus Y'$, and hence $G/F$ is a clique. We now argue the size bound on $F$. Note that we have added one extra edge for every witness set $W(t)$ in $\calW^\star$. We also know that every such witness set has at least $d + 1$ vertices. Hence, we have added one extra edge for at least $d$ edges in the solution $F'$. Moreover, since witness sets in $\calW^\star$ are vertex disjoint, no edge in $F$ can be part of two witness sets. This implies that the number of edges in $F'$ is at most $(\nicefrac{d + 1}{d})|F| \le (1 + \epsilon) \cdot |F|$.
\end{proof}

In the following lemma, we argue that the value of the optimum solution for the reduced instance can be upper bounded by the value of an optimum solution for the original instance. 

\begin{lemma}\label{lemma:reduction-rule} Let $(G', k)$ be the instance returned by Reduction Rule~\ref{rr:delete-marked} when applied on an instance $(G, k)$. If $\OPT(G, k) \le k$, then $\OPT(G', k) \le \OPT(G, k)$.
\end{lemma}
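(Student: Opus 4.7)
The plan is to take an optimal solution $F$ for $(G,k)$ with $|F| = \OPT(G,k) \le k$, let $\mathcal{W}$ be the associated $G/F$-witness structure, and construct a clique witness structure $\mathcal{W}'$ of $G'$ whose corresponding edge count is at most $|F|$. To do this I would process the unmarked vertices in $Y^\star := Y \setminus Y'$ one at a time, maintaining the invariant that, after processing a subset of them, we have a clique witness structure of the current induced subgraph $H$ of $G$ whose edge count (sum of $|W|-1$ over the witness sets) is at most $|F|$. The invariant is well-posed because Lemma~\ref{lemma:connected-clique} guarantees that $H$ remains connected after removing any subset of $Y^\star$.

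The iterative step splits into two cases. If the vertex $y$ to be removed is a singleton witness set $\{y\}$ in the current structure $\mathcal{W}_H$, I simply drop $\{y\}$: for every pair of the remaining witness sets, the cross-edge witnessing their adjacency cannot have used the singleton $y$, so these adjacencies persist in $H-\{y\}$, and the edge count is unchanged. If instead $y$ lies in a non-trivial witness set $W$ with $|W| \ge 2$, my plan is to apply Observation~\ref{obs:merging-witness-sets}: locate another witness set $W_2 \in \mathcal{W}_H$ such that $(W \cup W_2) \setminus \{y\}$ is connected in $H-\{y\}$, and then replace $W$ and $W_2$ by this merged set. Because $(|W|-1) + (|W_2|-1) = (|W|+|W_2|-1)-1$, the edge count is preserved under this merge.

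The main obstacle is to verify that such a $W_2$ always exists in this non-trivial case, and this is where Marking Schemes~\ref{marking:nbrs} and~\ref{marking:non-nbrs} enter. The intuition is that an unmarked $y$ is ``replaceable'' by a marked vertex of $Y'$ with an analogous neighborhood, so the connectivity role of $y$ inside the merged set can be played by a marked vertex instead. Concretely, I would case-split on the relationship between $W$ and the bipartition $(X,Y)$: when $W$ contains some vertex $z \in Y \cap V(H)$ different from $y$, the clique structure of $Y$ inside $H$ already provides the necessary adjacency and I pick $W_2$ to be the witness set containing any marked vertex (so the merge stays connected via this $Y$-clique bridge); when $W \setminus \{y\} \subseteq X$, I invoke Marking Scheme~\ref{marking:nbrs} on the small subsets of $W \cap X$ relevant to $y$'s connectivity to find a marked $y' \in Y' \cap V(H)$ with the right neighborhood pattern, and pick $W_2$ to be $y'$'s (currently singleton) witness set. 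Marking Scheme~\ref{marking:non-nbrs}'s buffer of $2k+1$ marked vertices for each small non-neighborhood, compared with the at-most-$2k$ vertices contained in non-trivial witness sets (by Observation~\ref{obs:witness-structure-property}), guarantees that a suitable $y'$ is available as a singleton in $\mathcal{W}_H$ whenever needed.

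The technical heart I expect to be most delicate is showing that this case analysis is exhaustive when $y$ is a cut-vertex of $G[W]$, so that $W \setminus \{y\}$ splits into several components — here I would either iterate the merge step across successive components or, as a preprocessing step, reorganize $\mathcal{W}$ so that the chosen $y$ is a non-cut-vertex of its witness set (e.g., a leaf of a chosen spanning tree) before applying Observation~\ref{obs:merging-witness-sets}.
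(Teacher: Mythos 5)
Your high-level plan matches the paper's: process $Y\setminus Y'$ one vertex at a time, maintain an invariant that the current graph has a clique witness structure whose edge count is at most $|F|$, use Lemma~\ref{lemma:connected-clique} to keep the graph connected, and apply Observation~\ref{obs:merging-witness-sets} when possible. Your Case~1 (singleton $\{y\}$) is exactly the paper's Condition~1 specialized to singletons, and your instinct that the marking schemes supply substitutes for unmarked vertices is the right one. However, your Case~2 has two genuine gaps.

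First, your sub-case ``$W$ contains some $z \in Y \cap V(H)$ different from $y$'' is not correct as stated. The $Y$-clique structure makes $z$ adjacent to any other surviving $Y$-vertex, but it does nothing for $X$-vertices of $W$ whose only $G[W]$-path to the rest of $W$ passes through $y$. For example, $W = \{x_1, y, x_2\}$ with $x_1y, x_2y \in E(G)$ and $x_1x_2 \notin E(G)$ has $y$ as a cut vertex even if you later adjoin a marked $z'$. Merging $W$ with a single $W_2$ cannot reconnect both $x_1$ and $x_2$ unless $W_2$ happens to be adjacent to both, which the marking schemes do not guarantee. The paper avoids this by \emph{splitting} $W(t)$ rather than merging it: it roots a spanning tree $T$ of $G[W(t)]$ at $y$, and for each child $z$ of $y$ in $T$, the subtree $W'(z)$ is re-attached to a \emph{separate} witness set $W(y_z)$ obtained from Marking Scheme~\ref{marking:nbrs}. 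The edge budget works out because one $F$-edge ($zy$) is replaced by one new edge ($zy_z$), and the merges with $W(y_z)$ reuse existing cross-edges. Your remark about ``iterating the merge step across successive components'' is the germ of this idea, but as written it is not an argument, and the claim that you can always reorganize so $y$ is a non-cut vertex is false (the example above is a counterexample).

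Second, you conflate the roles of the two marking schemes. Marking Scheme~\ref{marking:non-nbrs} marks $2k+1$ \emph{non}-neighbors per small set $A$; the paper uses that buffer only in Lemma~\ref{lemma:solution-lifting}, to argue that unmarked sets in the solution-lifting direction must be large, not to locate a substitute with a prescribed positive adjacency. For the present lemma the paper uses only Marking Scheme~\ref{marking:nbrs}, which marks a single vertex $y_x$ per subset $\{x\}$, and it crucially does \emph{not} need $y_x$ to be a singleton witness set — $y_x$ merely serves as a docking point, and $W'(x)$ is grafted onto whatever witness set currently contains $y_x$. So your worry about securing a singleton $W_2$ via a $2k+1$ margin is both misplaced (it cites the wrong marking scheme) and unnecessary (no singleton is required). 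Fixing these two points essentially turns your sketch into the paper's proof.
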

\begin{proof} Let $F$ be a set of at most $k$ edges in $G$ such that $\OPT(G, k) = \ClC(G, k, F)$ and $\calW$ be a $G/F$-witness structure of $G$. Since we are working with a minimization problem, to prove this lemma it is sufficient to find a solution for $G'$ which is of size $|F|$.
 Recall that $(X, Y)$ is a partition of $V(G)$ such that $G - X = G[Y]$ is a clique. 
 Let $Y'$ be the set of vertices that were marked by either of the marking schemes. In other words, $(X, Y')$ is a partition of $G'$ such that $G' - X = G'[Y]$ is a clique. 
 We proceed as follows. At each step, we construct graph $G^\star$ from $G$ by deleting one or more vertices of $Y \setminus Y'$. Simultaneously, we also construct a set of edges $F^\star$ from $F$ by either replacing the existing edges by new ones or by simply adding extra edges to $F$. At any intermediate state, we ensure that $G^\star/F^\star$ is a clique, and the number of edges in $F^\star$ is at most $|F|$. Let $F^{\circ} = F$ be an optimum solution for the input instance $(G, k)$. {\em For notational convenience, we rename $G^\star$ to $G$ and $F^\star$ to $F$ at regular intervals but do not change $F^{\circ}$. } 

To obtain $G^\star$ and $F^\star$, we delete witness sets which are subsets of $Y \setminus Y'$ (Condition~~\ref{cond:witness-subset-YY}) and modify the ones which intersect with $Y \setminus Y'$. Every witness set of latter type intersects with $Y'$ or $X$ or both. We partition these non-trivial witness sets in $\calW$ into two groups depending on whether the intersection with $X$ is empty (Condition~\ref{cond:witness-does-not-intersectsX}) or not (Condition~\ref{cond:witness-intersectsX}). We first modify the witness sets that satisfy the least indexed condition. If there does not exist a witness set which satisfies either of these three conditions, then $Y \setminus Y'$ is an empty set, and the lemma is vacuously true.

\begin{condition1}\label{cond:witness-subset-YY}
 There exists a witness set $W(t)$ in $\calW$ which is a subset of $Y \setminus Y'$.
\end{condition1}

Construct $G^\star$ from $G$ by deleting the witness sets $W(t)$ in $\calW$. Let $F^\star$ be obtained from $F$ by deleting those edges whose both the endpoints are in $W(t)$. Since the class of cliques is closed under vertex deletion, $G^\star/F^\star$ is a clique, and as we only deleted edges from $F$, we have $|F^*|\leq |F|$. We repeat this process until there exists a witness set that satisfies Condition~\ref{cond:witness-subset-YY}.  

%
%
%

\smallskip

\begin{tcolorbox}[colback=red!5!white,colframe=red!75!black]
\centerline {At this stage we rename $G^\star$ to $G$ and $F^\star$ to $F$.}
\end{tcolorbox}

\begin{condition1}\label{cond:witness-does-not-intersectsX}
There exists a witness set $W(t)$ in $\calW$ which contains vertices from $Y \setminus Y'$ but does not intersect $X$.
\end{condition1}

Since $W(t)$ is not contained in $Y\setminus Y'$ and $W(t)\cap X$ is empty it must intersect with $Y'$. See Figure~\ref{fig:reduction-rule}. Let $y_4$ and $y_5$ be vertices in $W(t) \cap Y'$ and $W(t) \cap (Y \setminus Y')$, respectively. Let $W(t_1)$, different from $W(t)$, be a witness set which intersects $Y'$. Since $Y'$ is large and non-empty, such a witness set exists.
Let $y_6$ be a vertex in the set $W(t_1) \cap Y'$. Consider the witness sets $W(t), W(t_1)$ and vertex $y_5$ in $W(t)$ in graph $G$. Lemma~\ref{lemma:connected-clique} implies that these witness sets  satisfy the premise of Observation~\ref{obs:merging-witness-sets}. This implies $\calW^\star$ is a clique witness structure of $G - \{y_5\}$, where $\calW^\star$ is obtained from $\calW$ by removing $W(t), W(t_1)$ and adding $(W(t) \cup W(t_1)) \setminus \{y_5\}$. This corresponds to replacing an edge in $F$ which was incident to $y_5$ with the one across $W(t)$ and $W(t_1)$. For example, in Figure~\ref{fig:reduction-rule}, we replace edge $y_4y_5$ in the set $F$ with an edge $y_4y_6$ to obtain a solution for $G - \{y_5\}$. An edge in $F$ has been replaced with another edge and one vertex in $Y \setminus Y'$ is deleted. The size of $F^\star$ is same as that of $F$ and $G^\star/F^\star$ is a clique. We repeat this process until there exist a witness set which satisfies Condition~\ref{cond:witness-does-not-intersectsX}.

\begin{tcolorbox}[colback=red!5!white,colframe=red!75!black]
\centerline {At this stage we rename $G^\star$ to $G$ and $F^\star$ to $F$.}
\end{tcolorbox}

\begin{figure}[t]
 \centering
 \includegraphics[scale=0.60]{./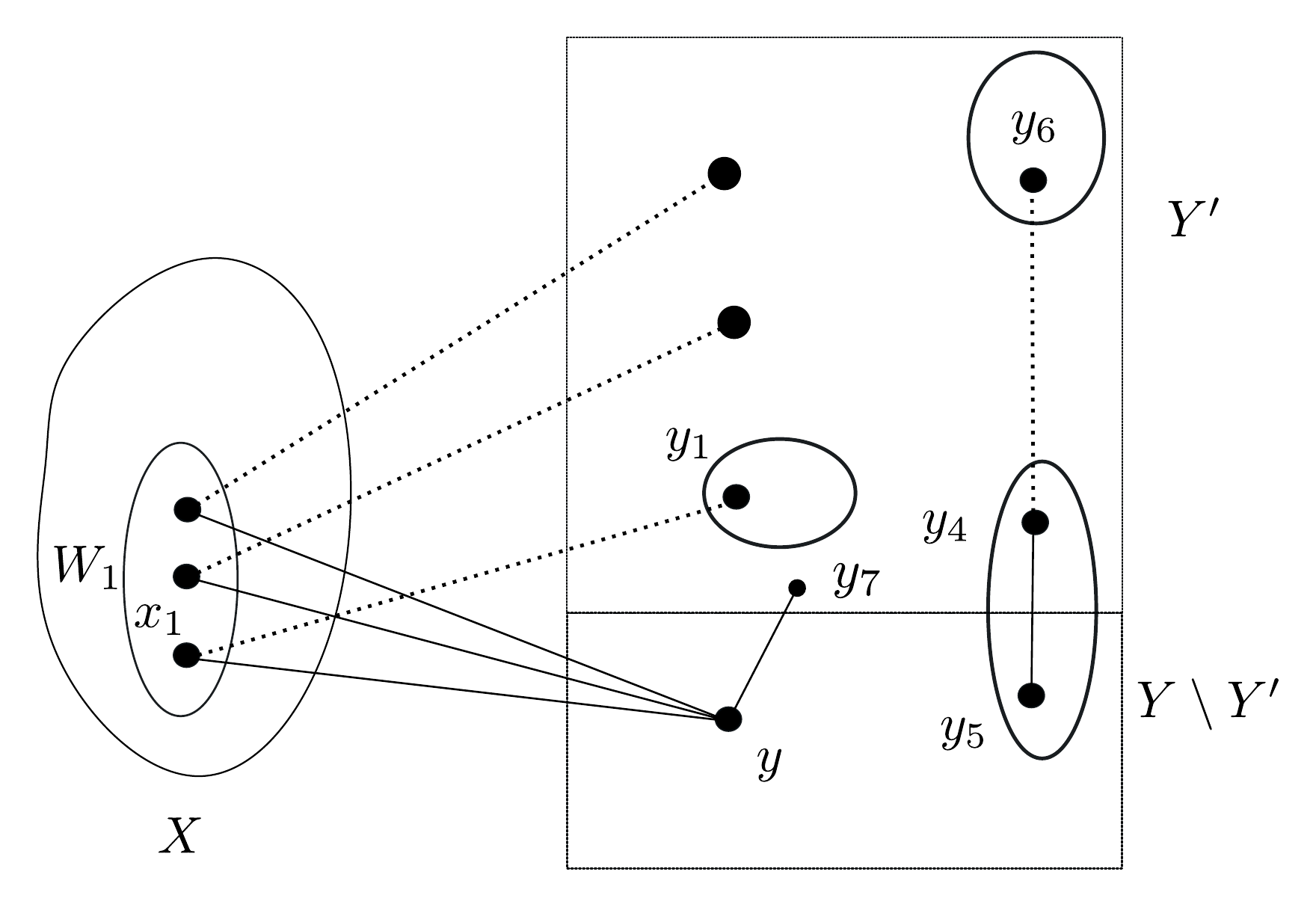}
 \caption{Straight lines (e.g. $y_4y_5$) represent edges in original solution $F$. Dotted lines (e.g. $y_4y_6)$ represents edges which are replaced for some edges in $F$. 
  Please refer to the proof of Lemma~\ref{lemma:reduction-rule}. \label{fig:reduction-rule}}
\end{figure}
\begin{condition1}\label{cond:witness-intersectsX} There exists a witness set $W(t)$ in $\calW$ which contains vertices from $Y \setminus Y'$ and intersects $X$.
\end{condition1}

Let $y$ be a vertex in $W(t) \cap (Y \setminus Y')$, $X_t$ be the set of vertices in $W(t) \cap X$ which are adjacent to $y$ via edges in $F$, and $Q_t$ be the set of vertices in $W(t)\cap Y$ which are adjacent to $y$ via edges in $F$. 
We find a \emph{substitute} for $y$ in $Y'$.
If the set $X_t$ is empty then the vertex $y$ is adjacent only with the vertices of $Y$, in this case the edges incident to $y$ can be replaced as mentioned in the Condition~\ref{cond:witness-does-not-intersectsX}.
Assume that $X_t$ is non-empty.
For every vertex $x$ in $X_t$ the set $\{x\}$ is considered by Marking Scheme ~\ref{marking:nbrs}. 
Since $y$ is adjacent to every vertex $x$ in $X_t$, the set $M_1(\{x\})$ is non-empty.
As $y$ is in $Y\setminus Y'$, and hence unmarked, for every $x$ in $X_t$, there is a vertex in $M_1(\{x\})$, say $y_x$, different from $y$ which has been marked.
We construct $F^\star$ from $F$ by the following operation: For every vertex $x$ in $X_t$, replace the edge $xy$ in $F$ by $xy_x$.
Fix a vertex $x_o$ in $X_t$, and for every vertex $u$ in $Q_t$, replace the edge $uy$ in $F$ with $uy_{x_o}$.
Since we are replacing a set of edges in $F$ with another set of edges of same size we have $|F^\star|\leq |F|$.
(For example, in Figure~\ref{fig:reduction-rule}, $X_t = W_1$ and $Q_t = \{y_7\}$.
Edges $yx_1, yy_7$ are replaced by $x_1y_1, y_1y_7$ resp.)
We argue that if $G^\star$ is obtained from $G$ by removing $y$, then $G^\star/F^\star$ is a clique.

We argue that contracting edges in $F^{\star}$ partitions $W(t)$ into $|X_t| + |Q_t|$ many parts and merges each part with some witness set in $\calW \setminus \{W(t)\}$.
Recall that $F$ contains a spanning tree of graph $G[W(t)]$.
Let $T$ be a spanning tree of $G[W(t)]$ such that $E(T) \subseteq F$ and $T$ contains all edges in $F$ that are incident on $y$.
It is easy to see that such a spanning tree exists.
Let $y$ be the root of tree $T$.
For every $z$ in $X_t \cup Q_t$, let $W'(z)$ be the set of vertices in the subtree of $T$ rooted at $z$.
As $V(T) = W(t)$, set $\{W'(z) |\ z \in X_t \cup Q_t \}$ is a partition of $W(t) \setminus \{y\}$.
For every $x$ in $X_t$, let $W(y_x)$ be the witness set in $\calW$ containing the vertex $y_x$.
For every $x$ in $X_t \setminus \{x_o\}$, let $W^\star(y_x)$ be the set $ W(y_x) \cup W'(x)$.
Let $W^\star(y_{x_o})$ be the set $W(y_{x_o}) \cup W'(x_o) \cup \bigcup_{y'} W'(y')$ for all $y'$ in $Q_t$.
We obtain $\calW^\star$ from $\calW$ by removing $W(t)$ and $W(y_x)$ for every $x$ in $X_t$, and adding the sets $W^\star(y_x)$ for every $x$ in $X_t$.
Since $W^\star(y_x)$ contains the set $W(y_x)$ which was adjacent to every witness set in $\calW$, $W^\star(y_x)$ will be adjacent with every witness set in $\calW^\star$. 
We repeat this process until there exists a witness set that satisfies this condition.

 Any vertex in $Y \setminus Y'$ must be a part of some witness set in $\calW$, and any witness set in $\calW$ satisfies at least one of the above conditions. If there are no witness sets that satisfy these conditions, then $Y \setminus Y'$ is empty. 
This implies $G^\star = G'$ and there exists a solution $F^\star$ of size at most $ |F^{\circ}|$. This concludes the proof of the lemma.
\end{proof}

We are now in a position to prove the following lemma.

\begin{lemma} \label{lemma:rr-delete-marked-safe}
Reduction Rule~\ref{rr:delete-marked}, along with a solution lifting algorithm, is an $(1 + \epsilon)$-reduction rule.
\end{lemma}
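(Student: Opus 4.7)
The plan is to unpack Definition~\ref{def:lossy-kernel} with $\alpha = 1 + \epsilon$, which requires (i) polynomial running time of both the reduction and the solution lifting algorithm, and (ii) the ratio inequality $\ClC(G,k,F)/\OPT(G,k) \le (1+\epsilon)\cdot \ClC(G',k,F')/\OPT(G',k)$ for every solution $F'$ to the reduced instance. Polynomial-time bounds follow directly from the description: since $|X| \le 4k$ after Reduction Rule~\ref{rr:large-cvd} and $d = \lceil 1/\epsilon \rceil$ is a constant, the two marking schemes examine $\OO(k^d)$ subsets of $X$ and mark $\OO(k)$ vertices each, giving total time $\OO(k^d \cdot |V(G)|^{\OO(1)})$, and the lifting construction below is manifestly polynomial.

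The solution-lifting algorithm is defined by a simple dichotomy on $F'$. If $|F'| \le k$ and $G'/F'$ is a clique, we invoke the constructive procedure from Lemma~\ref{lemma:solution-lifting} to produce an edge set $F \subseteq E(G)$ with $G/F$ a clique and $|F| \le (1+\epsilon)|F'|$. Otherwise we output $F$ = any spanning tree of $G$ (recall $G$ is connected), which certifies $\ClC(G,k,F) \le k+1$. Correctness is then established by a case analysis on $\OPT(G,k)$, using Lemma~\ref{lemma:reduction-rule} to relate it to $\OPT(G',k)$.

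\textbf{Case A: $\OPT(G,k) = k+1$.} Since $\ClC$ values are capped at $k+1$, we have $\ClC(G,k,F) \le k+1 = \OPT(G,k)$, so the left-hand ratio is at most $1$, which is trivially at most $(1+\epsilon)$ times the right-hand ratio (the latter being at least $1$). \textbf{Case B: $\OPT(G,k) \le k$.} Lemma~\ref{lemma:reduction-rule} gives $\OPT(G',k) \le \OPT(G,k)$. If $\ClC(G',k,F') = k+1$, the spanning-tree fallback yields $\ClC(G,k,F)/\OPT(G,k) \le (k+1)/\OPT(G,k) \le (k+1)/\OPT(G',k) = \ClC(G',k,F')/\OPT(G',k)$. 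Otherwise $|F'| \le k$ and Lemma~\ref{lemma:solution-lifting} applies, giving $\ClC(G,k,F) \le (1+\epsilon)\ClC(G',k,F')$; combining with $\OPT(G',k) \le \OPT(G,k)$ yields the desired inequality.

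The only delicate point is making sure Lemma~\ref{lemma:solution-lifting} is available exactly when it is needed. Its hypothesis requires $F'$ of size at most $k$ with $G'/F'$ a clique, which is precisely the first branch of the lifting algorithm; in all other branches the trivial spanning-tree solution suffices because either $\OPT(G,k)$ is already at the cap or $F'$ itself is saturated at the cap. Once the case analysis is spelled out, the ratio bound falls out of the two preceding lemmas, so no new combinatorial work is needed and the claim follows.
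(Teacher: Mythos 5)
Your proposal is correct and follows essentially the same route as the paper: the same dichotomous solution-lifting algorithm (trivial spanning tree when $\ClC(G',k,F')=k+1$, otherwise invoke Lemma~\ref{lemma:solution-lifting}), and Lemma~\ref{lemma:reduction-rule} to bound $\OPT(G',k)\le\OPT(G,k)$. The paper packages this as two separate inequalities and multiplies them, whereas you split into cases on $\OPT(G,k)$, but the content is identical.
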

\begin{proof} Let $(G', k)$ be the instance returned by Reduction Rule~\ref{rr:delete-marked} when applied on an instance $(G, k)$. We present a solution lifting algorithm as follows. For a solution $F'$ for $(G, k)$ if $\textsc{ClC}(G', k, F') = k + 1$, then the solution lifting algorithm returns a spanning tree $F$ of $G$ (a trivial solution) as solution for $(G, k)$. In this case, $\ClC(G, k, F) = \ClC(G', k, F')$. If $\textsc{ClC}(G', k, F') \le k$, then size of $F'$ is at most $k$ and $G'/F'$ is a clique. Solution lifting algorithm uses Lemma~\ref{lemma:solution-lifting} to construct a solution $F$ for $(G, k)$ such that cardinality of $F$ is at most $(1 + \epsilon) \cdot |F'|$. In this case, $\ClC(G, k, F) \le (1 + \epsilon) \cdot \ClC(G', k, F')$. Hence, there exists a solution lifting algorithm which given a solution $F'$ for $(G', k')$ returns a solution $F$ for $(G, k)$ such that $\ClC(G, k, F) \le (1 + \epsilon) \cdot \ClC(G', k, F')$.

 If $\OPT(G, k) \le k$, then by Lemma~\ref{lemma:reduction-rule}, $\OPT(G', k) \le \OPT(G, k)$. If $\OPT(G, k) = k + 1$ then $\OPT(G', k) \le k + 1 = \OPT(G, k)$. Hence in either case, $\OPT(G', k) \le \OPT(G, k)$.

 Combining the two inequalities, we get $\frac{\ClC(G, k, F)}{ \OPT(G, k)} \le \frac{(1 + \epsilon) \cdot \ClC(G', k, F')}{\OPT(G', k)}$. This implies that if $F'$ is a factor $c$-approximate solution for $(G', k)$ then $F$ is a factor $(c \cdot (1 + \epsilon))$-approximate solution for $(G,k)$. This concludes the proof.
\end{proof}

We are now in a position to present the main result of this section.

\begin{reptheorem}{thm:clique-lossy}
For any $\epsilon > 0$, \textsc{Clique Contraction} parameterized by the size of solution $k$, admits a time efficient $(1 + \epsilon)$-approximate polynomial kernel with $\calO(k^{d + 1})$ vertices, where $d = \lceil\frac{1 }{\epsilon}\rceil$.
\end{reptheorem}
\begin{proof} For a given instance $(G, k)$ with $|V(G)|\geq k+3$, a kernelization algorithm applies the Reduction Rule~\ref{rr:large-cvd}. If it returns a trivial instance, then the statement is vacuously true. If it does not return a trivial instance, then the algorithm partitions $V(G)$ into two sets $(X, Y)$ such that $G - X = G[Y]$ is a clique and size of $X$ is at most $4k$. Then the algorithm applies the Reduction Rule~\ref{rr:delete-marked} on the instance $(G, k)$ with the partition $(X, Y)$ and the integer $d = \lceil \frac{1 }{\epsilon} \rceil$. The algorithm returns the reduced instance as $(1 + \epsilon)$-lossy kernel for $(G, k)$.

 The correctness of the algorithm follows from Lemma~\ref{lemma:rr-large-cvd-safe} and Lemma~\ref{lemma:rr-delete-marked-safe} combined with the fact that Reduction Rule~\ref{rr:delete-marked} is applied at most once. By Observation~\ref{obs:factor-2-approx}, Reduction Rule~\ref{rr:large-cvd} can be applied in polynomial time. The size of the instance returned by Reduction Rule~\ref{rr:delete-marked} is at most $\calO((4k)^d \cdot (2k + 1) + 4k) = \calO(k^{d+1})$. Reduction Rule~\ref{rr:delete-marked} can be applied in time $n^{\calO(1)}$ if the number of vertices in $(G, k)$ is more than $\calO(k^{d+1})$.  
\end{proof}


\section{Lossy Kernel for Split Contraction}
\label{sec:lossy_split}

In this section, we present a lossy kernel for \textsc{Split Contraction}. We start by defining a 
natural optimization version of the problem.

$$\textsc{SpC}(G, k, F) = \left\{ \begin{array}{rl}
                    \min\{|F|, k + 1\} & \mbox{\text{ if } $G/F$ \text{ is a split graph}} \\
\infty & \mbox{ otherwise.} \end{array} \right.$$

We assume that the input graph is {\em connected and justify this assumption at the end}. If the number of vertices in the input graph is at most $k+3$, then we return the same instance as a kernel for the given problem. Thus we only consider inputs that have at least $k + 3$ vertices. By the definition of optimization problem, for any set of edges $F$ if $G/F$ is a split graph then the maximum value of $\textsc{SpC}(G, k, F)$ is $k + 1$. Hence, any spanning tree of $G$ is a solution of cost $k + 1$. We call it a \emph{trivial solution} for the given instance. Consider an instance $(C_5, 1)$ where $C_5$ is a cycle on five vertices. One needs to contract at least two edges to convert $C_5$ into a split graph. We say $(C_{5}, 1)$ a \emph{trivial no instance} for \textsc{Split Contraction}.

 We start with a reduction rule, which says that if the minimum number of vertices that need to be deleted from an input graph to obtain a split graph is large, then we can return a trivial instance as a lossy kernel.
 
 \begin{reduction rule}
 \label{rr:large-splitvd}
 Given an instance $(G, k)$, apply the algorithm mentioned in Observation~\ref{obsn:approx-svd-solution} to find a set $S$ such that $G-S$ is a split graph. If $|S|>10k$ then return $(C_5,1)$. 
\end{reduction rule}

\begin{lemma} \label{lemma:rr-large-splitvd-safe}
 Reduction Rule~\ref{rr:large-splitvd} is a $1$-reduction rule.
\end{lemma}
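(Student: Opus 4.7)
The plan is to mirror the proof of Lemma~\ref{lemma:rr-large-cvd-safe} (the analogous statement for \textsc{Clique Contraction}), adapting the two key numerical ingredients: the factor $5$ (instead of $2$) from Observation~\ref{obsn:approx-svd-solution}, and the trivial no-instance $(C_5, 1)$ (instead of $(P_4,1)$). Given an input $(G,k)$ on which Reduction Rule~\ref{rr:large-splitvd} fires, we define the solution-lifting algorithm to return an arbitrary spanning tree $F$ of $G$, so that $\SpC(G, k, F) = k+1$.

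First I will establish the two optimum values. For the trivial instance, it is immediate that $C_5$ is not split and that contracting a single edge of $C_5$ leaves $C_4$, which is also not split; hence any $F'$ with $C_5 / F'$ split has $|F'| \ge 2$, so $\OPT(C_5,1) = 2$, and by capping $\SpC(C_5, 1, F') = 2$ for any valid solution $F'$. For the input side, I argue by contradiction that $\OPT(G,k) = k+1$. Suppose instead there were a set of edges $F^\star \subseteq E(G)$ with $|F^\star| \le k$ and $G/F^\star$ split. Observation~\ref{obsn-splitvd-solution} then yields a set $S^\star \subseteq V(G)$ of size at most $2k$ such that $G - S^\star$ is a split graph. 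But the factor-$5$ approximation algorithm of Observation~\ref{obsn:approx-svd-solution} would then return a set of size at most $5 \cdot 2k = 10k$, contradicting the hypothesis $|S| > 10k$ that triggered the rule. Hence no $F^\star$ of size $\le k$ works, so every valid solution for $(G,k)$ has cost exactly $k+1$ under the capped objective, giving $\OPT(G,k) = k+1$ and also $\SpC(G,k,F) = k+1$ for the spanning tree $F$ that the lifting algorithm outputs.

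Putting the pieces together, for any solution $F'$ returned by a hypothetical $c$-approximation on the reduced instance,
\[
\frac{\SpC(G,k,F)}{\OPT(G,k)} \;=\; \frac{k+1}{k+1} \;=\; 1 \;=\; \frac{2}{2} \;\le\; \frac{\SpC(C_5,1,F')}{\OPT(C_5,1)},
\]
which is exactly the inequality required of a $1$-approximate polynomial-time preprocessing algorithm in Definition~\ref{def:lossy-kernel}. The reduction algorithm clearly runs in polynomial time since Observation~\ref{obsn:approx-svd-solution} does, and computing a spanning tree of $G$ is polynomial, so the solution-lifting algorithm is polynomial as well. There is no real obstacle here beyond correctly invoking Observation~\ref{obsn-splitvd-solution} to translate the contraction budget $k$ into a vertex-deletion budget $2k$, which is then amplified by the approximation factor $5$ to the threshold $10k$; this is the only place the specific constant in Reduction Rule~\ref{rr:large-splitvd} is used, and it matches by design.
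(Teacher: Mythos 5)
Your proof is correct and follows essentially the same route as the paper: compute $\OPT(C_5,1)=2$ directly, use Observation~\ref{obsn-splitvd-solution} together with the factor-$5$ guarantee of Observation~\ref{obsn:approx-svd-solution} to conclude $\OPT(G,k)=k+1$, and lift by returning a spanning tree of $G$. The only cosmetic difference is that you spell out why a single contraction of $C_5$ yields $C_4$ and write the final chain with a $\le$, whereas the paper writes it as an equality; both give exactly the bound required by Definition~\ref{def:lossy-kernel}.
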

\begin{proof} Let $(G,k)$ be an instance such that Reduction Rule \ref{rr:large-splitvd} returns $(C_5,1)$ when applied on it. Solution lifting algorithm returns a spanning tree $F$ of $G$.

For a set of edges $F'$, if $C_5/F'$ is a split graph then $F'$ contains at least two edges. This implies $\OPT(C_5,1)=2$.

Since a factor $5$-approximate algorithm returns a set of size strictly more than $10k$, for any $S'$ of size at most $2k$, $G-S'$ is not a split graph. But by Observation \ref{obsn-splitvd-solution} if $G$ is $k$-contractible to a split graph then $G$ can be converted into a split graph by deleting at most $2k$ vertices. Hence, for any set of edges $F^\star$, if $G/F^\star$ is a split graph, then the size of $F^\star$ is at least $k+1$. This implies that $\OPT(G,k)=k+1$.

 Combining these values, we get $\frac{\SpC(G, k, F)}{\OPT(G, k)} = \frac{k + 1}{k + 1} = \frac{2}{2} = \frac{\SpC(C_5, 1, F')}{\OPT(C_5, 1)}$. This implies that if $F'$ is a factor $c$-approximate solution for $(C_5, 1)$, then $F$ is a factor $c$-approximate solution for $(G,k)$. This concludes the proof. 
\end{proof}

We consider an instance $(G, k)$ for which Reduction Rule~\ref{rr:large-splitvd} does not return a trivial instance.
This implies that for a given graph $G$, in polynomial time, one can find a partition $(S, X, Y)$ of $V(G)$ such that $|S|$ is at most $10k$ and $G - S$ is a split graph with $(X, Y)$ as its split partition.
Recall that, our objective is to present a $(2 + \epsilon)$-approximate polynomial kernel for given $\epsilon > 0$. 
Fix $\alpha = (2 + \epsilon)/2$.
Find a smallest integer $d$ such that $\frac{d + 1}{d} \le \alpha$. In other words, fix $d = \lceil \frac{1 }{\alpha -1 } \rceil$.
If the number of vertices in the graph is at most $\calO(k^{d^2 + d + 1})$, then the algorithm returns the original  graph as a lossy kernel of the desired size. Hence, without loss of generality, we assume that the number of vertices is larger than $\calO(k^{d^2 + d + 1})$.

Given an instance $(G,k)$, a partition $(S, X, Y)$ of $V(G)$, and an integer $d$, we marks some vertices in $Y$ using the following two marking schemes. 
\begin{marking-scheme}\label{mrk-nbrs-in-I} For a subset $A$ of $S$, let $M_{NY}(A)$ be the set of neighbors of $A$ in $Y$. For every subset $A$ of $S$ whose size is at most $d$, mark $k+2$ vertices in $M_{NY}(A).$ 
\end{marking-scheme}
Formally, $M_{NY}(A)=\{y\in Y| y \in N(A)\}$. If the number of vertices in $M_{NY}(A)$ is at most $k+2$, then the marking scheme marks all vertices in $M_{NY}$, else it arbitrarily chooses $k+2$ vertices and marks them.
\begin{marking-scheme}\label{mrk-cmn-nbrs-in-I} For a subset $A$ of $S$, let $M_{CY}(A)$ be the set of vertices in $Y$ whose neighborhood contains $A$. For every subset $A$ of $S$ whose size is at most $d$, mark a vertex in $M_{CY}(A)$. Marking scheme prefers a vertex with highest degree.
\end{marking-scheme}
Formally, $M_{CY}(A)=\{y \in Y | A \subseteq N(y)\}$. If $M_{CY}(A)$ is empty, then marking scheme does not mark any vertex, otherwise it picks a vertex with the highest degree.

Let $Y'$ be the set of vertices of $Y$ that have been marked by the Marking Schemes \ref{mrk-nbrs-in-I} or \ref{mrk-cmn-nbrs-in-I}. Using set $S$ and marked vertices in $Y$, Marking Schemes~\ref{mrk-cmn-nbr-in-C}~and~\ref{mrk-nonnbrs-in-C} marks some vertices in $X$. We remark that these two schemes are
similar to Marking Schemes~\ref{marking:nbrs}~and~\ref{marking:non-nbrs}
\begin{marking-scheme}\label{mrk-cmn-nbr-in-C} For a subset $A$ of $S\cup Y'$, let $M_{CX}(A)$ be the set of vertices in $X$ whose neighborhood contains $A$. For every subset $A$ of $S \cup Y'$ whose size is at most $d$, mark two vertices in $M_{CX}(A)$. 
\end{marking-scheme} 
Formally, $M_{CX}(A)=\{ x\in X |\ A\subseteq N(x) \}$. If $M_{CX}(A)$ is empty, then the marking scheme does not mark any vertex, and if it has only one vertex, then the marking scheme marks that vertex. If it has at least two vertices, then the marking scheme arbitrarily chooses two vertices and marks them.
\begin{marking-scheme}\label{mrk-nonnbrs-in-C} For a subset $A$ of $S \cup Y'$, let $M_{NX}(A)$ be the set of vertices in $X$ whose neighborhood does not intersect $A$. For every subset $A$ of $S\cup Y'$ whose size is at most $d$, mark $2k+2$ vertices in $M_{NX}(A)$.
\end{marking-scheme}
Formally, $M_{NX}(A)=\{x\in X |\ N(x)\cap A = \emptyset \}$. If the number of vertices in $M_{NX}(A)$ is at most $2k+2$, then the marking scheme marks all vertices in $M_{NX}(A)$. If the number is greater than $2k+2$, then the marking scheme arbitrarily chooses $2k+2$ vertices and marks them. 

\begin{reduction rule} \label{rr:splitcontraction-reduction} For a given instance $(G, k)$, a partition $(S, X, Y)$ of $V(G)$, and an integer $d$, 
 apply Marking Schemes~\ref{mrk-nbrs-in-I}~to~\ref{mrk-nonnbrs-in-C}. Let $G'$ be the graph obtained from $G$ by deleting unmarked vertices of $X$ and $Y$. Return the instance $(G',k)$.
\end{reduction rule} 

The number of vertices in $Y$ marked by Marking Schemes~\ref{mrk-nbrs-in-I}~and~\ref{mrk-cmn-nbrs-in-I} is at most $\calO(|S|^{d + 1})$. This implies that the total number of vertices marked by these four marking schemes is at most $\calO(|S|^{d^2 + d + 1})$. Above reduction rule can be applied in time $|S|^{d^2 + d + 1} \cdot |V(G)|^{\calO(1)} = |V(G)|^{\calO(1)}$ as $|S|$ is at most $10k$ and number of vertices in $G$ is at least $\calO(k^{d^2 + d+ 1})$.
Note that $G'$ is an induced subgraph of $G$, and hence $G' - S$ is a split graph with $(X', Y')$ as its split partition.
We first prove the following lemma which is similar to Lemma~\ref{lemma:connected-clique}.  
\begin{lemma} \label{lemma:connected-split} 
Consider instance $(G, k)$ of \textsc{Split Contraction}. 
Let $Y'$ be the set of vertices marked by Marking Schemes~\ref{mrk-nbrs-in-I}~to~\ref{mrk-nonnbrs-in-C} for some positive integer $d$. 
For any subset $Z''$ of $(X \setminus X') \cup (Y \setminus Y')$, let $G''$ be the graph obtained from $G$ by deleting $Z''$.
Then, $G''$ is connected.
\end{lemma}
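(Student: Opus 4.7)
My plan is to adapt the strategy of Lemma~\ref{lemma:connected-clique}. I would fix $V^* := S \cup X' \cup Y' = V(G) \setminus ((X \setminus X') \cup (Y \setminus Y'))$ and first establish connectivity of $G[V^*]$; then argue that keeping back any additional unmarked vertices preserves connectivity. For the latter, it suffices to verify the auxiliary claim that every unmarked vertex $u \in (X \setminus X') \cup (Y \setminus Y')$ has a neighbor in $V^*$, so that each such $u$ attaches to the already-connected $V^*$ whenever it is re-introduced.

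For connectivity of $G[V^*]$, I would take $X'$ as the natural anchor: since $X$ is a clique in $G-S$, the set $X'$ induces a clique in $G[V^*]$, hence is connected. I then show (i) every $y \in Y'$ has a neighbor in $X' \cup S$, and (ii) every $s \in S$ reaches $X' \cup Y'$ through $G[V^*]$. For (i), pick a $G$-neighbor $w$ of $y$; since $Y$ is independent in $G-S$, we have $w \in S \cup X$. If $w \in S \subseteq V^*$ we are done, otherwise $w \in X \setminus X'$, and Marking Scheme~\ref{mrk-cmn-nbr-in-C} applied with $A = \{y\} \subseteq S \cup Y'$ yields the key counting dichotomy: either $|M_{CX}(\{y\})| = 1$, in which case its sole element $w$ would have been marked (contradicting $w \notin X'$), or $|M_{CX}(\{y\})| \ge 2$, in which case Marking Scheme~\ref{mrk-cmn-nbr-in-C} marks two elements of $M_{CX}(\{y\})$ and at least one of them differs from $w$ and hence lies in $X' \cap N(y)$. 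For (ii), since $G$ is connected, every connected component of $G[S]$ contains some vertex $s^*$ with a neighbor outside $S$, and the analogous counting argument applied to Marking Scheme~\ref{mrk-nbrs-in-I} (if the neighbor of $s^*$ lies in $Y$) or Marking Scheme~\ref{mrk-cmn-nbr-in-C} (if it lies in $X$) with the singleton $A = \{s^*\}$ produces a neighbor of $s^*$ in $V^* \cap (X \cup Y)$. Internal $G[S]$-edges then propagate this reachability from $s^*$ to every other vertex of the same $S$-component.

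The auxiliary claim is handled by the same dichotomy: for unmarked $u$ pick an arbitrary $G$-neighbor $p$; either $p \in V^*$ directly, or one of the four Marking Schemes, applied to an appropriate singleton subset of $S$ (for $Y$-side marking) or of $S \cup Y'$ (for $X$-side marking), provides a marked substitute in $V^*$ adjacent to the segment of the path through $u$. Combining the connected anchor $X'$ with the two sub-claims and the auxiliary claim then upgrades connectivity of $G[V^*]$ to connectivity of $G'' = G - Z''$ for any $Z'' \subseteq (X \setminus X') \cup (Y \setminus Y')$.

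The main obstacle I foresee is ensuring that no unmarked vertex can have its entire $G$-neighbourhood contained in the unmarked set, and in particular tracking the two-tier nature of the marking: subsets of $S$ drive the $Y$-marks, and subsets of $S \cup Y'$ drive the $X$-marks, so an unmarked $y \in Y \setminus Y'$ with neighbors only in $X \setminus X'$ must be ruled out by the ``highest-degree'' preference of Marking Scheme~\ref{mrk-cmn-nbrs-in-I} together with the counting guarantees of the remaining schemes. Verifying in each case that the relevant marking set $M_\cdot(A)$ is either small enough that every element (including the would-be unmarked vertex) is marked, or large enough that a marked substitute necessarily exists, is where the bulk of the bookkeeping will lie.
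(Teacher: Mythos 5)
Your decomposition (first connect $G[V^*]$ where $V^* = S \cup X' \cup Y'$, then hang each unmarked vertex off $V^*$ via a neighbour in $V^*$) differs from the paper's, which anchors at the larger set $X \setminus Z''$ and, for each kept vertex in $S \cup (Y \setminus Z'')$, re-routes only the \emph{last} step of a $G$-path to $X$ when its endpoint is deleted. Your version is strictly more demanding: it requires every unmarked vertex to have a \emph{marked} (or $S$-) neighbour, whereas the paper only needs the final path-vertex to be replaceable.

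That more demanding auxiliary claim is where the plan breaks. Consider $y \in Y \setminus Y'$ with $N(y) \cap S = \emptyset$. The $X$-side Marking Schemes~\ref{mrk-cmn-nbr-in-C} and~\ref{mrk-nonnbrs-in-C} range over subsets of $S \cup Y'$ only, so the singleton $\{y\}$ is never fed to them, and nothing forces any $X$-neighbour of $y$ into $X'$. Meanwhile $y$ itself can be unmarked: Scheme~\ref{mrk-nbrs-in-I} never sees $y$ since $M_{NY}(A) \subseteq N(S) \cap Y$, and in Scheme~\ref{mrk-cmn-nbrs-in-I} the only $A \subseteq S$ with $A \subseteq N(y)$ is $A = \emptyset$, which marks a single highest-degree winner that need not be $y$. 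So a kept-but-unmarked $y$ with $N(y) \subseteq X \setminus X'$ is possible, and for a $Z''$ that deletes all of $N(y)$ but retains $y$, the auxiliary claim fails. You correctly flag the obstacle (``no unmarked vertex can have its entire $G$-neighbourhood contained in the unmarked set''), but the remedies you offer — the highest-degree preference of Scheme~\ref{mrk-cmn-nbrs-in-I} and the counting guarantees of the other schemes — protect only the single $M_{CY}(\emptyset)$-winner and the $X$-neighbourhoods of vertices that \emph{are} marked; they give no hold on the $X$-neighbours of an unmarked $y$. This is precisely the delicate case a complete proof must confront (the paper's argument likewise invokes Scheme~\ref{mrk-cmn-nbr-in-C} on the singleton of the penultimate path-vertex $w$, which is only licensed when $w \in S \cup Y'$), and the plan as written does not close it.
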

\begin{proof}
Recall that, by our assumption, $G$ is connected and $X$ is a clique in $G$.
Hence, for every vertex in $S \cup Y$, there exists a path from it to some vertex in $X$.
By the construction of $G''$, $(S, X \setminus Z'', Y \setminus Z'')$ forms a partition of $V(G'')$ and $X \setminus Z''$ is a clique in $G''$.
To prove that $G''$ is connected, it is sufficient to prove that for every vertex in $S \cup (Y \setminus Z'')$, there exists a path from it to a vertex in $X \setminus Z''$ in $G$.

We first prove that every vertex in $Y \setminus Z''$ has a path from it to a vertex in $X \setminus Z''$.
Fix an arbitrary vertex, say $y$, in $Y \setminus Z''$.
Consider a path $P$ from $y$ to $x$ in $G$, where $x$ is some vertex in $X$. 
Without loss of generality, we can assume that $x$ is the only vertex in $V(P) \cap X$.
We argue that there exists a path, say $P_1$, from $y$ to a vertex in $X \setminus Z''$.
If $x$ is in $X \setminus Z''$ then $P_1 = P$ is a desired path. 
Consider the case when $x$ is in $Z''$.
Let $w$ be the vertex in $V(P)$ which is adjacent with $x$.
Note that $w$ is either in $S$ or in $Y$. 
It may be the same as $y$.
As Marking Scheme~\ref{mrk-cmn-nbr-in-C} considered all subsets of size at most $d$ in $S \cup (Y \setminus Z'')$, it considered the singleton set $\{w\}$.
As $w$ is adjacent with $x$, we have $\{w\} \subseteq N(x)$.
Since $x$ is in $Z''$, and hence unmarked, there exists a vertex, say $x_1$, in $X$ which has been marked by Marking Scheme~\ref{mrk-cmn-nbr-in-C}.
Consider the path $P_1$ obtained from $P$ by deleting vertex $x$ (and hence edge $wx$) and adding vertex $x_1$ with edge $wx_1$. 
This is a desired path from $y$ to a vertex in $X \setminus Z''$.
As $y$ is an arbitrary vertex in $Y \setminus Z''$, this statement is true for any vertex in $Y \setminus Z''$ and
hence $G''$ is connected.

Now, it is sufficient to argue that for every vertex in $S$, there exists a path from it to a vertex in $(X \cup Y) \setminus Z''$. 
Fix an arbitrary vertex, say $s$, in $S$.
Consider a path $P$ from $s$ to $u$ in $G$, where $u$ is some vertex in $X \cup Y $. 
As $G$ is connected, such a path exists. 
Without loss of generality, we can assume that $u$ is the only vertex in $V(P) \cap (X \cup Y)$.
We argue that there exists a path, say $P_1$, from $s$ to a vertex in $(X \cup Y) \setminus Z''$.
If $u$ is in $(X \cup Y) \setminus Z''$ then $P_1 = P$ is a desired path. 
Consider the case when $u$ is in $Z''$.
Let $s_0$ be the vertex in $V(P)$ which is adjacent with $u$.
Note that $s_0$ is in $S$ and it may be the same as $s$.
As Marking Scheme~\ref{mrk-nbrs-in-I} and \ref{mrk-cmn-nbr-in-C} considered all subsets of size at most $d$ in $S$, it considered the singleton set $\{s_0\}$.
As $s_0$ is adjacent with $u$, we have $\{s_0\} \subseteq N(u)$.
Since $u$ is in $Z''$, and hence unmarked, there exists a vertex, say $u_1$, in $(X \cup Y) \setminus Z''$ which has been marked by Marking Scheme~\ref{mrk-nbrs-in-I} or \ref{mrk-cmn-nbr-in-C}.
Consider a path $P_1$ obtained from $P$ by deleting vertex $u$ (and hence edge $s_0u$) and adding the vertex $u_1$ with edge $s_0u_1$. 
This is a desired path from $s$ to a vertex in $(X \cup Y) \setminus Z''$.
As $s$ is an arbitrary vertex in $S$, this statement is true for any vertex in $S$.

Hence, there exists a path from every vertex in $S \cup (Y \setminus Z'')$ to a vertex in $X \setminus Z''$ in $G''$.
As $X \setminus Z''$ is a clique in $G''$, we can conclude that $G''$ is connected.
\end{proof}

As in case of \textsc{Clique Contraction}, we will iteratively remove vertices from $(X \setminus X') \cup (Y\setminus Y')$, and Lemma~\ref{lemma:connected-split} ensures that the graph at each step remains connected. 

To avoid corner cases, we need to ensure that whenever $X \setminus X'$ is non-empty, there is at least one witness set, which contains a vertex in $X'$. We ensure that by marking a few additional vertices in $X$.
\begin{remark1}\label{remark-extra-coloring} Mark any $2k + 2$ vertices in $X$.
\end{remark1}
Note that we can not infer anything about the adjacency of these vertices with vertices in $S \cup Y$. We use these vertices only to add certain edges, which are entirely contained in $X$.

In Lemma~\ref{lem-numerator-SC}, we argue that given a solution for $(G', k)$, we can construct a solution of almost the same size for $(G, k)$.

\begin{figure}[t]
 \centering \includegraphics[scale=0.60]{./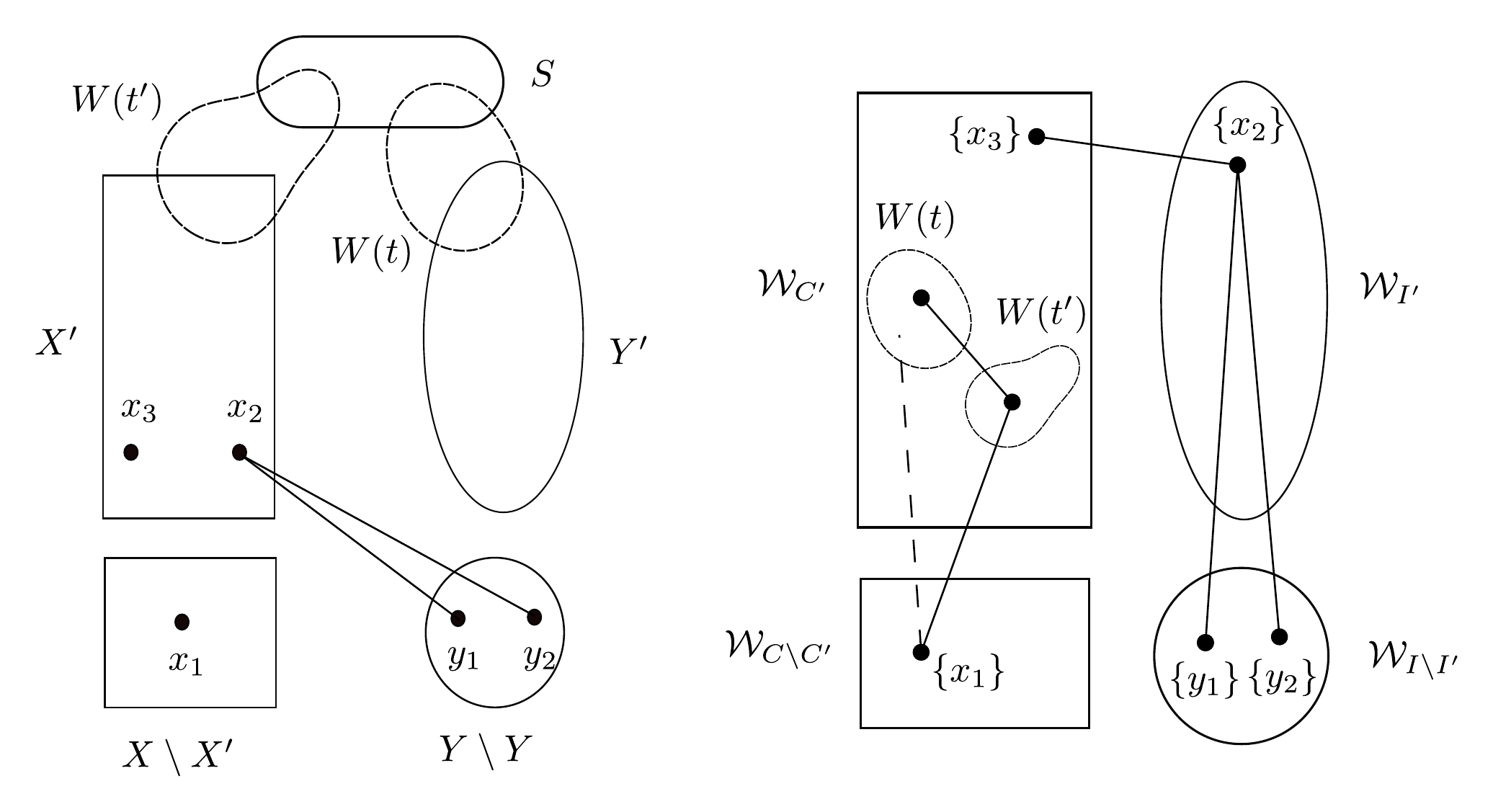}
 \caption{Left hand side figure shows partition $(S, X, Y)$ of $G$ where as right hand side figure shows witness sets in $G$ and $G'$. We need to modify witness sets in $\calW_{I}$ which are adjancent with each other (e.g. $\{y_1\}$ and $\{x_2\}$) and witness sets in $\calW_C$ which are not adjancent with each other (e.g. $W(t)$ and $\{x_1\}$). Please refer to Lemma~\ref{lem-numerator-SC}. \label{fig:solution-lifting-split}}
\end{figure}

\begin{lemma} \label{lem-numerator-SC} Let $(G',k)$ be the instance returned by Reduction Rule \ref{rr:splitcontraction-reduction} when applied on the instance $(G,k)$. If there exists a set of edges of size at most $k$, say $F'$, such that $G'/F'$ is a split graph then there exists a set of edges $F$ such that $G/F$ is a split graph and cardinality of $F$ is at most $\alpha \cdot |F'|+1$.
\end{lemma}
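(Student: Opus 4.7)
The plan is to mirror the solution-lifting argument of Lemma~\ref{lemma:solution-lifting} for \textsc{Clique Contraction}, while coping with the extra subtlety that the target graph is split rather than complete. Given $F'$ with $G'/F'$-witness structure $\mathcal{W}'$, I would first invoke Observation~\ref{obs:bags_in_clique} to assume without loss of generality a split partition $(C',I')$ of $G'/F'$ in which every non-trivial witness set of $\mathcal{W}'$ lies on the clique side $C'$. Combining the extra coloring of Remark~\ref{remark-extra-coloring} with Observation~\ref{obs:witness-structure-property}, I would extract an \emph{anchor} singleton $\{x^\star\}\in \mathcal{W}'$ with $x^\star\in X'$ lying in $C'$: among the $2k+2$ vertices of $X$ marked by the Remark, at most $2k$ belong to non-trivial witness sets, so at least two are singletons of $\mathcal{W}'$, and because $X$ is a clique they are pairwise adjacent, forcing at least one of them into $C'$. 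I would then build $\mathcal{W}$ from $\mathcal{W}'$ by adjoining $\{x\}$ for each $x\in X\setminus X'$ (tentatively on the clique side) and $\{y\}$ for each $y\in Y\setminus Y'$ (tentatively on the independent side).

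The tentative partition can be violated in two ways. Type~A: some $W\in C'$ is non-adjacent in $G$ to a new singleton $\{x\}$, $x\in X\setminus X'$. Since $X$ is a clique, $W\cap X = \emptyset$, so $W\subseteq S\cup Y'$; if $|W|\le d$ then $x\in M_{NX}(W)$, and Marking Scheme~\ref{mrk-nonnbrs-in-C} marks $2k+2$ vertices of $X$ whose neighborhood avoids $W$, at least two of which are singletons of $\mathcal{W}'$ by Observation~\ref{obs:witness-structure-property}. Two $X$-singletons cannot both lie in $I'$ (they are adjacent through the clique $X$), so one sits in $C'$ and contradicts cliquiness of $C'$ against $W$. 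Hence every Type~A violator has $|W|\ge d+1$, and I fix it by merging $W$ with $\{x^\star\}$ through the single edge from $W$ to $x^\star$ guaranteed by $C'$ being a clique; the resulting witness set is adjacent to every $x\in X\setminus X'$ via $x^\star$. Type~B: a singleton $\{v\}\in\mathcal{W}'$ in $I'$ is adjacent in $G$ to some $y\in Y\setminus Y'$. Because $Y$ is independent we must have $v\in S\cup X'$, and in either case $v$ has an edge to some $C'$-side witness set $W_v$ (the split property of $G'/F'$ handles $v\in S$, while for $v\in X'$ we can take $W_v\ni x^\star$). The fix is to merge $\{v\}$ into $W_v$ using one extra edge, moving $v$ to the clique side and removing the offending independent-side adjacency.

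For the ratio $(d+1)/d\le \alpha$ to go through, each extra edge must be charged to a distinct witness set of $\mathcal{W}'$ of size at least $d+1$, which already contributes $\ge d$ edges to $F'$ through its spanning tree. Type~A violators are of this form by construction. For Type~B I would combine Marking Scheme~\ref{mrk-nbrs-in-I} --- which, since $y$ is unmarked, forces $k+2$ marked neighbors of $v$ to exist in $Y'$ --- with the observation that all these neighbors lie in $C'$-side witness sets (adjacency to $\{v\}\in I'$ plus the split property of $G'/F'$), noting additionally that at most one $Y'$-vertex can be a singleton in $C'$ since two $Y$-singletons in $C'$ would be non-adjacent. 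Hence at least $k+1$ of these marked neighbors lie in non-trivial $C'$-side witness sets incident to $v$. A careful use of Marking Scheme~\ref{mrk-cmn-nbrs-in-I}, whose preference for a highest-degree representative of $M_{CY}$, then rules out small $W_v$: if $|W_v|\le d$, the representative chosen by that marking scheme for $W_v\cap S$ would be a marked substitute for $y$, contradicting $y\in Y\setminus Y'$. Combining both cases, the total number of extra edges is at most $|F'|/d$, and with one extra edge used to first couple the anchor into the extended witness structure we obtain $|F|\le (d+1)|F'|/d + 1\le \alpha|F'|+1$. The hard part will be making this last size lower bound on $W_v$ fully rigorous: the clean marking-scheme contradiction of Type~A does not transfer verbatim, and the highest-degree preference built into Marking Scheme~\ref{mrk-cmn-nbrs-in-I} is the subtle ingredient that must be orchestrated carefully to secure the amortization and hence the $(2+\epsilon)$-approximation.
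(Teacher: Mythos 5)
Your outline is on the right track for condition~$(i)$ (your Type~A): anchoring on a singleton in $X'$ that lies in $C'$, using Marking Scheme~\ref{mrk-nonnbrs-in-C} to force any violator $W\subseteq S\cup Y'$ to have $\ge d+1$ vertices, and merging each such $W$ into the anchor's witness set is exactly the paper's argument, and the $(d+1)/d\le\alpha$ amortization against those spanning-tree edges is sound. The problems are all in Type~B, and they are real.

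First, for the case $v\in X'$ you propose to ``merge $\{v\}$ into $W_v$ using one extra edge'' but never argue that there is at most one such $v$. The ``$+1$'' in the lemma statement is grounded precisely in the fact that $X'$ is a clique in $G'$, so at most one $X'$-vertex can sit in an independent-side witness set; without observing this, the additive cost is not $1$ but $|X'\cap I'|$, and your final bound $|F|\le\alpha|F'|+1$ is not justified. Second, and more importantly, for $v\in S$ you have assembled exactly the ingredients needed but then stop one step short of the conclusion and head in the wrong direction. From ``$k+2$ marked neighbors of $v$ in $Y'$,'' ``all of them in $C'$-side witness sets,'' ``at most one can be a singleton in $C'$,'' you derive that at least $k+1$ of these marked $Y'$-vertices lie in non-trivial $C'$-side witness sets. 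But since $Y'$ is independent, at most $k$ vertices of $Y'$ can be incident to edges of $F'$ (each edge hits $Y'$ at most once), so at most $k$ can lie in non-trivial witness sets. You have already proved $k+1\le k$, a contradiction: the $v\in S$ case is \emph{impossible}, there is nothing to fix and nothing to charge. Instead you try to use Marking Scheme~\ref{mrk-cmn-nbrs-in-I} and its highest-degree tie-breaking to bound $|W_v|$ and amortize the merge; that scheme and its degree preference play no role in this lemma (they are used in the other direction, Lemma~\ref{lem-denominator-SC}), and the proposed amortization has no valid basis because a Type~B violator is not attached to a large spanning tree in $F'$. In short: close the contradiction for $v\in S$ rather than trying to charge it, and explicitly bound the number of $v\in X'$ violators by one; those two corrections give the paper's proof.
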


\begin{proof}
If no vertex in $X \cup Y$ has been deleted, then $G'$ and $G$ are identical graphs, and the statement is true.
We assume that at least one vertex from $X \cup Y$ has been deleted.
Recall that $X', Y'$ is the set of vertices in $X, Y$, respectively, that have been marked.
It is easy to see that $(S, X', Y')$ is a partition of $V(G')$ such that $G' - S$ is a split graph with $(X', Y')$ as one of its split partition.

Let $\calW'$ be a $G'/F'$-witness structure of $G'$.
By Lemma~\ref{lemma:connected-split}, $G'$ is connected and without loss of generality we assume that edges $F'$ satisfies the three properties mentioned in Observation~\ref{obs:bags_in_clique}.
Let $(C', I')$ be the split partition of $G'/F'$ mentioned in the third property in Observation~\ref{obs:bags_in_clique}.
 Let $\calW_{C'}$ (resp. $\calW_{I'}$) be the collection of witness sets in $\calW'$ which correspond to vertices in $C'$ (resp. $I'$). We intentionally name a subset of $\calW'$ with $\calW_{C'}$ (instead of $\calW'_{C'}$) as it simplifies notations in remaining proof. Note that any two witness sets in $\calW_{C'} $ are adjacent with each other in $G'$ and no two witness sets in $\calW_{I'}$ are adjacent with each other in $G'$.
 By Observation~\ref{obs:bags_in_clique}, all non-trivial witness sets in $\calW'$ are contained in $\calW_{C'}$. 
  
 We start constructing witness structure $\calW$ and set a of edges $F$ of $G$ from $\calW'$ and $F'$ as follows. For every vertex $u$ in $(X\setminus X') \cup (Y\setminus Y')$, add singleton witness sets $\{u\}$ to $\calW'$. Initialize $F$ to $F'$. Let $\calW_{C \setminus C'}$ be the set of newly added singleton witness sets that correspond to the vertices in $X \setminus X'$ and let $\calW_{I \setminus I'}$ be the set of newly added singleton witness sets that correspond to the vertices in $Y \setminus Y'$.

 In the remaining proof, we argue that we can carefully add some edges in $F'$ such that following two conditions are satisfied. $(i)$ Any two witness sets in $\calW_{C'} \cup \calW_{C \setminus C'}$ are adjacent with each other in $G$
 $(ii)$ No two witness sets in $\calW_{I'} \cup \calW_{I \setminus I'}$ are adjacent with each other in $G$.
 To ensure condition $(i)$, we might have to add one extra edge for $d$ edges present in $F'$. {\bf This addition of edges introduces the multiplicative factor of $\alpha$ in the upper bound for the size of $F$ in terms of $F'$. To ensure condition $(ii)$, we might have to contract an edge outside $F'$. This brings an additive factor of one. }

 Let $\calW^*$ be the collection of witness sets in $\calW_{C'}$ which violates Condition~$(i)$. In other words, $\calW^*$ is the collection of witness set $W(t)$ in $\calW_{C'}$ such that there exists a (singleton) witness set $\{x\}$ in $\calW_{C \setminus C'}$ which is not adjacent with $W(t)$. See Figure~\ref{fig:solution-lifting-split}. (For example, here witness set $\{x_1\}$ is not adjacent with $W(t)$.) We argue that every witness set in $\calW^*$ has at least $d + 1$ vertices. For the sake of contradiction, assume that there exists a witness set $W(t)$ in $\calW^*$ which contains at most $d$ vertices. Let $\{x_1\}$ be a singleton set in $\calW_{C \setminus C'}$ which is not adjacent with $W(t)$. Since $G[X]$ induces a clique in $G$, witness set $W(t)$ is contained in $S \cup Y'$.
Since Marking Scheme~\ref{mrk-nonnbrs-in-C} iterated over all sets of size at most $d$, it also considered $W(t)$ while marking. Note that $x_1$ is contained in $M_{NX}(W(t))$, a set of non-neighbors of $W(t)$ in $X$. As $x_1$ is unmarked, there are $2k + 2$ vertices in $M_{NX}(W(t))$ that have been marked. All these marked vertices are in $G'$. Since the cardinality of $F'$ is at most $k$, the number of vertices in $V(F')$ is at most $2k$. This implies that at least two marked vertices in $M_{NX}(W(t))$ remain as singleton witness set. Since these two witness sets are adjacent to each other, at least one of these sets in contained in $\calW_{C'}$. This contradicts the fact that any two witness sets in $\calW_{C'}$ are adjacent to each other. Hence our assumption is wrong and $W(t)$ has at least $d+1$ vertices. 

Fix a witness set, say $W(t')$, in $\calW_{C'}$ which intersects with $X'$. By Remark~\ref{remark-extra-coloring}, $X'$ is non-empty and such set exists. We note that $W(t')$ is adjacent with every (singleton) witness set in $\calW_{C \setminus C'}$. For every witness set $W(t)$ in $\calW^*$, we add an edge between $W(t)$ and $W(t')$ to the set $F'$. 
Equivalently, we construct a new witness set by taking the union of $W(t')$ and all the witness sets $W(t)$ in $\calW^*$.

We now argue the size bound of $F$. Note that we have added one extra edge for every witness set $W(t)$ in $\calW^*$. As every witness set of $\calW^*$ has at least $d+1$ vertices, we have added one extra edge for at least $d$ edges in the solution $F$. Moreover, since witness sets in $\calW^*$ are vertex disjoint, no edge in $F'$ can be part of two witness sets. This implies that the number of edges in $F$ is at most $\frac{d+1}{d}|F'| \leq \alpha \cdot |F'|$. 
 
We now consider Condition~$(ii)$. Let $\calW^*$ be the collection of witness sets in $\calW_{I'}$ which violates Condition~$(ii)$. In other words, $\calW^*$ is the collection of witness set $W(t)$ in $\calW_{I'}$ such that there exists a (singleton) witness set $\{y\}$ in $\calW_{I \setminus I'}$ which is adjacent with $W(t)$. 
Since $Y$ is an independent set in $G$, any witness set $W(t)$ in $\calW^*$ intersects with either $X'$ or $S$. We consider two cases depending on whether $W(t)$ intersects $X'$ or not. We argue that in the first case, we can add an extra edge to $F'$ and avoid all such cases, while the second case can not occur.

Consider a witness set $W(t)$ in $I'$ which intersects with $X'$. Let $a$ be the vertex in $I'$ and $W_a = W(t)$ be the witness set corresponding to it. By Remark~\ref{remark-extra-coloring}, there exists vertex $b$ in $G'/F'$ which is adjacent to $a$.
Let $W_b$ be the witness set in $\calW'$ corresponding to $b$. 
Remove $W_b, W_a$ from $\calW'$ and add $W_a \cup W_b$ as a witness set to $\calW'$ (or more specifically to $\calW_{C'}$). Since $X'$ is a clique, at most one vertex from $X'$ can be part of witness set in $\calW_{I'}$. Hence there is at most one such vertex in $I'$. Since edge across $W_a, W_b$ is not in $F'$, this operation adds one extra edge in $F'$. (For example, edge $x_2x_3$ in Figure~\ref{fig:solution-lifting-split}). Hence, in order to make sure that no witness set violates Condition $(i)$ and $(ii)$, we have added edges to $F'$ to obtain $F$ such that the size is at most $\alpha |F'| + 1$.

Now, consider the second case. 
Assume that there exists a witness $W(t)$ in $\calW^*$ which does not intersect with $X'$.
This implies that $W(t)$ is contained in $S$.
Let ${y_1}$ be a singleton witness set in $\calW_{I \setminus I'}$ which is adjacent with $W(t)$.
By Observation~\ref{obs:bags_in_clique}, we know that $W(t)$ is a singleton witness set and is contained in $S$. Hence set $A = W(t)$ has been considered by Marking Scheme~\ref{mrk-nbrs-in-I}.
Note that $y_1$ is contained in $M_{NY}(A)$, a set of neighbors of $A$ in $I$.
Since $y_1$ is unmarked, there are $k + 2$ vertices in $M_{NY}(A)$ that have been marked.
All these marked vertices are in $G'$. 
Since $Y'$ is an independent set in $G'$, at most $k$ vertices in $Y'$ can be incident on solution edges.
Only these vertices can be part of $\calW_{C'}$.
There can be at most one vertex which is a singleton witness set in $\calW_{C'}$.
Hence there exists at least one singleton witness set in $\calW_{I'}$ which is adjacent with $W(t)$.
This contradicts the fact that no two witness sets in $\calW_{I'}$ are adjacent to each other.
Hence our assumption is wrong, and no such witness structure exists in $\calW^*$.
This concludes the proof of the lemma. 
\end{proof}

In the following lemma, we argue that the value of the optimum solution for reduced instance can be upper bounded by the value of the optimum solution for the original instance. 

\begin{lemma} \label{lem-denominator-SC} Let $(G',k)$ be the instance returned by Reduction Rule \ref{rr:splitcontraction-reduction} when applied on an instance $(G,k)$. If $\OPT(G,k) \leq k$ then $\OPT(G',k) \leq 2\cdot \OPT(G,k)$.
\end{lemma}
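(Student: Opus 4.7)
The plan is to take an optimum solution $F$ for $(G,k)$ with $|F|\le k$, normalise it via Observation~\ref{obs:bags_in_clique} so that for some split partition $(C,I)$ of $G/F$ every non-trivial witness set of the associated witness structure $\calW$ lies on the clique side, and then transform $F$ into an edge set $F''\subseteq E(G')$ with $G'/F''$ a split graph and $|F''|\le 2|F|$. Since $\OPT(G,k)\le k$ also means $\SpC(G,k,F)=|F|$, this immediately gives the desired bound $\OPT(G',k)\le 2\OPT(G,k)$.

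First I would walk through $\calW$ and dispose of the easy cases: any witness set entirely contained in $(X\setminus X')\cup(Y\setminus Y')$, and any singleton independent-side witness set corresponding to a deleted vertex, is simply dropped. Neither operation increases the edge count, and neither can introduce an induced $2K_2$, $C_4$ or $C_5$ into the quotient because vertex deletion is monotone on edges.

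Second, for every non-trivial clique-side witness set $W$ that meets $(X\setminus X')\cup(Y\setminus Y')$, I would construct a replacement $W''\subseteq V(G')$ by swapping every deleted vertex $u\in W$ for a marked substitute $u''\in V(G')$. The appropriate substitute is supplied by Marking Scheme~\ref{mrk-cmn-nbr-in-C} when $u\in X\setminus X'$ (which hands us a vertex of $X'$ sharing the adjacencies of $u$ into every small subset of $S\cup Y'$), and by Marking Schemes~\ref{mrk-nbrs-in-I}--\ref{mrk-cmn-nbrs-in-I} when $u\in Y\setminus Y'$; in either case a spanning tree of $G'[W'']$ realises the contractions. Setting $F''$ to be the union of these new spanning trees and the edges of $F$ that already lie in $E(G')$, the count is controlled by the identity $|F|=|V(F)|-(\text{number of non-trivial sets})$ together with $|V(F)|\le 2|F|$: each substituted vertex contributes at most one extra tree edge and each discarded witness set saves at least one edge of $F$, and an amortised charging of substitutes against original endpoints yields $|F''|\le 2|F|$.

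The main obstacle will be verifying that $G'/F''$ really is a split graph. On the independent side no new edges can appear because deletion is monotone, so the task is to rule out two (possibly modified) clique-side witness sets becoming non-adjacent in $G'/F''$; the hazardous case is when every edge that witnessed their adjacency in $G$ used a vertex of $(X\setminus X')\cup(Y\setminus Y')$. Here I would mirror the $\calW^*$-analysis of Lemma~\ref{lem-numerator-SC}: such a witness set $W$ would necessarily have size at most $d$, and then Marking Scheme~\ref{mrk-nonnbrs-in-C} guarantees that $2k+2$ non-neighbours of $W$ survive in $G'$, contradicting $|V(F)|\le 2k$. A symmetric invocation of Marking Scheme~\ref{mrk-nbrs-in-I} certifies that no unwanted adjacency appears among the independent-side witness sets, completing the verification.
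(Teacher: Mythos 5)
Your high-level plan (normalise via Observation~\ref{obs:bags_in_clique}, discard witness sets living entirely in the deleted region, substitute marked vertices for unmarked ones) is the same skeleton the paper uses, but the proposal has three genuine gaps, and the third is fatal.

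\emph{One-for-one swapping does not work.} You replace a deleted vertex $u$ in a witness set $W$ by a single marked substitute $u''$, justified by Marking Scheme~\ref{mrk-cmn-nbr-in-C} handing you a vertex of $X'$ sharing the adjacencies of $u$ into every \emph{small} subset of $S\cup Y'$. But the marking schemes only quantify over subsets of size at most $d$. A deleted vertex $x\in X\setminus X'$ may be joined by $F$-edges to a set $S_t\subseteq W\cap S$ of size much larger than $d$, and then no single marked vertex of $X'$ is guaranteed to be adjacent to all of $S_t$. The paper's proof does not swap $x$ for one substitute: it picks a separate marked substitute $x^1_s$ for \emph{each} $s\in S_t$ (each coming from $M_{CX}(\{s\})$), rewires the $F$-edge $sx$ to $sx^1_s$, and glues the fragments together via a hub $x^1_{s_0}$ with extra clique edges $x^1_sx^1_{s_0}$. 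The resulting witness set is the union of $W\setminus\{x\}$ with the original witness sets $W(x^1_s)$, so the construction is a merge, not a swap; without the merge, if $u''$ already sits in another witness set the family $\calW$ stops being a partition of $V(G')$ and $F''$ need not contract to a split graph at all.

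\emph{The verification argument is borrowed from the wrong direction.} You propose to rule out a lost clique-side adjacency by ``mirroring the $\calW^*$-analysis of Lemma~\ref{lem-numerator-SC}'': any problematic set would have size at most $d$, and then Marking Scheme~\ref{mrk-nonnbrs-in-C} yields a contradiction with $|V(F)|\le 2k$. But nothing in this direction forces a problematic witness set to have size at most $d$; that size bound is specific to $\calW^*$ in Lemma~\ref{lem-numerator-SC}, where the sets in question are contained in $S\cup Y'$ and are small precisely because a large set would have been captured by the marking. Here we are projecting a solution of $G$ down to $G'$, and the paper's argument is not a contradiction argument at all: adjacency is preserved \emph{constructively}, because the merged witness set contains $W(x^1_{s_0})$ (or the fixed hub $W(t^\#)$ drawn from the $2k+2$ extra marked vertices of Remark~\ref{remark-extra-coloring}), and that set was already adjacent to every other clique-side witness set.

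\emph{The counting is not right.} You claim each substituted vertex contributes at most one extra tree edge and then invoke $|F|=|V(F)|-(\#\text{non-trivial sets})$. In fact the paper's rewiring replaces a single $F$-edge $sx$ (with $s\in S_t$, $s\neq s_0$) by \emph{two} new edges $sx^1_s$ and $x^1_sx^1_{s_0}$, so the correct invariant is a per-edge bound: every $F$-edge incident to a deleted vertex is replaced by at most two edges of $E(G')$, and once replaced it is never touched again (its new endpoints lie in $G'$). That is where the factor $2$ in $\OPT(G',k)\le 2\cdot\OPT(G,k)$ comes from; a per-vertex ``one extra tree edge'' charging does not yield it. You would also need the explicit case split on how $W$ intersects $(S,X,Y)$ that the paper works through (Conditions~\ref{cond:part_010} to~\ref{cond:part_111}), since the relevant marking scheme (\ref{mrk-cmn-nbr-in-C} for $X\setminus X'$ versus \ref{mrk-cmn-nbrs-in-I} for $Y\setminus Y'$) and the connectivity repair differ between the cases.
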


\begin{proof}
 Let $F$ be a set of at most $k$ edges in $G$ such that $\OPT(G,k)=\SpC(G,k,F)$. Since we are working with a minimization problem, to prove the lemma, it is sufficient to find a solution for $G'$, which is of size at most $2 \cdot |F|$. Recall that $(S,X,Y)$ is a partition of $V(G)$ such that $G-S$ is a split graph with $(X, Y)$ as split partition where $X$ is a clique, and $Y$ is an independent set. 
 The set of vertices that have been marked in $Y$ is denoted by $Y'$, and the set of vertices that have been marked in $X$ is denoted by $X'$.
By our assumption, the input graph $G$ is connected. Without loss of generality, we can assume that $F$ satisfies three properties mentioned in Observation~\ref{obs:bags_in_clique}.
Let $(C, I)$ be the split partition of $G/F$ mentioned in the third property in Observation~\ref{obs:bags_in_clique}.
Let $\calW$ be a $G/F$-witness structure of $G$. Let $\calW_{C}$ (resp. $\calW_{I}$) be the collection of witness sets in $\calW$ which correspond to vertices in $C$ (resp. $I$).
Note that any two witness sets in $\calW_{C} $ are adjacent with each other in $G$, and no two witness sets in $\calW_{I}$ are adjacent with each other in $G$.
By Observation~\ref{obs:bags_in_clique}, all non-trivial witness sets in $\calW$ are contained in $\calW_{C}$. 
  
At each step, we construct the graph $G^*$ from $G$ by deleting one or more vertices in $(X\setminus X')\cup (Y\setminus Y')$.
By Lemma~\ref{lemma:connected-split}, $G^*$ is connected.
{\bf We also construct a set of edges $F^*$ from $F$ by replacing existing edges and/or adding extra edges to $F$.}
In terms of witness sets, we delete witness sets that are subsets of $(X\setminus X')\cup (Y\setminus Y')$ and modify the ones that intersect $(X\setminus X')\cup (Y\setminus Y')$.
After the modification, we represent the witness sets corresponding to vertices in the clique as $\calW^*_{C}$ and the independent set as $\calW^*_{I}$.
At any point, we ensure that any two witness sets in  
$\calW^*_{C}$ are adjacent to each other, and any two witness sets in $\calW^*_{I}$ are not adjacent to each other.
This implies that at any intermediate state, $G^*/F^*$ is a split graph.
We modify $F$ to obtain $F^*$ such that the number of edges in $F^*$ is at most $2 \cdot |F^{\circ}|$, where $F^{\circ}=F$ be an optimum solution for the original instance $(G, k)$.
For notational convenience, we rename $G^*$ to $G$ and $F^*$ to $F$ at regular intervals but do not change $F^{\circ}$. 

Since the class of split graphs is closed under vertex deletion, we can delete all witness sets, which are entirely contained in $(X\setminus X')\cup (Y\setminus Y')$. Suppose that there exists a witness set $W(t)$ in $\calW$ which is a subset of $(X\setminus X')\cup (Y\setminus Y')$, construct $G^*$ from $G$ by deleting witness set $W(t)$ in $\calW$. Delete the edges corresponding to spanning tree of $G[W(t)]$ from $F$ to obtain $F^*$.  
We repeat this process until there exists a witness set that satisfies this condition. After exhaustively applying this process, we have $|F^*|\leq |F|$.

\smallskip

\begin{tcolorbox}[colback=red!5!white,colframe=red!75!black]
\centerline {At this stage we rename $G^\star$ to $G$ and $F^\star$ to $F$.}
\end{tcolorbox}

Note that at this stage, there is no witness set in $\calW_{I}$ which contains vertex in $(X\setminus X')\cup (Y\setminus Y')$. Hence, we do not need to modify witness sets in $\calW_{I}$. In all the conditions mentioned below, the modification is done on non-trivial witness sets in $\calW_{C}$ only. These modifications do not affect the independent property of witness sets in $\calW_{I}$. So, to prove that the modified witness structure obtained corresponds to a split graph, it is enough to show that the witness sets in $\calW^*_C$ are connected, and any two of them are adjacent to each other.

We partition all non-trivial witness sets in $\calW_C$ with respect to their intersection with sets $S, X, Y$. For a non-trivial witness set $W(t)$, we denote its intersection with $S, X, Y$, respectively, using an ordered tuple $(i; j; k)$ where integers $i, j, k$ take $0$ or $1$. If $W(t)$ intersects with $Z$ in $\{S, X, Y\}$, then we assign the corresponding integer to $1$ and $0$ otherwise. Since $(S, X, Y)$ is a partition of $V(G)$, witness sets in $\calW_C$ can be partitioned into seven parts (excluding the trivial $(0; 0; 0)$ case). Note that since $Y$ is an independent set in $G$, no non-trivial witness set can contain only vertices in $Y$. This implies that there is no non-trivial witness set in partition of $\calW_C$ corresponding to $(0; 0; 1)$. Consider a witness set that is entirely contained in $S$. Since we do not delete any vertex in $S$ while creating $G^*$ from $G$, this witness set remains unchanged throughout the process. Hence we do not consider witness sets in partition corresponding to $(1; 0; 0)$.

This implies we only need to modify witness sets in five partitions of $\calW_C$. Each of these partitions can be further divided into subparts based on whether witness set intersects with $X \setminus X'$ or $Y \setminus Y'$ or both or none. We note that any witness set which does not intersect with $(X \setminus X') \cup (Y \setminus Y')$ is not affected; hence we only need to consider first three cases. 
We modify a witness set that satisfies the least indexed condition. If there does not exist any witness set which satisfies either of these conditions, then $(X\setminus X')\cup (Y\setminus Y')$ is an empty set, and the lemma is vacuously true. Hence in every case, we assume that witness set intersects $(X\setminus X')\cup (Y\setminus Y')$,  and it is not entirely contained inside it.

\begin{condition}[ Partition $(0; 1; 0)$] \label{cond:part_010}
 There exists a witness set, say $W(t)$, which intersects with $X$ but does not intersect with $S$ or $Y$.
\end{condition}
\vspace{0.1cm}

Since $W(t)$ is contained in set $X$, it intersects with $X \setminus X'$ but is not contained in it.
Hence both the sets $W(t) \cap X'$ and $W(t)\cap (X \setminus X')$ are non empty. Let $x_1$ be a vertex in $W(t) \cap (X\setminus X')$.  
Fix a witness set, say $W(t^\#)$, in $\calW_{C}$ which is different from $W(t)$ and intersects with $X'$. By Remark~\ref{remark-extra-coloring}, $X'$ is of size at least $2k + 2$ and hence such a set exists. Let $\calW_C^*$ be the witness set obtained from $\calW_C$ by removing $W(t), W(t^\#)$ and adding $(W(t) \setminus \{x_1\}) \cup W(t^\#)$. Note that in $\calW_C, W(t), W(t^\#)$ and $x_1$ satisfy the premise of Observation~\ref{obs:merging-witness-sets}. This implies that any two witness sets in $\calW^*_C$ are adjacent with each other.
Let $F^*$ be the set of edges obtained from $F$ by removing an edge incident on $x_1$ and adding an edge across $W(t)$ and $W(t^\#)$.
Hence, if $G^*$ is obtained from $G$ by deleting $x_1$ then $G^*/F^*$ is a split graph. Since we are deleting at least one edge from $F$ and adding only one edge, we have $|F^*| \leq |F|$.

\begin{figure}[t]
 \centering
 \includegraphics[scale=0.60]{./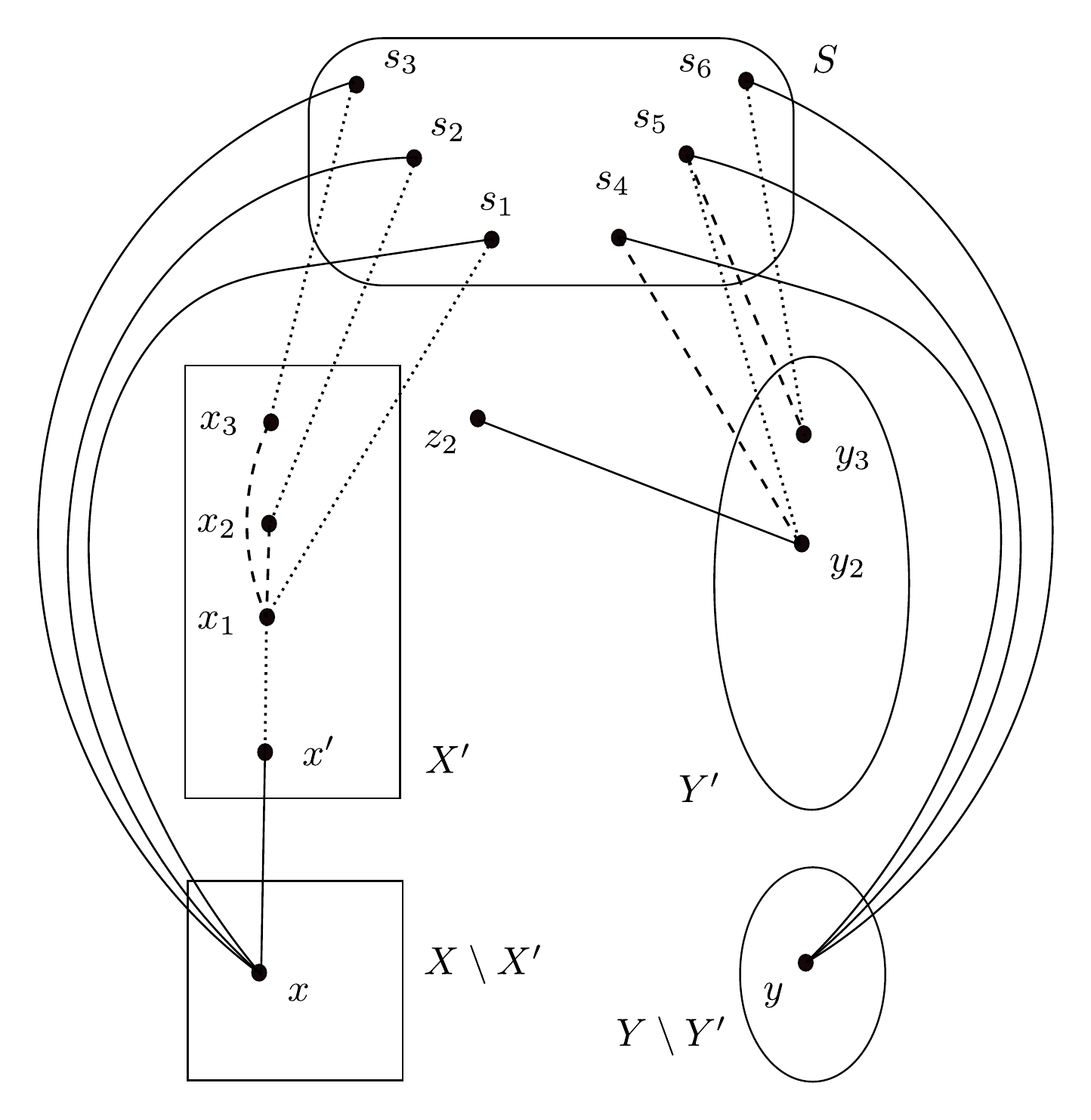}
 \caption{Straight lines (e.g. $xx'$) represent edges in original solution $F$. Dotted lines (e.g. $x_1x'$) represents an edges replaced in $F$. Dashed lines (e.g. $x_1x_2$) represents extra edges added to $F$. Please refer to Conditions~\ref{cond:part_110} and \ref{cond:part_101} in Lemma~\ref{lem-denominator-SC}. Vertex $z_2$ can be in $X$ or $S$. \label{fig:condition-2-3}}
\end{figure}

\smallskip

\begin{tcolorbox}[colback=red!5!white,colframe=red!75!black]
\centerline {At this stage we rename $G^\star$ to $G$ and $F^\star$ to $F$.}
\end{tcolorbox}

\begin{condition}[ Partition $(1; 1; 0)$] \label{cond:part_110} There exists a witness set, say $W(t)$, which intersects with $S$ and $X$ but does not intersect with $Y$.
\end{condition}
\vspace{0.1cm}
Since $W(t)$ does not intersect with $Y$, it intersects with $X \setminus X'$.
Let $x$ be a vertex in $W(t)\cap (X\setminus X')$; $S_t$ be the set of vertices in $W(t) \cap S$ which are adjacent with $x$ via edges in $F$ and $Q_t$ be the set of vertices in $W(t)\cap X$ which are adjacent with $x$ via edges of $F$.
We find a \emph{substitute} for $x$ in $X'$.
(This condition is same as that of Condition~\ref{cond:witness-intersectsX} in the proof of Lemma~\ref{lemma:reduction-rule}.)
Note that if $S_t$ is empty then $Q_t$ must be non-empty and we replace the edges incident on $x$ similar to that of in Condition \ref{cond:part_010}.
Every vertex in $S_t$ is considered by Marking Scheme~\ref{mrk-cmn-nbr-in-C}.
For every $s$ in $S_t$, vertex $x$ is contained in $M_{CX}(\{s\})$, the set of vertices in $X$ whose neighborhood contains $\{s\}$.
Since $x$ is in $X \setminus X'$, and hence unmarked, there are two vertices in $M_{CX}(\{s\})$, say $x^1_s, x^2_s$, different from $x$ which have been marked for each $s$ in $S_t$. 
Since $X$ is a clique in $G$, at most one of these two vertices are part of witness set contained in $\calW_{I}$. 
Without loss of generality, assume that for every vertex $s$ in $S_t$, vertex $x^1_s$ is contained in witness set contained in $\calW_C$.
We construct $F^*$ from $F$ by following operation: Arbitrarily fix a vertex $s_0$ in $S_t$.
Remove the edge $s_ox$ in $F$ and add the edge $s_ox^1_{s_o}$ and for every vertex $s$ in $S_t\setminus \{s_o\}$, remove the edge $sx$ in $F$ and add two edges $sx^1_s, x^1_sx^1_{s_o}$.
For every vertex $x'$ in $Q_t$ replace the edge $x'x$ by $x'x^1_{s_0}$.
(For example, in Figure~\ref{fig:condition-2-3}, let $W(t) = \{x, x', s_1, s_2, s_3\}$ and $x$ is in $X \setminus X'$.
This implies that $S_t = \{s_1, s_2, s_3\}$ and $Q_t = \{x'\}$.
Let $s_1$ be the vertex which is fixed arbitrarily.
In this process, edge $xs_3$ is deleted and two edges $x_3s_3, x_3x_1$ are added to $F$.)

We argue that contracting edges in $F^*$ merges $W(t)\setminus \{x\}$ into the witness set in $\calW^*_C$ that contains $x_s^1$.
Define $W := (W(t) \setminus \{x\}) \cup (\bigcup_{s \in S_t}W(x^1_s))$ where $W(x^1_s)$ is the witness set in $\calW$ containing the vertex $x^1_s$.
Let $\calW_C^*$ be the witness structure obtained from $\calW_C$ by removing every witness set which intersects $W$ and adding $W$.
Since $W$ contains $W(x^1_{s_0})$, it is adjacent with every other witness set in $\calW_C$ and hence in $\calW^*_C$.
This implies that any two witness sets in $\calW^*_C$ are adjacent to each other. 
Hence, if $G^*$ is obtained from $G$ by deleting $x$, then $G^*/F^*$ is a split graph.
Since we are adding at most two edges only for deleted edges in $F$, we have $|F^*| \leq 2 \cdot |F|$.

\smallskip

\begin{tcolorbox}[colback=red!5!white,colframe=red!75!black]
\centerline {At this stage we rename $G^\star$ to $G$ and $F^\star$ to $F$.}
\end{tcolorbox}

\begin{condition}[ Partition $(1; 0; 1)$] \label{cond:part_101} There exists a witness set, say $W(t)$, which intersects with $S$ and $Y$ but does not intersect with $X$.
\end{condition}
\vspace{0.1cm}

Since $W(t)$ does not intersect with $X$, it intersects with $Y \setminus Y'$. Let $y$ be a vertex in $W(t)\cap (Y\setminus Y')$; $S_t$ be the set of vertices in $W(t) \cap S$ which are adjacent with $y$ via edges in $F$. After removing all the edges incident on $y$ we add some edges to ensure connectivity of vertices in $S_t$ and some other to ensure adjacency among witness sets in $\calW_C$. Arbitrarily fix a vertex $s_0$ in $S_t$.
If $S_t \setminus \{s_0\}$ is an empty set then no edge needs to be included to ensure connectivity. For every vertex $s$ in $S_t \setminus \{s_0\}$, set $\{s_0, s\}$ is considered by Marking Scheme~\ref{mrk-cmn-nbrs-in-I}. (For any $\alpha > 1$, $d$ is at least two.)
For every $s$ in $S_t \setminus \{s_0\}$, vertex $y$ is contained in $M_{CY}(\{s_0, s\})$, the set of common neighbors of $\{s_0, s\}$ in $Y$. Since $y$ is in $Y \setminus Y'$, and hence unmarked, there is a vertex in $M_{CY}(\{s_0, s\})$ different from $y$ which has been marked for each set $\{s_0, s\}$. For every vertex $s$ in $S_t \setminus \{s_0\}$, let $y_s$ be the marked vertex which is adjacent with $s_0$ and $s$. We construct $F^*$ from $F$ by following operation: Remove the edge $s_oy$ and for every vertex $s$ in $S_t \setminus \{s_o\}$, remove the edge $sy$ in $F$ and add two edges $sy_s, s_oy_s$.
(For example, in Figure~\ref{fig:condition-2-3}, let $W(t) = \{y, s_4, s_5, s_6\}$ and $y$ is in $Y \setminus Y'$. This implies that $S_t = \{s_4, s_5, s_6\}$. Let $s_4$ be the vertex which is fixed arbitrarily. In this process, edges $s_4y, s_5y$ are deleted and two edge $s_5y_2, s_4y_2$ are added to $F$.) Now we include additional edges to ensure adjacency among witness sets in $\calW_C$. 
 Towards this we first prove that there always exists a witness set $W(t^\#)$ different from $W(t)$, in $\calW_{C}$ that is adjacent with either $W(t)\setminus \{y\}$ or with $y_s$ for some $s$ in $S_t\setminus \{s_o\}$. Suppose that there is no set $W(t^\#)$ that is adjacent with $W(t)\setminus \{y\}$ this implies that every witness set in $\calW$ is adjacent with only vertex $y$ in $W(t)$. As the size of $X'$ is at least $2k + 2$ and $y$ is adjacent with every witness set in $\calW \setminus \{W(t)\}$ and every vertex in $S_t$, $y$ is adjacent with at least $k + 2$ vertices.
For a vertex $s'$ in $S_t\setminus \{s_o\}$, let $y_{s'}$ be the vertex marked by Marking Scheme~\ref{mrk-cmn-nbrs-in-I} while considering set $\{s_o,s'\}$. Note that Marking Scheme~\ref{mrk-cmn-nbrs-in-I} preferred $y_{s'}$ over $y$. This implies that $y_{s'}$ is adjacent with at least $k + 2$ vertices. Hence, $y_{s'}$ has at least one neighbor outside $W(t)$. Note that $y_{s'}$ is not in the clique because $W(t)\setminus \{y\}$ did not have a neighbor in $\calW_{C}$, this implies that the neighbor of $y_{s'}$ is in $\calW_{C}$. Since there always exists another witness set $W(t^\#)$ that is adjacent with $(W(t)\setminus \{y\}) \cup \{y_{s} |\ \text{for every}\; s\; \text{in}\; S_t\setminus \{s_o\}\}$, we add the edge across $W(t), W(t^\#)$ in $F$.

Let $W$ be the superset of $W(t)\setminus \{y\}$ which contains witness sets of all the newly added vertices to $W(t)$. Formally, $W=(W(t)\setminus \{y\})\cup(\bigcup_{s\in S_t\setminus \{s_o\}}W(y_s))$  where $W(y_s)$ is the witness set containing $y_s$. Note that every vertex in $S_t$ is connected with each other (as we have added edges $y_ss, y_ss_0$ for every $s$ in $S_t \setminus \{s_o\}$) and $W$ is adjacent with $W(t^\#)$. 
Let $\calW_C^*$ be the witness structure obtained from $\calW_C$ by removing every witness set which intersects with $W, W(t^\#)$ and adding $W \cup W(t^\#)$.
Since $W \cup W(t^\#)$ contains $W(t^\#)$, which is adjacent with every other witness set in $\calW_C$ and hence in $\calW^*_C$. This implies that any two witness sets in $\calW^*_C$ are adjacent with each other. Hence, if $G^*$ is obtained from $G$ by deleting $y$ then $G^*/F^*$ is a split graph. Since we are adding two edges of the form $y_ss, y_ss_0$ for every edge of the form $sy$ in $F$, where $s \in S_t \setminus \{s_o\}$ and one edge across $W(t), W(t^\#)$ instead of $ys_o$ we have $|F^*| \leq 2 \cdot |F|$.

\smallskip

\begin{tcolorbox}[colback=red!5!white,colframe=red!75!black]
\centerline {At this stage we rename $G^\star$ to $G$ and $F^\star$ to $F$.}
\end{tcolorbox}

\begin{condition}[ Partition $(0; 1; 1)$] \label{cond:part_011} There exists a witness set, say $W(t)$, which intersects with $X$ and $Y$ but does not intersect with $S$.
\end{condition}
\vspace{0.1cm}

We construct $F^*$ from $F$ in two steps, where the first step deletes the vertices of $Y\setminus Y'$, and the second step deletes the vertices of $X\setminus X'$.
Suppose that there exists a vertex $y$ in $W(t)\cap (Y\setminus Y')$. 
Fix a witness set, say $W(t^\#)$, in $\calW_{C}$ which is different from $W(t)$ and intersects $X'$.
By Remark~\ref{remark-extra-coloring}, $X'$ is of size at least $2k + 2$ and hence such set exists.
Let $\calW_C^*$ be the witness set obtained from $\calW_C$ by removing $W(t), W(t^\#)$ and adding $W(t^\#) \cup (W(t) \setminus \{y\})$.
Note that in $\calW_C$, witness sets $W(t), W(t^\#)$ and $y$ satisfy the premise of Observation~\ref{obs:merging-witness-sets}. This implies that any two witness sets in $\calW^*_C$ are adjacent with each other.
Hence, if $G^*$ is obtained from $G$ by deleting $y$ then $G^*/F^*$ is a split graph. Since we are deleting at least one edge from $F$ and adding only one edge, we have $|F^*| \leq |F|$.
We repeat this process as long as there exists a witness set that satisfy Condition~\ref{cond:part_011} and intersects $Y \setminus Y'$. 

Consider a witness set $W(t)$ which satisfies Condition~\ref{cond:part_011} and does not intersect with $Y \setminus Y'$. This implies $W(t)$ intersects with $X \setminus X'$ and every vertex in $W(t) \cap Y$ is contained in $Y'$.
Consider a vertex $x$ in $W(t) \cap (X \setminus X')$. If $F$ does not contain any edges across $\{x\}$ and $ Y'$ then in $\calW_C$, witness sets $W(t), W(t^\#)$ and $x$ satisfy the premise of Observation~\ref{obs:merging-witness-sets}. 
We obtain the modified graph as mentioned in previous case.
Suppose that there exist edges in $F$ across $\{x\}$ and $ Y'$.
Let $Y_t$ be the set of neighbors of $x$ in $Y'$ via edges in $F$.
For every $y$ in $Y_t$, the set $\{y\}$ is considered by Marking Scheme~\ref{mrk-cmn-nbr-in-C}.
Note that $x$ is contained in the set $M_{CX}(\{y\})$, the set of vertices in $X$ whose neighborhood contains $\{y\}$.
Since $x$ is in $X \setminus X'$, and hence unmarked, there are two vertices in $M_{CX}(\{y\})$, say $x^1_y, x^2_y$, different from $x$ which have been marked for each $y$ in $Y_t$.
Since $X$ is a clique in $G$, at most one of these two vertices are part of witness set contained in $\calW_{I}$. 
Without loss of generality, assume that for every vertex $y$ in $Y_t$, vertex $x^1_y$ is contained in the witness set in $\calW_C$. 
We construct $F^*$ from $F$ by following operation (This step is similar modification as we did in Condition~\ref{cond:part_110}): Arbitrarily fix a vertex $y_0$ in $Y_t$. For every vertex $y$ in $Y_t$, remove the edge $yx$ in $F$ and add two edges $yx^1_y, x^1_yx^1_{y_0}$. 
Let $W = (W(t) \setminus \{x\}) \cup (\bigcup_{y \in Y_t}W(x^1_y))$  where $W(x^1_y)$ is the witness set in $\calW$ containing $x^1_y$.
We obtain the witness structure $\calW^*_C$ from $\calW_C$ by removing every witness set that intersects $W$ and adding $W$.
Since $W$ contains $W(x^1_{y_0})$, it is adjacent with every other witness set in $\calW_C$ and hence in $\calW^*_C$. This implies that any two witness sets in $\calW^*_C$ are adjacent to each other. Hence, if $G^*$ is obtained from $G$ by deleting $x,$ then $G^*/F^*$ is a split graph. Since we are adding at most two edges $yx^1_y,x^1_yx^1_{y_o} $ for one deleted edge of the form $yx$ in $F$, we have $|F^*| \leq 2 \cdot |F|$.

\smallskip

\begin{tcolorbox}[colback=red!5!white,colframe=red!75!black]
\centerline {At this stage we rename $G^\star$ to $G$ and $F^\star$ to $F$.}
\end{tcolorbox}

\begin{condition}[ Partition $(1; 1; 1)$] \label{cond:part_111} There exists a witness set, say $W(t)$, which intersects with $S, X$ and $Y$.
\end{condition}
\vspace{0.1cm}



We further divide this condition based on whether the intersection of $W(t)$ with $(Y\setminus Y')$ is empty (Condition~\ref{cond:part_111}$(a)$) or not (Condition~\ref{cond:part_111}$(b)$) 

\noindent\emph{Condition~\ref{cond:part_111}(a):} Consider that $W(t)\cap (Y\setminus Y')$ is empty. Then there is at least one vertex in $W(t)\cap (X\setminus X')$. In this case, construction of $G^*$ and $F^*$ is similar to that of Condition~\ref{cond:part_110}. Instead of considering a subset of $S$ in Condition~\ref{cond:part_110}, we consider a subset of $S \cup Y'$. 
Let $x$ be a vertex in $W(t)\cap (X\setminus X')$; $Z_t$ be the set of vertices in $W(t) \cap (S \cup Y')$ which are adjacent with $x$ via edges in $F$ and $Q_t$ be the set of vertices in $W(t) \cap X'$ which are adjacent with $x$ via edges of $F$.
Note that every vertex in $Z_t$ is considered by the Marking Scheme~\ref{mrk-cmn-nbr-in-C}. For every vertex $z$ in $Z_t$, vertex $x$ is contained in $M_{CX}(\{z\})$, the set of vertices in $X$ whose neighborhood contains $\{z\}$. Since $x$ is in $X \setminus X'$, and hence unmarked, there are two vertices in $M_{CX}(\{z\})$, say $x^1_z, x^2_z$, different from $x$ which have been marked for each $z$ in $Z_t$. Since $X$ is a clique in $G$, at most one of these two vertices are part of witness set contained in $\calW_{I}$. Without loss of generality, assume that for every vertex $z$ in $Z_t$, vertex $x^1_z$ is contained in witness set of $\calW_C$. We construct $F^*$ from $F$ by following operation: Arbitrarily fix a vertex $z_0$ in $Z_t$. remove the edge $z_0x$ and for every vertex $z$ in $Z_t\setminus \{z_0\}$, remove the edge $zx$ in $F$ and add two edges $zx^1_z, x^1_zx^1_{z_0}$. For every vertex $x'$ in $Q_t$ replace the edge $x'x$ by $x'x^1_{z_0}$.

Let $W$ be the superset of $W(t) \setminus \{x\}$ which contains witness set containing all newly added vertices in $W(t)$. Formally, $W = (W(t) \setminus \{x\}) \cup (\bigcup_{z \in Z_t}W(x^1_z))$ where $W(x^1_z)$ is the witness set in $\calW$ containing $x^1_z$.
Let $\calW^*_C$ be the witness structure obtained from $\calW_C$ by removing every witness set which intersects $W$ and adding $W$.
Since $W$ contains $W(x^1_{z_0})$, it is adjacent with every other witness set in $\calW_C$ and hence in $\calW^*_C$. This implies that any two witness sets in $\calW^*_C$ are adjacent to each other. 
Hence, if $G^*$ is obtained from $G$ by deleting $x$, then $G^*/F^*$ is a split graph. Since we are adding at most two edges only for deleted edges in $F$, we have $|F^*| \leq 2 \cdot |F|$.

\smallskip

\begin{tcolorbox}[colback=red!5!white,colframe=red!75!black]
\centerline {At this stage we rename $G^\star$ to $G$ and $F^\star$ to $F$.}
\end{tcolorbox}

\noindent\emph{Condition~\ref{cond:part_111}$(b)$:} Consider that $W(t)\cap (Y\setminus Y')$ is non-empty. In this case, construction of $G^*$ and $F^*$ is similar to that of Condition~\ref{cond:part_101}.
Let $y$ be a vertex in $W(t)\cap(Y\setminus Y')$ and $S_t$ be the set of vertices in $W(t)\cap S$ which are adjacent with $y$ via edges in $F$ and $Q_t$ be the set of vertices in $W(t)\cap X$ which are adjacent with $y$ via edges in $F$.
Arbitrarily fix a vertex $s_0$ in $S_t$.
We add some edges to ensure connectivity of vertices in $S_t$ and some other to ensure adjacency among witness sets in $\calW_C$.
If $S_t \setminus \{s_o\}$ is an empty set then no edge needs to be included to ensure connectivity. For every vertex $s$ in $S_t \setminus \{s_o\}$, the set $\{s_o, s\}$ is considered by Marking Scheme~\ref{mrk-cmn-nbrs-in-I}. (For any $\alpha > 1$, $d$ is at least two.)
For every $s$ in $S_t \setminus \{s_o\}$, vertex $y$ is contained in $M_{CY}(\{s_o, s\})$, the set of common neighbors of $\{s_o, s\}$ in $Y$.
Since $y$ is in $Y \setminus Y'$, and hence unmarked, there is a vertex in $M_{CY}(\{s_o, s\})$ different from $y$ which has been marked for each set $\{s_o, s\}$.
For every vertex $s$ in $S_t \setminus \{s_o\}$, let $y_s$ be the marked vertex which is adjacent with $s_o$ and $s$. 
We construct $F^*$ from $F$ by following operation: Remove the edge $s_oy$ and for every vertex $s$ in $S_t \setminus \{s_o\}$, remove edge $sy$ in $F$ and add two edges $sy_s, s_oy_s$.
Since $W(t)$ also intersects $X$, there exists another witness set $W(t^\#)$, in $\calW_{C}$ such there is an edge across $W(t^\#)$ and $W(t) \setminus \{y\}$.
Let $uv$ be that edge. Include $uv$ in the solution.

Let $W$ be the superset of $W(t) \setminus \{y\}$ which contains the witness sets of all the newly added vertices in $W(t)$. Formally, $W = (W(t) \setminus \{y\}) \cup (\bigcup_{s \in S_t}W(y_s))$ where $W(y_s)$ is the witness set containing $y_s$. Note that every vertex in $S_t$ is connected with each other (as we have added edges $y_ss, y_ss_o$ for every $s$ in $S_t \setminus \{s_o\}$) and $W$ is adjacent with $W(t^\#)$. Let $\calW^*_C$ be the witness structure obtained from $\calW_C$ by removing every witness set which intersects with $W,W(t^\#)$ and adding $W \cup W(t^\#)$. Since $W \cup W(t^\#)$ contains $ W(t^\#)$, which is adjacent with every other witness set in $\calW_C$ and hence in $\calW'_C$. This implies that any two witness sets in $\calW^*_C$ are adjacent with each other. Hence, if $G^*$ is obtained from $G$ by deleting $y$ then $G^*/F^*$ is a split graph. Since we are adding at most two edges of the form $y_ss, y_ss_o$ for every edge of the form $sy$ in $F$ where $y \in S_t \setminus \{s_o\}$ and at most two edges for edge $ys_o$, we have $|F^*| \leq 2 \cdot |F|$.

\smallskip

We now argue that even after repeating the process, the size of $|F^*|$ is bounded. In every condition, an edge $uv$ is replaced only if one of the endpoints belongs to $(X\setminus X')\cup (Y\setminus Y')$. Every time edges were replaced by those edges that belong to $G'$. So once any of the conditions mentioned above consider an edge in $F$, it will  never be considered by any other condition. This implies that the number of edges in $F^*$ is always upper bounded by $2 \cdot |F^{\circ}|$ where $F^{\circ}$ is a solution for original instance $(G, k)$. If there exists no witness set {\bf that satisfies any condition}, then $G' = G^*$ and solution $F^*$ is the solution with desired properties. This concludes the proof of the lemma.
\end{proof}

Lemma~\ref{lem-numerator-SC}~and~\ref{lem-denominator-SC} are not sufficient to prove a time-efficient $(2 + \epsilon)$-approximate polynomial kernel. This is primarily because of the additive factor in Lemma~\ref{lem-numerator-SC}. We present the following lemma, which describes a solution lifting algorithm whose running time is dependent on $\epsilon$.

\begin{lemma}\label{lemma:solution-lifting-split} For a fixed $\alpha^{\prime}, \alpha$ satisfying $\alpha^{\prime} > \alpha > 1$, there exists a solution lifting algorithm which satisfies following properties.
 \begin{enumerate}
 \item Given a reduced instance $(G', k)$ obtained by applying Reduction Rule~\ref{rr:splitcontraction-reduction} on instance $(G, k)$, partition $(S, X, Y)$ of $V(G)$ and an integer $d = \lceil \frac{1}{\alpha - 1}\rceil$ together with a solution $F'$ for $(G', k)$, it returns a solution $F$ for $(G, k)$ such that $\frac{\SpC(G,k,F)}{\OPT(G,k)} \leq 2 \cdot \alpha^{\prime} \cdot \frac{\SpC(G',k,F')}{\OPT(G',k)}$.
 \item The algorithm runs in time $\mathcal{O}(m^{{\alpha} \cdot c + 2})$ where $m$ is the number of edges in $G$ and $c = \frac{1}{\alpha' - \alpha}$.
 \end{enumerate}
\end{lemma}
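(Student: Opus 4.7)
The plan is to resolve the additive $+1$ appearing in Lemma~\ref{lem-numerator-SC} by a case split on $\SpC(G',k,F')$ around the threshold $c = \frac{1}{\alpha'-\alpha}$. When $|F'| \ge c$, the constant $1$ is absorbed into the multiplicative slack $\alpha'-\alpha$, giving a clean factor of $\alpha'$; when $|F'| < c$, the value $\OPT(G,k)$ is bounded by a constant (depending only on $\alpha$ and $\alpha'$), so a brute-force enumeration produces an exact optimum for $(G,k)$ in time $\Oh(m^{\alpha c + 2})$. This two-case strategy is the heart of the argument.

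First I would dispose of the trivial cases. If $F'$ is not a valid solution for $(G',k)$ (i.e., $\SpC(G',k,F') = \infty$) or if $\SpC(G',k,F') = k+1$, the algorithm returns a spanning tree of $G$. Using $\SpC(G,k,F) \le k+1 \le \SpC(G',k,F')$ together with $\OPT(G,k) \ge \OPT(G',k)/2$ (from Lemma~\ref{lem-denominator-SC} whenever $\OPT(G,k) \le k$, and trivially otherwise since then $\OPT(G,k) = k+1$), the output ratio is at most $2 \cdot \SpC(G',k,F')/\OPT(G',k) \le 2\alpha' \cdot \SpC(G',k,F')/\OPT(G',k)$.

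Next, when $|F'| = \SpC(G',k,F') \ge c$, I would run the constructive procedure inside the proof of Lemma~\ref{lem-numerator-SC} to obtain $F$ with $|F| \le \alpha|F'| + 1$. The key arithmetic observation is that $|F'| \ge c = \tfrac{1}{\alpha'-\alpha}$ implies $1 \le (\alpha'-\alpha)|F'|$, so $|F| \le \alpha'|F'|$. Combining with Lemma~\ref{lem-denominator-SC}, we get
\[
\frac{\SpC(G,k,F)}{\OPT(G,k)} \;\le\; \frac{\alpha' \cdot \SpC(G',k,F')}{\OPT(G',k)/2} \;=\; 2\alpha' \cdot \frac{\SpC(G',k,F')}{\OPT(G',k)},
\]
which is exactly the desired bound.

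Finally, when $|F'| < c$, Lemma~\ref{lem-numerator-SC} applied to $F'$ guarantees $\OPT(G,k) \le \alpha|F'| + 1 < \alpha c + 1$. I would enumerate every $F \subseteq E(G)$ with $|F| \le \alpha c + 1$, test in polynomial time whether $G/F$ is a split graph, and output one minimizing $\SpC(G,k,F)$. The enumeration costs $\binom{m}{\le \alpha c + 1}\cdot \POLY(m) = \Oh(m^{\alpha c + 2})$, matching the claimed running time, and returns an optimal $F$ for $(G,k)$; the target inequality then reduces to $1 \le 2\alpha' \cdot \SpC(G',k,F')/\OPT(G',k)$, trivially true. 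The main delicate point is precisely the arithmetic linking $c$, $\alpha$, $\alpha'$ so that the additive slack of Lemma~\ref{lem-numerator-SC} becomes multiplicative; once $c$ is fixed as $1/(\alpha'-\alpha)$ it is a one-line computation, but choosing any smaller threshold breaks the bound and any larger threshold blows up the brute-force running time.
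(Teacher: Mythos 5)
Your proposal is correct and follows essentially the same three-way case split as the paper's proof: a trivial case returning a spanning tree when $\SpC(G',k,F') = k+1$, an absorption of the additive $+1$ from Lemma~\ref{lem-numerator-SC} into the slack $\alpha'-\alpha$ when $|F'| \ge c$, and a brute-force enumeration when $|F'| < c$, each combined with the factor-$2$ bound on optima from Lemma~\ref{lem-denominator-SC}. The arithmetic is identical in substance (the paper rewrites $\alpha|F'|+1$ as $(\alpha + 1/|F'|)|F'| \le \alpha'|F'|$ whereas you add $(\alpha'-\alpha)|F'| \ge 1$, which is the same computation).
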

\begin{proof} We present a solution lifting algorithm which considers three cases depending on cardinality of $F'$. Before mentioning the algorithm, we recall Lemma~\ref{lem-denominator-SC} which states that if $\OPT(G,k) \leq k$ then $\OPT(G',k) \leq 2 \cdot \OPT(G,k)$. If $\OPT(G, k) = k + 1$ then $\OPT(G', k) \le k + 1 = \OPT(G, k)$. Hence in either case, $\OPT(G', k) \le 2 \cdot \OPT(G, k)$.

 If cardinality of $F'$ is greater than or equal to $k + 1$ then solution lifting algorithm returns a spanning tree $F$ of $G$ (a trivial solution) as solution for $(G, k)$. In this case, $\SpC(G, k, F) = k + 1 = \SpC(G', k, F')$. Since $\OPT(G', k) \le 2 \cdot \OPT(G, k)$, in this case we get  
$\frac{\SpC(G,k,F)}{\OPT(G,k)} \leq 2 \cdot \frac{\SpC(G',k,F')}{\OPT(G',k)}$.
  
If cardinality of $F'$ is at most $k$ but greater than or equal to $c$ then the algorithm uses Lemma~\ref{lem-numerator-SC} to compute a solution $F$ for $(G, k)$ such that cardinality of $F$ is at most $\alpha \cdot |F'| + 1$. Since $\OPT(G', k) \le 2 \cdot \OPT(G, k)$, this implies:
$$\frac{|F|}{\OPT(G,k)} \leq \frac{ 2 \cdot (\alpha \cdot |F'| + 1)}{\OPT(G',k)} \le 2 \cdot \Big(\alpha + \frac{1}{|F'|}\Big) \cdot \frac{ |F'|}{\OPT(G',k)} \le 2 \cdot \alpha^{\prime} \cdot \frac{ |F'|}{\OPT(G',k)}$$
Last inequality follows from the fact that $|F'| \ge c$ and $1/c = \alpha' - \alpha$. Hence in this case, $\frac{\SpC(G,k,F)}{\OPT(G,k)} \leq 2 \cdot \alpha^{\prime} \cdot \frac{\SpC(G',k,F')}{\OPT(G',k)}$.

Consider the remaining case when $|F'| < c$. By definition, $F'$ is a valid solution for $(G', k)$. By Lemma~\ref{lem-numerator-SC}, there exists a solution $F$ for $(G, k)$ such that $|F| \le \alpha |F'|+1 < \alpha \cdot c+1$. Since we are working with minimization problem, this implies $\OPT(G,k) \le \alpha \cdot c$. In this case, solution lifting algorithm computes an optimum solution for $(G, k)$ by brute force, i.e. checking all set of edges in $G$ of size at most $\alpha \cdot c$, and returns it. In this case, we have, $\frac{\SpC(G,k,F)}{\OPT(G,k)} = 1$.

Hence, the solution lifting algorithm returns a solution with the desired property. The running time of the algorithm follows from its description, and the fact that solution satisfying Lemma~\ref{lem-numerator-SC} can be obtained in polynomial time.
\end{proof}

We note that solution lifting algorithm mentioned in Lemma~\ref{lemma:solution-lifting-split} allows us to choose a value of $\alpha$ between $1$ and $\alpha'$. This choice of $\alpha$ is a trade-off between the running time of the algorithm (as $c$ is inversely proportional to $\alpha' - \alpha$) and the size of lossy kernel (as $d$ is inversely proportional to $\alpha - 1$). We now present the main result of this section.

\begin{reptheorem}{thm:split-lossy} For any $\epsilon > 0$, \textsc{Split Contraction} parameterized by the size of solution $k$, admits $(2 + \epsilon)$-approximate polynomial kernelization algorithm which runs in $\mathcal{O}(m^{\alpha \cdot c + 2})$ time and return an instance with $\calO(k^{d^2+d+1})$ vertices. Here, $m$ is number of edges in $G$ and constants $\alpha, c,$ and $d$ depend only on $\epsilon$.
\end{reptheorem}

\begin{proof} Any split graph can have at most one component which contains an edge. Hence, if graph $G$ in given instance $(G, k)$ is not connected, then all but one connected component needs to be contracted to an isolated vertex.
If there are more than one connected components with at least $k + 1$ vertices, then we return a trivial instance. Otherwise, we apply reduction rules to the unique connected component in $G$, which has more than $k + 1$ vertices. Hence, we can assume that the input graph is connected.  
 
 For a given instance $(G,k)$, a kernelization algorithm applies Reduction Rule~\ref{rr:large-splitvd}. If it returns a trivial instance then the statement is vacuously true. If it does not return a trivial instance then the algorithm partitions $V(G)$ in three sets $(S,X,Y)$ such that $|S|\leq 10k$ and $G - S$ is a split graph with $(X, Y)$ as its split partition. For given $\epsilon$, fix $\alpha' = 1 + \epsilon/2$. The algorithm fixes $\alpha$ which is strictly more than $1$ and strictly less than $\alpha'$. The algorithm then applies Reduction Rule \ref{rr:splitcontraction-reduction} on $(G,k)$ with partition $(S, X, Y)$ and $\alpha$. The algorithm returns the reduced instance as an $(2 \cdot \alpha')$-lossy kernel for $(G,k)$.

The correctness of the algorithms follows from Lemma~\ref{lem-denominator-SC}~and~\ref{lemma:solution-lifting-split} combined with the fact that Reduction Rule~\ref{rr:splitcontraction-reduction} is applied at most once. 
By Observation~\ref{obsn:approx-svd-solution}, Reduction Rule~\ref{rr:large-splitvd} can be applied in polynomial time. The size of the instance returned by Reduction Rule~\ref{rr:splitcontraction-reduction} is at most $\calO(k^{d^2 + d + 1})$. Reduction Rule~\ref{rr:splitcontraction-reduction} can be applied in time $n^{\calO(1)}$ if number of the vertices in $G$ is more than $\calO(k^{d^2 + d + 1})$.
\end{proof}
\section{Lower Bound on Inapproximability of \splitcontract}
\label{hardness_splt}
In this section, we show that for any $\delta>0$, \splitcontract parameterized by the solution size does not admit a factor $(5/4-\delta)$-\fpt approximation algorithm, assuming \textsf{Gap-ETH}. Towards this, we give a polynomial time reduction from \colordensecliqueperfect, which is $k^{o(1)}$-factor \fpt inapproximable (Corollary~\ref{lem:Gap-ETH result densest-color}). In the \denseclique\ problem, input is a graph $G$ and an integer $k$; and the goal is to find a subset $S$ of $V(G)$ of size $k$  such that $\textsf{Den}(S)\geq k^{-g(k)}$. Recall that we defined $\textsf{Den}(S)= |E(G[S])|/\binom{|S|}{2}$. \densecliqueperfect is a special case of \denseclique\ problem, in which the input graph $G$ contains a $k$-clique, and it is known to be $k^{o(1)}$-factor \fpt inapproximable~\cite{DBLP:journals/corr/abs-1708-04218}.  
Now, we define a colorful version of \densecliqueperfect. We call a set of edges  \emph{colorful}, if all the edges in the set are colored with pairwise distinct colors. We say that a clique is \emph{colorful-clique}, if the set of edges in this clique is colorful.   
In the \colordensecliqueperfect, an input graph $G$ is given with an edge coloring $\phi : E(G) \rightarrow [\binom{k}{2}]$, and $G$ is promised to have a colorful-$k$-clique. The goal is to find a subset $S$ of $V(G)$ of size $k$ such that the set $E(G[S])$ is colorful and $\textsf{Den}(S)\geq k^{-g(k)}$.

We start with the following known inapproximability result for \densecliqueperfect.

\begin{proposition}[Lemma $5.21$ in \cite{DBLP:journals/corr/abs-1708-04218}] \label{Gap-ETH result densest} Assuming {\rm \textsf{Gap-ETH}}, for every function $g=o(1)$ and every function $f$, there is no $f(k)\cdot n^{\calO(1)}$-time algorithm such that, given an integer $k$ and a graph $G$ with $n$ vertices containing a $k$-clique, always outputs a set $S$ of $k$ vertices such that $\textsf{Den}(S)\geq k^{-g(k)}$.
\end{proposition}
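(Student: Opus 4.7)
The plan is to derive this via a gap-amplifying reduction from a Gap-ETH-hard base problem, following the birthday-repetition paradigm of Chalermsook et al. I would begin with the standard Gap-ETH consequence that there is a constant $\varepsilon_0>0$ for which no $f(k)\cdot n^{\calO(1)}$ algorithm distinguishes a $k$-partite graph $H$ on parts $V_1,\ldots,V_k$ containing a transversal $k$-clique from one in which every transversal induces at most $(1-\varepsilon_0)\binom{k}{2}$ edges; this follows from Gap-ETH via standard reductions through $2$-CSP or Label Cover and is the most convenient ``perfect completeness vs.\ constant-fraction soundness'' starting point.

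The core step is a birthday-repetition gadget. Partition the $k$ colors into $r$ blocks of size $s:=k/r$, and build a new graph $G'$ whose vertex set for each block $B_i$ consists of all tuples assigning one vertex of $V_j$ to each color $j\in B_i$. Make two block-tuples $\sigma_i,\sigma_j$ adjacent iff every pair of coordinates across $B_i$ and $B_j$ is an edge of $H$. Completeness is immediate: a transversal $k$-clique of $H$ projects onto an $r$-clique in $G'$, so the YES instance contains a clique of size equal to the new parameter. For soundness, if every transversal of $H$ is missing at least an $\varepsilon_0$-fraction of the $\binom{k}{2}$ required edges, a second-moment / birthday estimate shows that for any choice of $r$ block-tuples the fraction of satisfied cross-block constraints is at most roughly $(1-\varepsilon_0)^{\Theta(s^2)}$, and hence the induced edge density in $G'$ is suppressed below any target $r^{-g(r)}$ with $g=o(1)$ by letting $s$ grow slowly with $r$. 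After renaming $r$ back to $k$ this yields the claimed density gap and the ``contains a $k$-clique'' completeness guarantee.

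The delicate point, and the main obstacle, is calibrating the block size $s=k/r$ so that three demands are simultaneously met: (i) the reduction runs in time $n^{\calO(s)}\cdot f(k)$, which must sit below the Gap-ETH lower bound for the base problem; (ii) the birthday soundness $(1-\varepsilon_0)^{\Theta(s^2)}$ is small enough to dominate $r^{-g(r)}$ for every prescribed $g=o(1)$; and (iii) the new parameter $r$ tends to infinity with $k$ so that the hardness is genuinely parameterized. A standard diagonalization over the quantifier ``for every $g=o(1)$'' together with a choice such as $s=\Theta\!\bigl(1/\sqrt{g(k)}\bigr)$ (growing to infinity but sub-polynomially in $k$) resolves the trade-off. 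Rewriting the output of the reduction as a single unweighted graph and appealing to the base hardness then gives the proposition in the stated form.
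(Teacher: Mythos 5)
The proposition is stated in the paper as a direct citation of Lemma~5.21 of Chalermsook et al.\ \cite{DBLP:journals/corr/abs-1708-04218} and is not re-proven there, so there is no "paper's own proof" to compare against; I evaluate your reconstruction on its own terms.

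Your high-level roadmap---start from a constant-gap $k$-partite 2-CSP obtained from Gap-ETH, take a block product, and read off the result as a (colored) Densest-$k$-Subgraph instance---is in the right spirit, and the completeness direction is fine. However, the key soundness claim is incorrect. You assert that if every transversal of $H$ misses at least an $\varepsilon_0$-fraction of the $\binom{k}{2}$ constraint pairs, then any choice of $r$ block-tuples has edge density at most $(1-\varepsilon_0)^{\Theta(s^2)}$ in $G'$, with a ``second-moment / birthday'' justification. This fails for a deterministic disjoint partition into blocks: the $\geq \varepsilon_0\binom{k}{2}$ missing constraints of a given transversal can be concentrated so that they saturate roughly $\varepsilon_0\binom{r}{2}$ of the $\binom{r}{2}$ block-pairs (each block-pair can absorb up to $s^2$ missing pairs, and $\varepsilon_0\binom{k}{2} \approx \varepsilon_0\binom{r}{2}\cdot s^2$), leaving the other $\approx (1-\varepsilon_0)\binom{r}{2}$ block-pairs with all $s^2$ cross-constraints satisfied and hence adjacent in $G'$. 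The resulting density is roughly $1-\varepsilon_0$, a constant independent of $s$. A second-moment bound cannot rescue this because the statement must hold for the \emph{worst-case}, adversarial choice of block-tuples, and the disjoint-block direct product simply does not amplify a 2-CSP gap; that inadequacy is precisely the motivation for genuine parallel repetition and for the AIM-style birthday repetition over \emph{overlapping random} $s$-subsets used in the source. With only the constant soundness $1-\varepsilon_0$, the reduction defeats algorithms achieving density $k^{-g(k)}$ only when $g = o(1/\log k)$ (so that $k^{-g(k)} \to 1 > 1-\varepsilon_0$); for any $g$ with $\omega(1/\log k) \le g = o(1)$ the threshold $k^{-g(k)}$ tends to $0$ and exceeds nothing, so the reduction proves nothing. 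Hence the proposal does not establish the proposition in its stated universal form (``for every $g=o(1)$''); a genuinely gap-amplifying construction---e.g., overlapping random blocks with a dispersal argument, as in Chalermsook et al.---is needed rather than the disjoint block product.
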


For notational convenience, let $t=\binom{k}{2}$. The size of $(|E(G)|, t)$-perfect hash family is bounded by $e^tt^{\calO(\log t)} \cdot n^{\calO(1)}$ \cite{naor1995splitters}. Using $(|E(G)|, t)$-perfect hash family and Proposition~\ref{Gap-ETH result densest}, we obtain the following result for \colordensecliqueperfect.

\begin{corollary}\label{lem:Gap-ETH result densest-color}
Assuming {\rm \textsf{Gap-ETH}}, for every function $g=o(1)$ and every function $f$, there is no $f(k)\cdot n^{\calO(1)}$-time algorithm such that, given an integer $k$ and an edge colored graph $G$ with $n$ vertices containing a colorful-$k$-clique, always outputs a set $S$ of $k$ vertices such that $E(G[S])$ is colorful and $\textsf{Den}(S)\geq k^{-g(k)}$.
\end{corollary}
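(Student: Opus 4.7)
The plan is to derive this colorful inapproximability directly from Proposition~\ref{Gap-ETH result densest} via a derandomized color-coding argument, using perfect hash families to replace random edge colorings. Assume toward a contradiction that some algorithm $\mathcal{B}$, running in time $f(k)\cdot n^{\calO(1)}$, violates the corollary for functions $g = o(1)$ and $f$. The strategy is to use $\mathcal{B}$ as a black box to build an \FPT-time algorithm that solves the uncolored \densecliqueperfect\ to within the same density factor, thereby contradicting Proposition~\ref{Gap-ETH result densest}.

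Given an uncolored instance $(G, k)$, set $t = \binom{k}{2}$ and compute an $(|E(G)|, t)$-perfect hash family $\mathcal{H}$ of functions $E(G) \to [t]$; by the bound from~\cite{naor1995splitters} cited in the excerpt, $|\mathcal{H}| \leq e^t \cdot t^{\calO(\log t)} \cdot n^{\calO(1)}$ and such a family can be constructed within the same time. For every $h \in \mathcal{H}$, we invoke $\mathcal{B}$ on $(G, h, k)$, treating $h$ as an edge coloring, and we output the densest of the $k$-vertex sets returned across these calls (ignoring the calls on which $\mathcal{B}$ may abstain because the promise fails).

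For correctness, fix any $k$-clique $K$ in $G$. Then $|E(G[K])| = t$, so by the defining property of $\mathcal{H}$ there is some $h^\star \in \mathcal{H}$ that is injective on $E(G[K])$; with respect to $h^\star$ the clique $K$ is colorful, so the instance $(G, h^\star, k)$ satisfies the promise of \colordensecliqueperfect, and $\mathcal{B}$ must output a set $S$ with $|S|=k$, $E(G[S])$ colorful under $h^\star$, and $\textsf{Den}(S) \geq k^{-g(k)}$. Since density is defined purely in terms of $|E(G[S])|$ and $|S|$ and does not depend on the coloring, this $S$ is also a valid output for the uncolored problem. The total running time is $|\mathcal{H}| \cdot f(k) \cdot n^{\calO(1)} = f'(k)\cdot n^{\calO(1)}$ with $f'(k) = e^{\binom{k}{2}} \binom{k}{2}^{\calO(\log \binom{k}{2})} f(k)$, which depends only on $k$, giving the desired contradiction.

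I do not expect a genuine obstacle here: the whole argument is a textbook derandomization of random $t$-coloring of edges, and the only point to keep in mind is that the extra colorfulness requirement imposed on the output of $\mathcal{B}$ is harmless, since we only need the density guarantee to invoke Proposition~\ref{Gap-ETH result densest}. The one place to be slightly careful is in confirming that $m = |E(G)| \leq n^2$ is accounted for in the $n^{\calO(1)}$ factor of the perfect hash family size, so that $f'$ remains a function of $k$ alone and the final running time is genuinely of \FPT\ form.
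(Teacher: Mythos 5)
Your proposal is correct and follows exactly the approach the paper takes: the paper's proof is a one-liner invoking an $(|E(G)|,\binom{k}{2})$-perfect hash family from~\cite{naor1995splitters} together with Proposition~\ref{Gap-ETH result densest}, which is precisely the derandomized color-coding argument you spell out. Your additional observations — that density is coloring-independent, and that the extra colorfulness requirement on the output only makes the hypothetical algorithm $\mathcal{B}$ stronger — are exactly the points that make the reduction go through.
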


For a given subset $S$ of vertices, we say that $S$ \emph{spans} an edge, if  both of its endpoints are in $S$. 
Due to Corollary~\ref{lem:Gap-ETH result densest-color}, we obtain following result.
\begin{corollary}\label{cor:Gap-ETH result densest}
Assuming {\rm \textsf{Gap-ETH}}, for every $0<\epsilon <1 $, and for every function $f$, there is no $f(k)\cdot n^{\calO(1)}$-time algorithm such that, given an integer $k$ and an edge colored graph $G$ with $n$ vertices containing a colorful-$k$-clique, always outputs a set $S$ of $k$ vertices which span at least $\epsilon \binom{k}{2}$ colorful edges.
\end{corollary}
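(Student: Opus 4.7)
The plan is to derive Corollary~\ref{cor:Gap-ETH result densest} as a direct specialization of Corollary~\ref{lem:Gap-ETH result densest-color} via a careful choice of the $o(1)$ function. I will read the phrase \emph{``$S$ spans at least $\epsilon\binom{k}{2}$ colorful edges''} as the conjunction $|E(G[S])| \geq \epsilon\binom{k}{2}$ together with $E(G[S])$ being colorful; equivalently, $E(G[S])$ is colorful and $\textsf{Den}(S) \geq \epsilon$. Under this reading, the two corollaries differ only in the density threshold: Corollary~\ref{lem:Gap-ETH result densest-color} uses the vanishing threshold $k^{-g(k)}$ for an arbitrary $g = o(1)$, whereas Corollary~\ref{cor:Gap-ETH result densest} uses a constant threshold $\epsilon$.

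Given an arbitrary $\epsilon \in (0,1)$, I define $g_{\epsilon}(k) := \log(1/\epsilon)/\log k$ for $k \geq 2$ (and extend it arbitrarily for $k=1$). Since $\log(1/\epsilon)$ is a fixed positive constant while $\log k \to \infty$, we have $g_{\epsilon}(k) \to 0$, so $g_{\epsilon} = o(1)$. A one-line computation gives $k^{-g_{\epsilon}(k)} = e^{-g_{\epsilon}(k)\log k} = e^{-\log(1/\epsilon)} = \epsilon$ for every $k\geq 2$, so the density threshold in Corollary~\ref{lem:Gap-ETH result densest-color} instantiated with $g=g_{\epsilon}$ equals exactly~$\epsilon$. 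I then argue by contradiction: if for some $\epsilon\in (0,1)$ and some $f$ there were an $f(k)\cdot n^{\calO(1)}$-time algorithm $A$ that, on every edge-colored graph $G$ containing a colorful-$k$-clique, always outputs a $k$-set $S$ spanning at least $\epsilon\binom{k}{2}$ colorful edges, then for every $k \ge 2$ the output of $A$ would satisfy both $E(G[S])$ colorful and $\textsf{Den}(S)\ge \epsilon = k^{-g_{\epsilon}(k)}$. This is precisely the kind of algorithm ruled out by Corollary~\ref{lem:Gap-ETH result densest-color} with $g=g_{\epsilon}$, contradicting \textsf{Gap-ETH}.

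The only conceptual step is exhibiting an $o(1)$ function $g$ whose threshold $k^{-g(k)}$ equals the prescribed constant $\epsilon$, and the choice $g_{\epsilon}(k)=\log(1/\epsilon)/\log k$ handles this cleanly. Beyond this the argument is essentially immediate, so I do not anticipate any serious technical obstacle; the main subtlety is getting the quantifier order right, namely that $g$ is allowed to depend on $\epsilon$, which is fine because Corollary~\ref{lem:Gap-ETH result densest-color} quantifies universally over all $o(1)$ functions.
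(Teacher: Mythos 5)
Your specialization via $g_\epsilon(k)=\log(1/\epsilon)/\log k$ is the right move and the paper gives essentially no further justification, so on this level the two agree. But there is a real interpretive issue with your reading of ``$S$ spans at least $\epsilon\binom{k}{2}$ colorful edges.'' You read it as the conjunction ``$E(G[S])$ is colorful and $|E(G[S])|\ge\epsilon\binom{k}{2}$,'' which is exactly the output condition of Corollary~\ref{lem:Gap-ETH result densest-color}, so that the corollary becomes a literal specialization. The downstream argument needs the weaker reading: $E(G[S])$ contains a colorful subset of size at least $\epsilon\binom{k}{2}$. Look at the proof of Lemma~\ref{lem:rev}: the set produced there is the set of endpoints of a colorful edge set $Y_E$ of the required size, and nothing is established about the edges of $E(G[S])\setminus Y_E$ being distinctly colored, so $E(G[S])$ need not be colorful in your sense.

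Under the weaker reading the statement is \emph{stronger} than yours (fewer algorithms are permitted), and it does not follow from Corollary~\ref{lem:Gap-ETH result densest-color} by direct specialization: a hypothetical algorithm returning $S$ with a large colorful subset of $E(G[S])$ is not thereby an algorithm returning $S$ with $E(G[S])$ entirely colorful, so it cannot be fed into the hypothesis of Corollary~\ref{lem:Gap-ETH result densest-color} to derive a contradiction. The fix is to observe that such an $S$ still satisfies $\textsf{Den}(S)\ge\epsilon = k^{-g_\epsilon(k)}$ regardless of whether $E(G[S])$ is colorful, and then to contradict Proposition~\ref{Gap-ETH result densest} via the same perfect-hash-family coloring argument that produced Corollary~\ref{lem:Gap-ETH result densest-color}. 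Equivalently: the derivation of Corollary~\ref{lem:Gap-ETH result densest-color} from Proposition~\ref{Gap-ETH result densest} never actually uses the ``$E(G[S])$ is colorful'' clause of the output, so one may drop it and then your $g_\epsilon$ computation applies verbatim. Your key idea is correct, but you should check which reading the rest of Section~\ref{hardness_splt} requires before declaring the derivation an immediate specialization.
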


In the following lemma, we strengthen the above result. 

\begin{lemma}\label{lemma:original-problem}
Assuming {\rm \textsf{Gap-ETH}}, for every $0<\epsilon <1 $, $\alpha>1$ and every function $f$, there is no $f(k)\cdot n^{\calO(1)}$-time algorithm such that, given an integer $k$ and an edge colored graph $G$ with $n$ vertices containing a colorful-$k$-clique, always outputs a set $S$ of at most $\alpha k$ vertices that spans at least $\epsilon \binom{k}{2}$ colorful edges.
\end{lemma}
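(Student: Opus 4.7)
The plan is to reduce the desired hardness to Corollary~\ref{cor:Gap-ETH result densest} via a simple subsampling argument. Suppose for contradiction that for some constants $\epsilon_0 \in (0,1)$ and $\alpha_0 > 1$ there is an FPT algorithm $A$ that, given an edge-colored graph $G$ with a colorful-$k$-clique, always returns a set $S \subseteq V(G)$ with $|S| \leq \alpha_0 k$ spanning at least $\epsilon_0 \binom{k}{2}$ colorful edges. I will turn $A$ into an FPT algorithm $A'$ that, on the same inputs, returns a $k$-subset spanning at least $\epsilon' \binom{k}{2}$ colorful edges for some constant $\epsilon' \in (0,1)$; since this contradicts Corollary~\ref{cor:Gap-ETH result densest}, the lemma will follow.

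The algorithm $A'$ first runs $A$ to obtain $S$, extracts greedily a pairwise-distinctly-colored witness set $F \subseteq E(G[S])$ of size at least $\epsilon_0 \binom{k}{2}$, and then brute-force enumerates all $k$-subsets of $S$, outputting the subset $S^{\star}$ that maximizes $|F \cap E(G[S^{\star}])|$. The enumeration is over at most $\binom{\alpha_0 k}{k} \leq 2^{\alpha_0 k}$ candidates, so $A'$ runs in FPT time whenever $A$ does.

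For correctness, I plan to use the following averaging argument. Let $S'$ be a uniformly random $k$-subset of $S$. For any fixed edge $uv \in F$ we have $\Pr[u,v\in S'] = k(k-1)/(|S|(|S|-1)) \geq 1/(2\alpha_0^2)$ for every $k \geq 2$ (since $|S| \leq \alpha_0 k$), so by linearity of expectation the expected value of $|F \cap E(G[S'])|$ is at least $|F|/(2\alpha_0^2) \geq \epsilon_0 \binom{k}{2}/(2\alpha_0^2)$. Hence some $k$-subset of $S$, and in particular the maximizer $S^{\star}$, achieves this bound. Because any subset of the pairwise-distinctly-colored set $F$ remains colorful, $S^{\star}$ spans at least $\epsilon' \binom{k}{2}$ colorful edges, with $\epsilon' := \epsilon_0/(2\alpha_0^2) \in (0,1)$. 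No serious obstacle arises; the only point to keep in mind is that the constants are chosen so that $\epsilon'$ lies strictly in $(0,1)$, which is immediate from $0 < \epsilon_0 < 1$ and $\alpha_0 > 1$.
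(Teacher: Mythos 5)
Your proof is correct, and it reaches the required contradiction with Corollary~\ref{cor:Gap-ETH result densest}, but the mechanics differ from the paper's. You take a random $k$-subset of $S$, compute the expected number of surviving edges of the colorful witness set $F$ (each edge survives with probability $k(k-1)/(|S|(|S|-1)) \geq 1/(2\alpha_0^2)$), and then locate a subset achieving the expectation by brute-force enumeration over all $\binom{|S|}{k} \leq 2^{\alpha_0 k}$ candidates, which is an FPT-time step. The paper instead partitions $S$ deterministically into $\lceil 2\alpha \rceil$ blocks each of size at most $k/2$; since every colorful edge lives inside the union of some pair of blocks, pigeonhole gives a pair $S_i \cup S_j$ of size at most $k$ spanning at least $\epsilon\binom{k}{2}/\binom{\lceil 2\alpha \rceil}{2}$ colorful edges, and one only has to try the $\binom{\lceil 2\alpha \rceil}{2} = O(\alpha^2)$ pairs. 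Both arguments lose a comparable $\Theta(\alpha^2)$ factor in the density, so either suffices for the lemma, but the paper's extraction step is polynomial (indeed, constant-size) rather than the single-exponential $\binom{\alpha_0 k}{k}$ search you use. Both proofs also glide over the boundary case where the extracted set has fewer than $k$ vertices (your random $k$-subset is undefined when $|S|<k$, the paper's $S_i \cup S_j$ may have fewer than $k$ vertices); in either case padding with arbitrary extra vertices restores a set of size exactly $k$ without losing any spanned edges, so this is a harmless omission.
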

\begin{proof}
Assume that there is a $f(k) \cdot n^{\calO(1)}$-time algorithm, $\calA$, that takes the input $(G, k)$ such that $G$ has a colorful-$k$-clique, and outputs a set $S$ of at most $\alpha k$ vertices, that spans at least $\epsilon \binom{k}{2}$ colorful edges. Partition the set $S$ into $\lceil 2\alpha \rceil$ sets $S_1,S_2,\cdots, S_{\lceil 2\alpha \rceil}$ each of size at most $\nicefrac{k}{2}$. As the set $S$ spans at least $\epsilon \binom{k}{2}$ colorful edges, there exists a pair of sets, say $S_i, S_j$, in this partition, such that $S_i\cup S_j$ spans at least $\epsilon'\binom{k}{2}$ colorful edges, where $\epsilon' = \nicefrac{\epsilon}{\binom{\lceil 2\alpha \rceil}{2}}$. Moreover, the size of $S_i\cup S_j$ is at most $k$. This implies that there exists an algorithm, that outputs a set of vertices of size at most $k$ that spans $\epsilon' \binom{k}{2}$ colorful edges, in time $f(k) \cdot n^{\calO(1)}$.  
 Hence, the existence of such algorithm contradicts Corollary~\ref{cor:Gap-ETH result densest}. 
\end{proof}

 Now, we are ready to give our reduction.
We argue that if \splitcontract parameterized by the solution size admits a factor $(\nicefrac{5}{4}-\delta)$-\fpt approximation algorithm for some $\delta$, then it contradicts Lemma~\ref{lemma:original-problem}, 
and hence \textsf{Gap-ETH}. Towards this, we present a reduction in which given an instance $(G,k)$ of \colordensecliqueperfect, and a constant $\delta$, constructs an instance $(G',k')$ of \splitcontract.

\vspace{0.3cm}
\noindent \textbf{Reduction Algorithm:} Given an instance $(G,k)$ of \colordensecliqueperfect,  
and a constant $\delta > 0$, the algorithm constructs a graph $G'$ as follows. Recall that $t=\binom{k}{2}$.
\begin{itemize}
\setlength{\itemsep}{-2pt}
\item Fix $\rho=\lceil \nicefrac{\delta t}{k}\rceil$, $k' = 2t + \rho k$, and $k^{\circ}=\lceil \nicefrac{5}{2}\cdot k'\rceil+2$. 
\item For every vertex $u$ in $V(G)$, add $\rho$ copies of it to $V(G')$ and convert them into a clique. Formally, define $X_u := \{u_1,\cdots, u_{\rho}\}$ for every vertex $u$ in $V(G)$. Let $Z=\cup_{u \in V(G)} X_u$.  We add $k^\circ +2$ 
 extra vertices in $Z$.  
For every pair of vertices $z_1, z_2 \in Z$, add an edge $z_1z_2$ to $E(G')$. That is, the vertices in $Z'$ form a clique. 
\item For every vertex $z$ in $Z$, we add $k^{\circ}$ pendant vertices $y_1, \cdots, y_{k^{\circ}}$ to $V(G')$. 
We  
denote the set of these vertices as $\mathtt{Guard}_V$. We also add edges $zy_1,\cdots,zy_{k^{\circ}}$ to $E(G')$.  
\item For a given coloring function $\phi : E(G) \rightarrow [t]$, let $E_i$ denote the set of edges, which are assigned color $i$. For every $i \in [t]$, we create a set of vertices $\edgegadget_i$ as follows. We add a vertex $w_e$ corresponding to every edge $e$ in $E_i$. Formally, $\edgegadget_i=\{w_e\mid e\in E_i\}$. For every $i \in [t]$, we add $\edgegadget_i$ to $V(G')$. 
We call these sets  {\em edge selector sets}. Let $e=uv$ be an edge in $G$, we add all the edges between $w_e$ and the vertices in $Z\setminus(X_u\cup X_v)$ to $E(G')$. 
Let $\edgegadget=\cup_{i\in[t]}\mathtt{ES}_i$.
\item 
For every $\edgegadget_i$, we add a vertex $g_i$ to $V(G')$. We call this vertex  \tp, corresponding to $\edgegadget_i$. We also add all the edges between $g_i$ and the vertices in $\edgegadget_i$. We denote the set of cap vertices by $\mathtt{Cap}$. 
\item For every cap vertex $g_i$ in $V(G')$, we add $k^{\circ}$ pendant vertices $y_i^1,\cdots,y_i^{k^{\circ}}$ to $V(G')$. We  
denote the set of these vertices as $\mathtt{Guard}_E$. We also add edges $g_iy_i^1,\cdots,g_iy_i^{k^{\circ}}$ to $E(G')$. 
\item We add a set of $t$ \spec, $\specialvert=\{s_1,\cdots,s_t\}$ to $V(G')$. We add all the edges between $s_i$ and the vertices in $\edgegadget_i$, where $i\in [t]$. We add edges between every pair of vertices in $\specialvert$, that is, for all $s_i,s_j \in \specialvert$, we add an edge $s_is_j$ to $E(G')$. That is, the vertices in $\specialvert$ form a clique. 
We call $s_i$ as  \specs corresponding to $\edgegadget_i$.
\item For every special vertex $x$ in $\specialvert$, we add $k^{\circ}$ pendant vertices $x_1,\cdots,x_{k^{\circ}}$ to $V(G')$. We 
denote the set of these vertices as $\mathtt{Guard}_S$. We also add edges $xx_1,\cdots,xx_{k^{\circ}}$ to $E(G')$. 
\end{itemize}
This completes the construction. Reduction algorithm returns $(G', k')$ as an instance of \splitcontract. See Figure~\ref{fig:split-hardness} for an illustration.

\begin{figure}[t]
 \centering
 \includegraphics[scale=0.75]{./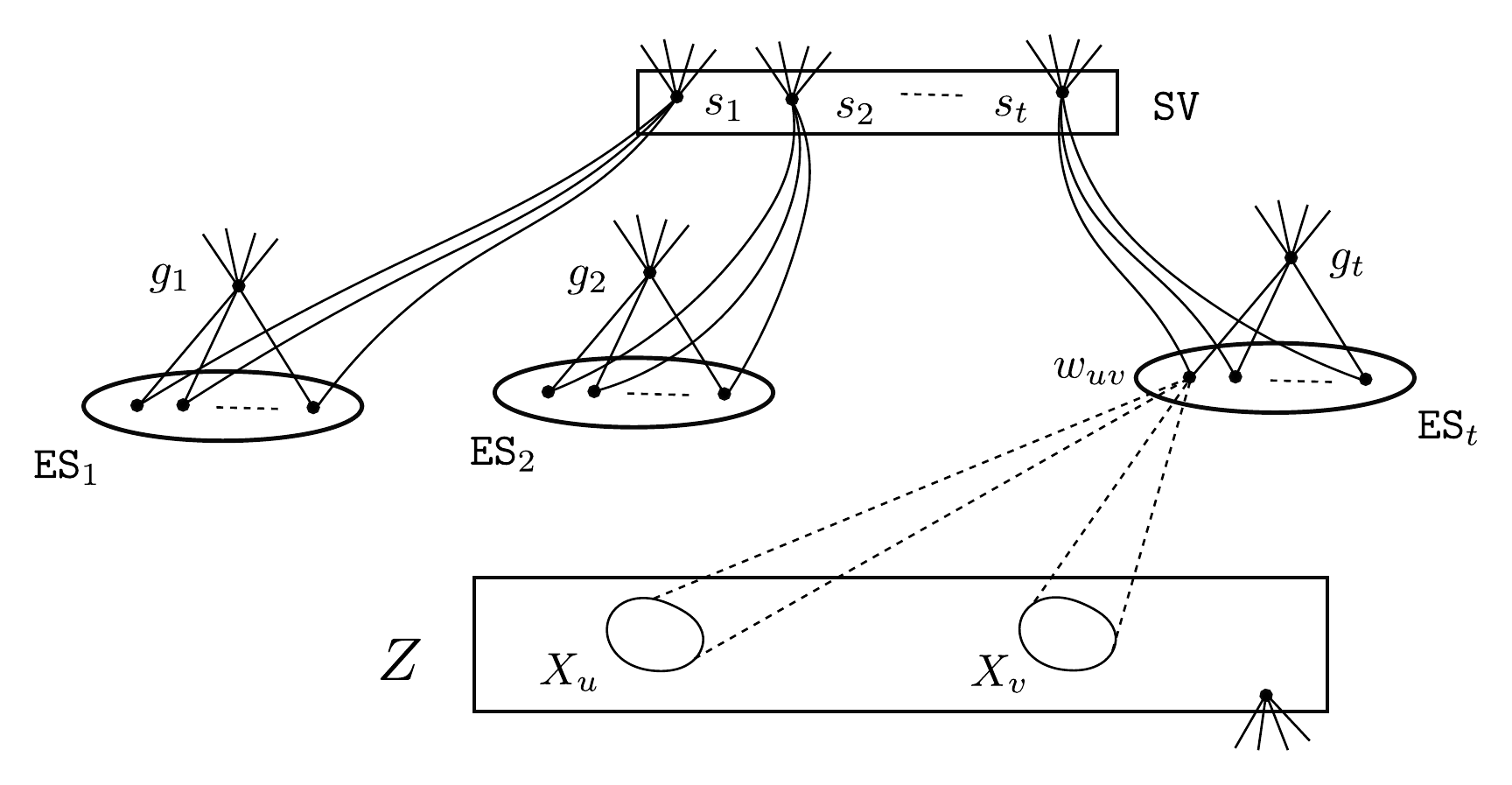}
 \caption{Sets with rectangular boundary represent cliques whereas sets with elliptical boundary represent independent sets. A vertex $w_{uv}$, corresponding to an edge $uv$, is adjacent to all vertices in $Z$ except the ones in $X_u \cup X_v$. Dashed lines shows non-adjacency between the vertex and sets. Please refer to reduction from \colordensecliqueperfect to \splitcontract for details. \label{fig:split-hardness}}
\end{figure}

Note that without loss of generality, we can assume that for a given instance $(G, k)$ of \colordensecliqueperfect, graph $G$ contains at least $k + 1$ vertices, and $k$ is not a constant. 

 \begin{lemma}\label{lem:fwd} 
Let $(G', k')$ be the instance of \splitcontract returned by the reduction algorithm mentioned above when the input is $(G, k)$ and $\delta>0$. Then, there exists a set of edges $F'$ in $G'$ of size most $k'$, such that $G'/F'$ is a split graph.
\end{lemma}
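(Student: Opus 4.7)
The plan is to construct $F'$ explicitly from the promised colorful $k$-clique of $G$. Let $K = \{v_1, \ldots, v_k\} \subseteq V(G)$ be the vertex set of a colorful $k$-clique, and for each color $i \in [t]$ let $e_i \in E(G[K])$ be the unique edge of $G[K]$ with $\phi(e_i) = i$, so that $e_1, \ldots, e_t$ enumerate $E(G[K])$. Set $a_i := w_{e_i} \in \edgegadget_i$, $N := \bigcup_{u \in K} X_u$ (so $|N| = \rho k$), and fix a ``spare'' vertex $z^\star \in Z$ that lies in none of the $X_u$ (one exists among the $k^\circ+2$ extras placed in $Z$ by the reduction).

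I will build $F'$ in two stages. First, for each $i \in [t]$, put the two edges $g_ia_i$ and $s_ia_i$ into $F'$: both exist in $G'$ since $g_i$ and $s_i$ are joined to every vertex of $\edgegadget_i$, and contracting them fuses $\{g_i, s_i, a_i\}$ into a single super-vertex $m_i$. This stage contributes $2t$ edges. Second, since $N \cup \{z^\star\} \subseteq Z$ lies in a clique of $G'$, add any spanning tree of $G'[N \cup \{z^\star\}]$ to $F'$; the corresponding contractions fuse $N \cup \{z^\star\}$ into a single super-vertex $z^+$ and contribute $\rho k$ edges. Altogether $|F'| = 2t + \rho k = k'$, as required.

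To finish, I would verify that $(C, I)$ is a split partition of $G'/F'$, where $C := \{z^+\} \cup (Z \setminus (N \cup \{z^\star\})) \cup \{m_1, \ldots, m_t\}$ and $I := (\edgegadget \setminus \{a_1, \ldots, a_t\}) \cup \mathtt{Guard}_V \cup \mathtt{Guard}_E \cup \mathtt{Guard}_S$. The clique property of $C$ splits into three items: the former $Z$-vertices $\{z^+\} \cup (Z \setminus (N \cup \{z^\star\}))$ still form a clique since $Z$ did; each $m_i$ is adjacent to every former $Z$-vertex in $C$, because $a_i = w_{e_i}$ with $e_i = u_iv_i$ and $u_i, v_i \in K$ forces $X_{u_i} \cup X_{v_i} \subseteq N$, so $a_i$ was adjacent in $G'$ to every vertex of $Z \setminus N$ and in particular to $z^\star$ (giving the edge to $z^+$); and $m_im_j$ for $i \neq j$ is supplied by the edge $s_is_j$ in the clique $\specialvert$. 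The independence of $I$ is routine: each $\edgegadget_i$ is independent, different edge-selector sets are mutually non-adjacent, and the three guard families consist of pendants whose unique neighbors lie in $C$. The one delicate point to anticipate is that collapsing $N$ into a single $Z$-vertex is only helpful if the resulting super-vertex remains adjacent to every $m_i$; this is precisely why the reduction keeps $k^\circ+2$ extras in $Z$ outside all $X_u$ and why our second-stage spanning tree must include such a $z^\star$.
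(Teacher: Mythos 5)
Your proposal is correct and follows essentially the same strategy as the paper's proof: select the $t$ edge-selector vertices $w_{e_i}$ from the colorful $k$-clique, spend $2t$ contractions fusing each $\{g_i, s_i, w_{e_i}\}$, and spend $\rho k$ more contractions fusing $\bigcup_{u\in K}X_u$ together with one "safe" vertex of $Z$ into a single super-vertex, then verify the resulting split partition. The only cosmetic differences are that you take the safe vertex $z^\star$ from the $k^\circ{+}2$ extras in $Z$ and contract via an arbitrary spanning tree, whereas the paper picks a vertex $z_0\in X_{u_0}$ for some $u_0\notin K$ (using $|V(G)|\geq k{+}1$) and contracts via a star centered at $z_0$; both choices give a vertex adjacent to every $w_{e_i}$, so the arguments are interchangeable.
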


\begin{proof}
 Let $S = \{u^1,\cdots,u^k\}$ be a set of vertices that induces a colorful-$k$-clique in $(G,k)$. 
  Let $E_S = \{e_1,\cdots, e_t\}$ be the set of edges in $G[S]$. Since $E_S$ is a set of colorful edges, let $e_i\in E_i$, where $i$ in $[t]$. 
 Let $\w=\{w_{e_i}\in V(G')\mid e_i\in E_S\}$, we construct a solution $F'$ to $(G',k')$ as follows. For every $w_{e_i}\in \w$, add the edges $w_{e_i}g_i, w_{e_i}s_i \in E(G')$ to $F'$, where $g_i$ is the \tps corresponding to $\edgegadget_i$, and $s_i$ is the \specss corresponding to $\edgegadget_i$. 
 Note that we have added $2t$ edges to $F'$. 
As the number of vertices in $V(G)$ is at least $k + 1$, there exists a vertex, say $u_0$, in $V(G) \setminus S$.
Consider a vertex $z_0$ in $X_{u_0}$ in graph $G'$.  
For every vertex $u$ in $S$, we add the edges $\{u_1z_0,\cdots,u_\rho z_0\}$ to $F'$. 
Thus, for every $u\in S$, we have added $\rho$ edges to $F'$. Hence, $|F'|= 2t+\rho k= k'$.

We now show that $G'/F'$ is a split graph. Let $\mathtt{New}$ be the set of new vertices that are introduced in $G'\slash F'$ by contracting edges in $F'$. Let $C=(Z\setminus (\cup_{u\in S}X_u\cup \{z_0\}))\cup \mathtt{New}$. Let $I = \mathtt{Guard}_V \cup\ \mathtt{Guard}_E \cup\ \mathtt{Guard}_S \cup\ (\edgegadget\setminus \w)$. We claim that $(C,I)$ is a split partition of $G'\slash F'$. By the construction of $G'$, $V(G')=Z\cup\ \edgegadget \cup\ \mathtt{Cap} \cup\ \specialvert \cup\ \mathtt{Guard}_V\cup\ \mathtt{Guard}_E \cup\ \mathtt{Guard}_S$. Since $\{z_0\}\cup (\cup_{u\in S}X_u) \cup W \cup\ \mathtt{Cap} \cup\ \specialvert \subseteq V(F')$, we know that $(C,I)$ is a partition of $G'\slash F'$.
As $I$ is an independent set in $G'$ and no edges incident to $I$ are contracted, this set is also an independent set in $G'/F'$.

We now argue that $C$ is a clique in $G'$.  
Since every pair of vertices in $Z$ is adjacent to each other in $G'/F'$, if the vertices $u, v$ are in $Z\setminus \mathtt{New}$, then they are adjacent to each other in $G'/F'$. 
Consider a vertex $u$ in $\mathtt{New}$; we have the following two cases. 

\noindent {\color{blue}\textsl{Case (A): Vertex $u$ is obtained by contracting $w_{e_i}g_i$ and $w_{e_i}s_i$ for some edge $e_i (= xy)$ in $E_i$}}.
By the construction of $G'$, the vertex $w_{e_i}$ is adjacent to all the vertices in $Z\setminus (X_x\cup X_y)$ of the graph $G'$. Hence, $u$ is adjacent to all the vertices, which are in $Z \setminus (X_x \cup X_y)$, thus, $u$ is adjacent to all the vertices in $Z\setminus (\cup_{u\in S}X_u\cup \{z_0\})$. We now show that $u$ is also adjacent to all the vertices in $\mathtt{New}$. Let $v \in \mathtt{New}$, and suppose that $v$ is a vertex obtained by contracting $w_{e_j}g_j$ and $w_{e_j}s_j$, where $e_j\in E_S$. Since $s_i,s_j$ are in $\specialvert$, $s_is_j\in E(G')$, $u,v$ are adjacent to each other in $G'\slash F'$. Now, suppose that $v$ is obtained by contracting $x_iz_0$, where $x\in S$ and $i\in [\rho]$. Since $u_0 \notin S$, $z_0$ is adjacent to $w_{e_i}$, $uv$ is an edge in $E(G'\slash F')$.

\noindent{\color{blue}\textsl{Case (B): Vertex $u$ is obtained by contracting the edge $x_iz_0$, where $x\in S$}}. Since $z_0\in Z$, $u$ is adjacent to all the vertices in $Z\setminus (\cup_{u\in S}X_u\cup \{z_0\})$. Now, consider another vertex $v\in \mathtt{New}$. Note that as $u,v$ are two different vertices in $V(G'\slash F')$, $v$ can not be obtained by contracting $y_iz_0$ for any $y\in S$. Thus, $v$ is obtained by contracting $w_{e_j}g_j$ and $w_{e_j}s_j$, where $e_j\in E_S$. Then, as argued above $uv$ is an edge in $E(G'\slash F')$ (we only need to interchang $u$ and $v$ in the previous argument). Hence, any two vertices in $C$ are adjacent.

This implies that $(C, I)$ is a split partition of $G'/F'$, and hence $G'/F'$ is a split graph. Since the number of edges in $F'$ is at most $k'$, this proves the lemma.
\end{proof}

In the following lemma, we argue that given an approximate solution for an instance of \splitcontract, one can obtain a set of $\alpha k$ vertices, $\alpha > 1$, that spans at least $\epsilon t$ colorful edges, where $0<\epsilon < 1$. 

\begin{lemma}\label{lem:rev}
Let $(G', k')$ be an instance of \splitcontract returned by the reduction algorithm mentioned above when the input is $(G, k)$ and $\delta >0$. If there exists a set of edges $F'$ in $G'$ such that $G'/F'$ is a split graph and size of $F'$ is at most $(\nicefrac{5}{4} - \delta)k'$, then there exists a set of at most $\nicefrac{1}{\delta}\cdot k$ vertices in $G$ that spans at least $\nicefrac{3\delta}{2} \cdot t$ colorful edges.
\end{lemma}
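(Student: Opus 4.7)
The plan is to examine the $G'/F'$-witness structure $\calW$ and extract $V^*$ directly from it. Fix a split partition $(C,I)$ of $G'/F'$, and for each vertex $v\in V(G')$ let $W(v)$ denote the witness set containing $v$. Because every vertex of $Z\cup\mathtt{Cap}\cup\specialvert$ carries $k^\circ$ pendant guards and $k^\circ>|F'|$, I will first verify that every such witness set lies on the clique side, i.e.\ $W(v)\in C$; moreover $W(g_i)$ and $W(s_i)$ are non-trivial for every $i$. Since $W(g_i)\in C$ must be adjacent in $G'/F'$ to every $W(z)$ with $z\in Z$ and $g_i$'s only non-guard neighbors lie in $\edgegadget_i$, it will follow that $W(g_i)\cap\edgegadget_i\neq\emptyset$.

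For each $i\in[t]$ pick $a_i\in W(g_i)\cap\edgegadget_i$, let $e_i=u_iv_i\in E_i$ be the corresponding edge of $G$, and set $V^*=\bigcup_i\{u_i,v_i\}\subseteq V(G)$ and $N^*=\bigcup_{u\in V^*}X_u\subseteq Z$. By construction the edges $e_1,\ldots,e_t$ are pairwise color-distinct and lie in $E(G[V^*])$, so $V^*$ already spans $t$ colorful edges. It therefore suffices to establish $|V^*|\le k/\delta$.

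The heart of the argument is the inequality $|F'|\ge 2t+|N^*|/2$, obtained by charging contractions to two disjoint pools. (i) For each color $i$, either $W(g_i)=W(s_i)$ has size at least $3$ (containing $g_i,a_i,s_i$) and contributes $\ge 2$ contractions, or $W(s_i)$ is a separate non-trivial clique-side set and $W(g_i),W(s_i)$ together contribute $\ge 1+1$; summed over colors this gives $\ge 2t$ contractions. (ii) Every $z\in N^*$ lies in $X_{u_i}\cup X_{v_i}$ for some $i$, hence is a non-neighbor (in $G'$) of $a_i\in W(g_i)$, forcing $W(z)$ to be non-trivial; as a non-trivial witness set of size $\ell$ contains $\ell$ vertices but uses only $\ell-1$ contractions, an averaging argument (optimal pairing of $N^*$ vertices) contributes at least $|N^*|/2$ further contractions disjoint from (i).

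Plugging $|F'|\le(5/4-\delta)k'=(5/4-\delta)(2t+\rho k)$ into $|F'|\ge 2t+|N^*|/2$ and using $\rho=\lceil\delta t/k\rceil\ge \delta t/k$, a direct calculation yields $|V^*|=|N^*|/\rho\le k/\delta-3k/2+O(1/\delta)$, which is at most $k/\delta$ once $k$ exceeds a small $\delta$-dependent threshold (below which the lemma is trivially addressed by the extremal bound on $t$). Since any $\delta<2/3$ gives $t\ge (3\delta/2)\,t$, the set $V^*$ simultaneously satisfies the size and coloring requirements. The main obstacle is the contraction count in pool (ii): one must argue that even when the adversary inflates some $W(g_i)$ by placing several $\edgegadget_i$ vertices inside it (thereby `freeing' certain $z\in N^*$ from being forced into non-trivial witness sets), the extra contractions inside $W(g_i)$ precisely compensate, so the bound $|F'|\ge 2t+|N^*|/2$ is preserved in all cases.
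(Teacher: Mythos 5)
Your outline follows the first half of the paper's argument closely (the claims that $W(g_i),W(s_i),W(z)$ all land on the clique side, that each $W(g_i)$ meets $\edgegadget_i$, and that this already gives $2t$ edges charged to cap/special vertices). But the second half — the inequality $|F'|\ge 2t+|N^*|/2$ with $N^*$ defined from \emph{all} $t$ colors — has a genuine gap that you flag but do not (and, I believe, cannot in this form) close.

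The assertion ``every $z\in N^*$ is in $V(F')$'' is derived from: $\{z\}$, as a singleton clique-side witness set, must be adjacent in $G'/F'$ to $W(g_i)$, and the only vertex of $W(g_i)$ that could supply that adjacency is $a_i$, which is a non-neighbor of $z$. This ``only candidate'' step needs $W(g_i)\cap\edgegadget=\{a_i\}$ and $W(g_i)\cap Z=\emptyset$. If $W(g_i)$ also contains $a_i'\in\edgegadget_i$ whose underlying edge $e_i'$ shares no endpoint with $e_i$, then every $z\in X_{u_i}\cup X_{v_i}$ is a neighbor of $a_i'$, so $W(g_i)$ is adjacent to $\{z\}$ through $a_i'$ and nothing forces $z$ into $V(F')$. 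The ``precise compensation'' you invoke does not hold quantitatively: the extra $\edgegadget$ vertex in $W(g_i)$ costs exactly \emph{one} additional contraction, yet it can exonerate up to $2\rho$ vertices of $N^*$, i.e.\ erase up to $\rho$ contractions from pool~(ii). Since $\rho=\lceil\delta t/k\rceil$ is unbounded, the deficit grows without bound and $|F'|\ge 2t+|N^*|/2$ fails. The same collapse (even more starkly) occurs if $W(g_i)$ contains a vertex of $Z$: then $W(g_i)$ is adjacent to every singleton $\{z\}$ through the clique $Z$, and no vertex of $N^*$ is forced at all; your sketch does not mention this case.

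This is exactly why the paper does \emph{not} try to extract all $t$ colors. It classifies each cap vertex as ``spoiled'' if its witness set contains another cap vertex, meets $Z$, or has more than one $\edgegadget$ vertex (Claims~5.9--5.11), uses the accounting function $\lambda$ to show each spoiled cap vertex consumes an \emph{extra} edge of $F'$, and then uses the budget $|F'|-2t\le(\nicefrac12-\gamma)t$ to conclude at most $(1-2\gamma)t$ cap vertices are spoiled. Only for the $\ge 2\gamma t=\nicefrac{3\delta}{2}\cdot t$ \emph{unspoiled} cap vertices does the ``all non-neighbors of $a_i$ are in $V(F')$'' argument go through, which is precisely why the lemma promises $\nicefrac{3\delta}{2}\cdot t$ colorful edges and not $t$. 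In short: the difference between your target ($t$ colors) and the paper's ($2\gamma t$ colors) is not slack; it is the price paid for the spoiled witness sets, and your proposal has no mechanism for that price.
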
 

We establish some properties of the instance of \splitcontract\ that is returned by the reduction algorithm and its solution before presenting proof of Lemma~\ref{lem:rev}.
In Claims~\ref{claim:clique-side-ver} to \ref{claim:disjoint-from-Z}, we use the following notation:
The reduction algorithm returns $(G', k')$, when input is $(G, k)$, and $\delta$. Let $F'$ be a set of edges in $G'$ such that $G'/F'$ is a split graph and size of $F'$ is at most $(\nicefrac{5}{4} - \delta) \cdot k'$. Let $\calW$ be the $G'/F'$-witness structure of $G'$. 
Let $\psi : V(G') \rightarrow V(G'/F')$ be the onto function corresponding to contracting all the edges in $F'$. 
For a vertex $\tilde{w}$ in $V(G'/F')$, $W(\tilde{w})$ denotes the witness set which is contracted to obtain the vertex $\tilde{w}$.
Hence, for a vertex $u$ in $V(G')$, if $u \in W(\tilde{w})$ then $\psi(u) = \tilde{w}$.
We fix a split partition $(\tilde{C}, \tilde{I})$ of $V(G'/F')$.
Note that $|F'|+2 \leq 2|F'| \le 2 (\nicefrac{5}{4} - \delta)k' \le \nicefrac{5}{2}k' \le k^{\circ}$. Hence, there are at least  
$|F'| + 2$ many pendant vertices adjacent to every cap vertices, special vertices, and every vertex in $Z$ of $G'$. Moreover, since the size of $Z$ is at least $k^{\circ}+2$, there exists at least one vertex in $Z$ which is not in $V(F')$. Let $z^{\star}$ be one such vertex in $V(G')$ which is in $Z \setminus V(F')$. Note that $W(\psi(z^{\star}))= \{z^{\star}\}$.

We refer readers to Section~\ref{sec:intro} for an overview of the proof.
Consider a cap vertex $g_i$ and the witness set $W(\psi(g_i))$.  
A cap vertex $g_i$ is said to be \emph{spoiled}, if $W(\psi(g_i))$ either $(a)$ contains another cap vertex;
$(b)$ intersects with $Z$; or
$(c)$ has more than one vertex from $\edgegadget$. 
We bound the number of cap vertices that can be spoiled because of $(a), (b)$ or $(c)$ in Claims~\ref{claim:multi-cap}, \ref{claim:intersect-with-Z}, and \ref{claim:disjoint-from-Z}, respectively.
We first present few results, which are used in the proof of these claims.

\begin{claim}\label{claim:clique-side-ver} If $u \in V(G')$ is either a cap vertex or a special vertex or in $Z$, then $\psi(u)$ is in $\tilde{C}$. 
\end{claim}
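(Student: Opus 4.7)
The plan is to exploit the fact that every vertex $u$ of the three types listed (cap vertex, special vertex, or vertex of $Z$) has a large private collection of pendant guard vertices attached to it, and that $k^\circ$ has been chosen large enough that the contraction budget cannot touch more than a tiny fraction of them. Concretely, I would first note that $|V(F')| \le 2|F'| \le 2(\tfrac{5}{4}-\delta)k' < k^\circ$, which matches the observation made just before the claim that $k^\circ \ge |F'|+2$. In particular, among the $k^\circ$ pendant guards adjacent to $u$ (drawn from $\mathtt{Guard}_V$, $\mathtt{Guard}_E$, or $\mathtt{Guard}_S$ depending on the type of $u$), at least two, call them $g_1$ and $g_2$, lie outside $V(F')$ and therefore form singleton witness sets $\{g_1\},\{g_2\}$ in $\calW$.

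The heart of the argument is the very simple neighborhood structure of these two singletons in $G'/F'$. Since $g_1$ and $g_2$ are pendants of $u$ in $G'$, their only neighbor in $G'$ is $u$, and since neither is touched by $F'$, the only neighbor of $\psi(g_1)$ (respectively $\psi(g_2)$) in $G'/F'$ is $\psi(u)$. In particular $\psi(g_1)$ and $\psi(g_2)$ are two distinct, non-adjacent vertices of $G'/F'$. Because $\tilde C$ is a clique of $G'/F'$, at most one of $\psi(g_1), \psi(g_2)$ can lie in $\tilde C$; without loss of generality, assume $\psi(g_1) \in \tilde I$.

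Now I would close the argument by contradiction: suppose $\psi(u) \in \tilde I$. Then $\psi(g_1)$ and $\psi(u)$ both lie in $\tilde I$, but they are adjacent in $G'/F'$ (since $g_1u \in E(G')$ and this edge survives contraction because $\{g_1\}$ and $W(\psi(u))$ are distinct witness sets). This contradicts $\tilde I$ being an independent set in the split partition $(\tilde C,\tilde I)$. Hence $\psi(u) \in \tilde C$, which is what we wanted.

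I expect no real obstacle here; the main thing is to justify carefully that $k^\circ$ is large enough to guarantee two untouched guards, which is exactly the arithmetic $2|F'|+2 \le 2(\tfrac{5}{4}-\delta)k' + 2 \le \lceil \tfrac{5}{2}k'\rceil + 2 = k^\circ$ already recorded in the paragraph preceding the claim. The rest is just the pendant-neighborhood observation together with the definition of a split partition.
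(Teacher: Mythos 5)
Your proof is correct and matches the paper's argument essentially step for step: both pick two untouched pendant guards forming singleton witness sets, use their non-adjacency in $G'/F'$ to conclude one must land in $\tilde I$, and then use that guard's adjacency to $\psi(u)$ to force $\psi(u) \in \tilde C$. The only cosmetic difference is that the paper states the final step directly rather than phrasing it as a contradiction.
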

\begin{proof} Any such vertex $u$ is adjacent to $|F'| + 2$ many pendant vertices. This implies that there are at least two pendant vertices, say $u_1, u_2$, which are not in $V(F')$, which in turn  implies that $W(\psi(u_1))$ and $W(\psi(u_2))$ are singleton sets in $\calW$. Since $u_1, u_2$ are not adjacent to each other in $G'$ and $W(\psi(u_1)), W(\psi(u_2))$ are singleton witness sets, $\psi(u_1)$, $\psi(u_2)$ are not adjacent to each other in $G'/F'$. Hence, at most one of them can be in $\tilde{C}$. Without loss of generality, let $\psi(u_1)$ is in $\tilde{I}$. Since $u$ is adjacent to $u_1$ in $G'$, and $u$ is not contained in $\psi(u_1)$, $\psi(u)$ is adjacent to $\psi(u_1)$. This implies $\psi(u)$ is in $\tilde{C}$. 
\end{proof}

\begin{claim}\label{claim:cap-vertex-solution-edge} For every cap vertex $g_i$ there exists a vertex $u_e$ in $\edgegadget_i$ such that $g_iu_e$ is in $F$. 
\end{claim}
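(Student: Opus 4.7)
The plan is to argue by contradiction. Suppose there exists a cap vertex $g_i$ such that no edge of the form $g_iu_e$ with $u_e\in \edgegadget_i$ belongs to $F'$.

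First I would pin down what $W(\psi(g_i))$ can look like. Recall that the neighborhood of $g_i$ in $G'$ consists \emph{only} of $\edgegadget_i$ together with the $k^{\circ}$ pendant guards $y_i^1,\ldots,y_i^{k^{\circ}}$ in $\mathtt{Guard}_E$. Under the contradiction hypothesis, the only edges of $F'$ that can be incident to $g_i$ are of the form $g_iy_i^j$. Since each $y_i^j$ has degree one in $G'$ (its unique neighbor is $g_i$), the witness set $W(\psi(g_i))$ is contained in $\{g_i\}\cup \{y_i^1,\ldots,y_i^{k^{\circ}}\}$.

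Next I would use the earlier structural results. By Claim~\ref{claim:clique-side-ver} applied to the cap vertex $g_i$, we have $\psi(g_i)\in \tilde{C}$. Similarly, $z^{\star}\in Z\setminus V(F')$ satisfies $W(\psi(z^{\star}))=\{z^{\star}\}$ and $\psi(z^{\star})\in \tilde{C}$ by the same claim. Since $\tilde{C}$ is a clique in $G'/F'$, the two distinct vertices $\psi(g_i)$ and $\psi(z^{\star})$ must be adjacent in $G'/F'$, which by the definition of contraction demands at least one edge in $G'$ with one endpoint in $W(\psi(g_i))$ and the other in $\{z^{\star}\}$.

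The final step is to observe that no such edge exists in $G'$. Indeed, $z^{\star}\in Z$ is not adjacent to $g_i$ in $G'$ (cap vertices are adjacent only to $\edgegadget_i$ and to their own pendant guards), and $z^{\star}$ is not adjacent to any guard $y_i^j$ either, since each such guard's only neighbor in $G'$ is $g_i$. This contradicts the required adjacency of $\psi(g_i)$ and $\psi(z^{\star})$ in $\tilde{C}$, so the assumption must be false; hence some edge $g_iu_e$ with $u_e\in \edgegadget_i$ lies in $F'$. The only real point of care is ensuring that a singleton $\tilde{C}$-vertex $\psi(z^{\star})$ exists at all, which was precisely the reason for inflating $|Z|$ to $k^{\circ}+2$ in the reduction; no hard calculations are needed beyond this book-keeping.
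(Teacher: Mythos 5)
Your proof is correct and follows essentially the same route as the paper's: both argue by contradiction that, if no $g_iu_e$ edge lies in $F'$, then $W(\psi(g_i))$ is confined to $g_i$ and its pendant guards, and hence cannot be adjacent in $G'$ to the singleton witness set $\{z^{\star}\}$, contradicting Claim~\ref{claim:clique-side-ver} which places both $\psi(g_i)$ and $\psi(z^{\star})$ in the clique side $\tilde{C}$. You spell out more explicitly that the degree-one guards force $W(\psi(g_i))\subseteq\{g_i\}\cup\mathtt{Guard}_E$, a step the paper states more tersely, but this is the same argument.
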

\begin{proof} Recall that $\psi(z^{\star})$ is a vertex in $\tilde{C}$ and $W(\psi(z^{\star}))$ is a singleton witness set. 
Assume that for a cap vertex $g_i$, there is no vertex $u_e$ in $\edgegadget_i$ such that edge $g_iu_e$ is in $F$. This implies that $\psi(g_i) \cap \edgegadget_i$ is an empty set. Since neighbors of $g_i$ outside $\edgegadget_i$ (i.e. pendant neighbors of $g_i$) are not adjacent to $z^{\star}$ in $G'$, there is no edge with one endpoint in $W(\psi(g_i))$ and another one in $W(\phi(z^{\star}))$. 
But by Claim~\ref{claim:clique-side-ver}, both $\psi(z^{\star})$ and $\psi(g_i)$ are in $\tilde{C}$. This contradicts the fact that $\tilde{C}$ is a clique. Hence, our assumption is wrong, which concludes the proof of the claim. 
\end{proof}

\begin{claim}\label{claim:special-vertex-solution-edge} For every special vertex $s_i$ there exists a vertex $u_e$ in $\edgegadget$ such that $u_e$ is in $W(\psi(s_i))$. 
\end{claim}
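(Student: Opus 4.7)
The plan is to argue by tracing a connectivity path inside $W(\psi(s_i))$ that forces an $\edgegadget$-vertex to appear in it. First, I would invoke Claim~\ref{claim:clique-side-ver} to conclude that both $\psi(s_i)$ and $\psi(z^\star)$ lie in $\tilde{C}$; since $\tilde{C}$ is a clique in $G'/F'$ and $W(\psi(z^\star))=\{z^\star\}$, there must exist a vertex $v\in W(\psi(s_i))$ with $v z^\star \in E(G')$. In particular $v\neq s_i$, because in $G'$ the neighborhood of $s_i$ is contained in $\specialvert\cup \edgegadget_i\cup \mathtt{Guard}_S$, which is disjoint from $\{z^\star\}$. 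Hence $W(\psi(s_i))$ is non-trivial.

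Next, since $G'[W(\psi(s_i))]$ is connected, there is a path $P = s_i, u_1, u_2,\dots, u_\ell = v$ wholly inside $W(\psi(s_i))$. I would walk along $P$ from $s_i$ and classify, at each step, which ``type'' of vertex in the construction $u_j$ belongs to, using the adjacencies fixed by the reduction. The first step $u_1$ is a neighbor of $s_i$, so $u_1 \in \specialvert\cup \edgegadget_i \cup \mathtt{Guard}_S$. If $u_1 \in \edgegadget_i$ we are done, so assume otherwise. A vertex of $\mathtt{Guard}_S$ attached to $s_i$ is pendant (its only $G'$-neighbor is $s_i$), so the path cannot strictly proceed further through it; thus if $u_1 \in \mathtt{Guard}_S$ then $\ell=1$ and $v = u_1$, but then $v$ is not adjacent to $z^\star$ — contradiction. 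Hence $u_1 \in \specialvert$.

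Now I would iterate: while $u_j \in \specialvert$, its neighbors in $G'$ all lie in $\specialvert\cup \edgegadget_{j'}\cup \mathtt{Guard}_S$ for the appropriate index $j'$, and the pendant-guard subcase is ruled out by the same dead-end argument. So the next non-$\specialvert$ vertex on $P$ must lie in $\edgegadget_{j'}$ for some $j'\in[t]$. The key observation that forces termination is this: the endpoint $v$ is adjacent to $z^\star$, and $N_{G'}(z^\star)\subseteq Z\cup \edgegadget \cup \mathtt{Guard}_V$; none of these three sets meets $\specialvert\cup \mathtt{Guard}_S$. Therefore the path cannot stay within $\specialvert$ forever, and as soon as it leaves $\specialvert$ it must enter $\edgegadget$ (transitioning from a special vertex to $Z$ or $\mathtt{Guard}_V$ directly is impossible, since neither $Z$ nor $\mathtt{Guard}_V$ is adjacent to any $\specialvert$-vertex in $G'$). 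Thus some $u_j\in \edgegadget$, and since $u_j\in W(\psi(s_i))$, the claim follows.

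The main technical point, and the one I would write out most carefully, is the ``dead-end'' analysis at each step: verifying exhaustively from the reduction's adjacency lists that the only way a connected subgraph containing $s_i$ can reach a neighbor of $z^\star$ is to include at least one vertex of $\bigcup_{j} \edgegadget_j$. Everything else is a straightforward invocation of Claim~\ref{claim:clique-side-ver} and the adjacency rules set out in the construction.
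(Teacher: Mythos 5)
Your proposal is correct and follows essentially the same approach as the paper: invoke Claim~\ref{claim:clique-side-ver} to place both $\psi(s_i)$ and $\psi(z^\star)$ in $\tilde{C}$, observe that a connected set containing $s_i$ and avoiding $\edgegadget$ is confined to $\specialvert\cup\mathtt{Guard}_S$, and derive a contradiction from the non-adjacency of that set to $z^\star\in Z$. The paper states the confinement in one line by contradiction, whereas you derive it by explicitly tracing the connectivity path, but the underlying argument is identical.
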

\begin{proof} For the sake of contradiction, assume that there exists a witness set, say $W(\psi(s_i))$ such that $W(\psi(s_i)) \cap \edgegadget =\emptyset$.  
Recall that $\psi(z^{\star})$ is a vertex in $\tilde{C}$ and $W(\psi(z^{\star}))$ is a singleton witness set. Since $W(\psi(s_i))$ does not contain any vertex of $\edgegadget$, $W(\psi(s_i)) \subseteq \mathtt{SV}\cup \mathtt{Guard}_S$. Hence,  
there is no edge with one endpoint in $W(\psi(s_i))$ and another one in $W(\psi(z^{\star}))$. 
By Claim~\ref{claim:clique-side-ver} both $\psi(s_i)$ and $\psi(z^{\star})$ are in $\tilde{C}$. This contradicts the fact that $\tilde{C}$ is a clique. Hence, our assumption is wrong, which concludes the proof of the claim. 
\end{proof}

\begin{claim}\label{claim:special-vertex-t-edges} There are at least $2t$ edges in $F'$, which are incident to either cap vertices, or special vertices. Moreover, every vertex can be assigned to an edge in $F'$,  which is unique to it.
\end{claim}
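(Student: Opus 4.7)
The plan is to exhibit an injection from the $2t$ vertices in $\{g_1, \ldots, g_t, s_1, \ldots, s_t\}$ into $F'$; such an injection immediately gives $2t$ distinct edges of $F'$ each incident to a cap or special vertex and simultaneously supplies the ``unique edge'' promised by the second sentence of the claim.

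First I would handle the cap vertices. For each $i \in [t]$, Claim~\ref{claim:cap-vertex-solution-edge} yields an edge $g_iu_{e_i} \in F'$ with $u_{e_i} \in \edgegadget_i$, and I assign $g_i \mapsto g_iu_{e_i}$. Distinct cap vertices map to distinct edges, because each such edge has a unique cap-vertex endpoint.

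For the special vertices, fix $i \in [t]$. Claim~\ref{claim:special-vertex-solution-edge} furnishes some $u_{e^i} \in \edgegadget \cap W(\psi(s_i))$. Because $W(\psi(s_i))$ is connected through the edges of $F'$ lying within it, I pick a spanning tree $T_i$ of $G'[W(\psi(s_i))]$ all of whose edges are in $F'$, root $T_i$ at $u_{e^i}$, and send $s_i$ to the (unique) edge joining $s_i$ to its parent in $T_i$. Two special vertices that happen to lie in the same witness set share a rooted tree, and in any rooted tree distinct non-root vertices have distinct parent-edges; two special vertices in different witness sets are sent to edges that live in vertex-disjoint trees. In either case the images are distinct.

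It remains to verify that the two sub-assignments do not collide, and this is the only substantive technical point. From the construction of $G'$, every neighbor of $g_i$ lies in $\edgegadget_i$ or is a pendant guard of $g_i$ in $\mathtt{Guard}_E$, while every neighbor of $s_j$ lies in $\edgegadget_j$, in $\specialvert \setminus \{s_j\}$, or is a pendant guard of $s_j$ in $\mathtt{Guard}_S$; in particular, no edge of $G'$ has a cap vertex at one endpoint and a special vertex at the other. Consequently, the edge assigned to any cap vertex has no special-vertex endpoint while the edge assigned to any special vertex has one, so the two partial assignments cannot collide. The main obstacle, which the trick of rooting the spanning tree at an $\edgegadget$-vertex resolves cleanly, is guaranteeing distinctness of the edges chosen for special vertices that share a single witness set; everything else follows from elementary neighborhood inspection.
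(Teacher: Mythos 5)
Your proof is correct and follows essentially the same strategy as the paper: use Claim~\ref{claim:cap-vertex-solution-edge} to assign each cap vertex its edge into $\edgegadget_i$, use Claim~\ref{claim:special-vertex-solution-edge} plus a rooted spanning tree argument to give each special vertex a distinct parent-edge, and finish by observing that cap–special adjacencies do not exist in $G'$ so the two assignments cannot clash. One small stylistic difference: the paper takes a spanning tree of $G'[S \cup \{u_e\}]$ where $S = W(\psi(s_i)) \cap \specialvert$ (invoking a ``without loss of generality'' step to place such a tree inside $F'$), whereas you take a spanning tree of the full witness set $G'[W(\psi(s_i))]$ — which exists unconditionally by definition of a witness set — and read off parent-edges from there; your version is a touch cleaner but the accounting is the same.
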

\begin{proof} By Claim~\ref{claim:cap-vertex-solution-edge}, there are at least $t$ edges incident to cap vertices. Since a cap vertex is neither adjacent to other cap vertex nor to a special vertex, a cap vertex can be uniquely mapped to the edge incident to it.
  
 Consider a special vertex $s_i$. Define $S = W(\psi(s_i)) \cap \specialvert$.
 By Claim~\ref{claim:special-vertex-solution-edge}, $W(\psi(s_i))$ contains a vertex, say $u_e$, which is in $\mathtt{ES}$. Hence, there is an additional edge, which is incident to $S$.
 Without loss of generality, we can assume that $F'$ contains a spanning tree of $G'[S \cup \{u_e\}]$. This implies that there is at least $|S|$ many edges incident to $S$. To assign each vertex to a unique edge, we root this spanning tree at $u_e$. For every vertex in $S$, assign it to the edge connecting that vertex to its parent in this rooted tree. 
 Since witness structure $\calW$, partitions $\specialvert$, there are at least $|\specialvert| = t$ edges in $F'$ which has at least one endpoint in $\specialvert$. Note that none of these edges is incident to cap vertices. This concludes the proof.
\end{proof}

Claim~\ref{claim:special-vertex-t-edges} allows us to define an \emph{one-to-one} function $\lambda$ from the set of cap vertices and special vertices to the edges in $F'$ such that $\lambda(v)$ is an edge incident to $v$. We call such functions as \emph{accounting function}s. 
We use such a function in the following arguments to bound certain kinds of witness sets.
For an accounting function $\lambda$, the set of edges in $F'$ which do not belong to the range of $\lambda$ are called \extra.
We fix a function $\lambda$ and modify it in Claim~\ref{claim:multi-cap}~and~\ref{claim:disjoint-from-Z} at certain vertices of special vertices to obtain another accounting function. We use the number of edges in \extra\ with respect to this function to argue that the number of spoiled cap vertices is not very large. 
Let $\gamma = \nicefrac{3\delta}{4}$. 
Note that, for any accounting function $\lambda$, the number of edges in \extra\ is at most $|F'|- 2t \le (\nicefrac{5}{4} - \delta)k' -2t \le (\nicefrac{5}{4} - \delta) \cdot (2t + \rho k)-2t \le (\nicefrac{1}{2})t - (\nicefrac{3\delta}{4})t = (\nicefrac{1}{2} - \gamma) \cdot t$. If an edge is not incident to cap or special vertex then such an edge is always in \extra\ for any accounting function.  

\begin{claim} \label{claim:multi-cap}
Let $\mathtt{\mathtt{Cap}_1}$ be a subset of $\mathtt{Cap}$ such that every vertex in $\mathtt{Cap}_1$ is in a witness set in $\cal W$ that contains at least two cap vertices. Then, $F'$ contains at least $\nicefrac{1}{2}\cdot |\mathtt{Cap}_1|$ edges in {\rm $\extra$} for an accounting function $\lambda$.
\end{claim}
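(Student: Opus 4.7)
The plan is to prove the equivalent inequality $|F'|\ge 2t+\tfrac12|\mathtt{Cap}_1|$: every accounting function $\lambda$ has image of size exactly $2t$, so $|\extra|=|F'|-2t$ is in fact independent of the choice of $\lambda$. I would first observe that, without loss of generality, $F'\cap E(G'[W])$ is a spanning tree $T_W$ of $G'[W]$ for every witness set $W\in\calW$ (otherwise replace $F'$ by the union of such spanning trees, which cannot increase $|F'|$). Then $|F'|=\sum_W(|W|-1)$, and the edges of $T_W$ in the image of $\lambda$ are exactly the ones assigned to the cap and special vertices inside $W$, totalling $\ell_W+s_W$, where $\ell_W=|W\cap\mathtt{Cap}|$ and $s_W=|W\cap\mathtt{SV}|$.

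The heart of the proof is a structural lower bound: for every witness set $W$ with $\ell_W\ge 2$, I will show $|W|\ge 2\ell_W+s_W$, equivalently $|T_W|\ge 2\ell_W+s_W-1$. By Claim~\ref{claim:cap-vertex-solution-edge}, every cap $g_i\in W$ has an ES-neighbour in $W$ joined to it by an edge of $F'$. Since each $u_e\in\mathtt{ES}_i$ is adjacent to the single cap $g_i$, and since the sets $\mathtt{ES}_1,\ldots,\mathtt{ES}_t$ are pairwise disjoint, the ES-neighbours witnessing different caps in $W$ are pairwise distinct vertices. Hence $|W\cap\mathtt{ES}|\ge\ell_W$; adding the $\ell_W$ caps and the $s_W$ specials (three pairwise disjoint subsets of $W$) yields $|W|\ge 2\ell_W+s_W$.

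Combining the two observations, the number of edges from $T_W$ not in the image of $\lambda$ is at least
\[
|T_W|-(\ell_W+s_W)\;\ge\;(2\ell_W+s_W-1)-(\ell_W+s_W)\;=\;\ell_W-1\;\ge\;\tfrac{\ell_W}{2},
\]
where the last inequality uses $\ell_W\ge 2$. A routine check shows that every other witness set contributes non-negatively to $|\extra|$: each cap or special vertex in such a set has at least one incident edge in its spanning tree, and by Claim~\ref{claim:special-vertex-t-edges} these can be distributed injectively by $\lambda$. Summing over all witness sets,
\[
|\extra|\;=\;\sum_{W}\bigl(|T_W|-\ell_W-s_W\bigr)\;\ge\;\sum_{W:\,\ell_W\ge 2}\tfrac{\ell_W}{2}\;=\;\tfrac{|\mathtt{Cap}_1|}{2},
\]
which is the required inequality.

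The only delicate step I anticipate is the structural lower bound $|W|\ge 2\ell_W+s_W$; it relies crucially on the two features of the construction that each $u_e\in\mathtt{ES}_i$ has a unique cap neighbour and that the edge-selector sets are pairwise disjoint, so the $\ell_W$ required ES-neighbours of caps in $W$ contribute $\ell_W$ genuinely distinct vertices. Once that is in hand, the remainder is a one-line tree-counting identity followed by the summation above.
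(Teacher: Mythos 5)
Your proof is correct and takes a genuinely different — and substantially cleaner — route than the paper's. The paper's argument roots a spanning tree of each offending witness set $W(p)$ at a chosen cap vertex $g_o$, traces the tree path from every other cap $g_i \in W(p)$ to the nearest cap on the way to $g_o$, locates an $\mathtt{ES}$-vertex $u_e^i$ on that path, and then performs a delicate case analysis (was $w\in Z$? was $w=s_i$?) together with repeated surgery on the accounting function $\lambda$ along a chain of special vertices, to exhibit one $\extra$ edge chargeable to $g_i$. Your argument instead uses two observations that make all of this unnecessary: first, $|\extra| = |F'| - 2t$ is invariant under the choice of accounting function (since $\lambda$ is injective on a domain of size exactly $2t$), so no modification of $\lambda$ is ever needed; second, the per-witness-set cardinality bound $|W| \ge 2\ell_W + s_W$ follows immediately from Claim~\ref{claim:cap-vertex-solution-edge} plus the disjointness of the $\mathtt{ES}_i$'s, and turns the bound into a one-line tree-size count. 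Both proofs in fact establish the same stronger quantity $|\mathtt{Cap}_1| - |\calW_1|$ before rounding down to $\frac{1}{2}|\mathtt{Cap}_1|$. What your approach buys beyond brevity is that the $\lambda$-invariance of $|\extra|$ removes the need, in the proof of Lemma~\ref{lem:rev}, to verify that the $\lambda$-modifications made in Claim~\ref{claim:multi-cap} and Claim~\ref{claim:disjoint-from-Z} do not interfere with one another; the same counting identity $|\extra| = \sum_W(|T_W|-\ell_W-s_W)$ would let one handle $\mathtt{Cap}_1$, $\mathtt{Cap}_2$, $\mathtt{Cap}_3$ uniformly. Two small points worth making explicit in a full write-up: (i) replacing $F'$ by a union of spanning trees is legitimate precisely because $|\extra|=|F'|-2t$ can only decrease while the witness structure, and hence $\mathtt{Cap}_1$, is unchanged; and (ii) the non-negativity of the remaining terms $|T_W|-\ell_W-s_W$ for $\ell_W\le 1$ needs Claim~\ref{claim:special-vertex-solution-edge} (to get an $\mathtt{ES}$-vertex in $W$ when $s_W\ge 1$ but $\ell_W=0$), which you invoke only implicitly.
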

\begin{proof}
 Let $\calW_1$ be the collection of witness sets in $\calW$ which contains at least two cap vertices. Clearly, $|{\cal W}_1|\leq \nicefrac{1}{2}\cdot |\mathtt{Cap}_1|$. Consider a witness set $W(p)$ in $\calW_1$, we argue that there are at least $|W(p)\cap \mathtt{Cap}_1|-1$ many edges in ${\rm \extra}$ incident to the vertices of $W(p)$ for an accounting function $\lambda$.

 Consider the accounting function $\lambda$ which is constructed/modified before considering $W(p)$. We modify this accounting function   
 for some vertices in $\specialvert$ to obtain another accounting function. Let $T$ be a spanning tree of $G'[W(p)]$ such that $E(T)\subseteq F'$. We arbitrarily fix a cap vertex $g_o$ in $W(p)$ and the root of $T$ at $g_o$.
 Consider a cap vertex $g_i$ and let $g_j$ be the first cap vertex on the unique path from $g_i$ to $g_o$ in $T$(Note that $g_j$ can be equal to $g_o$).
Let $P_{g_ig_j}$ be the unique path between $g_i$ and $g_j$ in the tree $T$.
By the construction of $G'$, 
there exists a vertex $u^i_e$ in $\edgegadget_i$ such that $g_iu^i_e$ is an edge in $E(P_{g_ig_j})$. We modify $\lambda$ in a way that there exists an edge incident to $u^i_e$ which is in \extra. Since $u^i_e$ is not in the domain of accounting function, this edge in \extra\ is unique to $g_i$.  
Hence,  for every cap vertex in $W(p) \cap \mathtt{Cap}_1$ except for $g_o$, there is an edge in \extra\ with respect to an accounting function $\lambda$.

Consider the path $P_{g_ig_j}$ between $g_i$ and $g_j$ in the spanning tree $T$. Clearly, there exists a vertex $u_e^i \in \edgegadget_i$ and a vertex $w\in Z \cup \{s_i\}$ such that $g_iu_e^i, u_e^iw \in E(P_{g_ig_j})$, otherwise there cannot be a path between $g_i$ and $g_j$.  
If $w\in Z$, then $u_e^iw$ is in $\extra$ for $\lambda$. Suppose that $w= s_i$. Then, in the path $P_{g_ig_j}$, either $s_i$ is adjacent to $u_{e'}^i \in \edgegadget_i$ or a special vertex $s_q \in \specialvert$, then $s_iu_{e'}^i \in E(P_{g_ig_j})$. Note that either $s_iu_e^i$ or $s_iu_{e'}^i $ is in \extra\ for $\lambda$. 
Now, suppose that $s_is_q \in E(P_{g_ig_j})$.
If $\lambda(s_i)=s_is_q$, then $s_iu_e^i$ is in $\extra$ for $\lambda$. Suppose that $\lambda(s_i)=s_iu_e^i$. Now, we modify $\lambda$ to obtain $\lambda(s_i)=s_is_q$.
We denote the successor and predecessor of a vertex $s$ in the path $P_{g_ig_j}$ by $\mathtt{succ}[s]$ and $\mathtt{pred}[s]$, respectively. Let $s_\ell$ be the first special vertex in the path $P_{g_ig_j}$ such that either $\lambda(s_\ell)=s_\ell \mathtt{succ}[{s_\ell}]$ or $\mathtt{succ}[{s_\ell}] \in \edgegadget_\ell$.  
Let $\mathtt{\widetilde{SV}}$ be the set of the vertices in the subpath of $P_{g_ig_j}$ from $s_i$ to $s_\ell$.  For every special vertex $s$ in $\mathtt{\widetilde{SV}}$, we set $\lambda(s)=s\mathtt{succ}[s]$.
 Note that $\lambda$ is still a \emph{one-to-one} function as earlier either $s_\ell \mathtt{pred}[s_\ell]$ (when $\lambda(s_\ell)=s_\ell \mathtt{succ}[{s_\ell}]$) or $s_\ell\mathtt{succ}[{s_\ell}]$ (when $\lambda(s_\ell)=s_\ell \mathtt{pred}[{s_\ell}]$, and $\mathtt{succ}[{s_\ell}] \in \edgegadget_\ell$) was not in the range of $\lambda$.
 Now, since $\lambda(s_i)=s_is_q$, $s_iu_e^i$ is in $\extra$ for the modified $\lambda$. 
 This implies that for any cap vertex $g_i\neq g_0$ there is an edge incident to $u_e^i$ which is in $\extra$ for an accounting function $\lambda$.
 Hence, $F'$ has at least $|W(p)\cap \mathtt{Cap}_1|-1$ edges in $\extra$.

By the above discussion, for every witness set $W(p) \in {\cal W}_1$, $F'$ has at least $|W(p)\cap \mathtt{Cap}_1|-1$ edges in $\extra$ for the function $\lambda$. Since $\calW$ partitions vertices in $\mathtt{Cap}$, we can infer that $F'$ has at least $|\mathtt{Cap}_1|-|{\cal W}_1|$ edges in $\extra$ for the function $\lambda$. 
\end{proof}

\begin{claim}\label{claim:intersect-with-Z}
Let $\mathtt{Cap}_2$ be a subset of $\mathtt{Cap}$ such that every vertex in $\mathtt{Cap}_2$ is in a witness set in $\cal W$ that intersects with $Z$ and contains exactly one cap vertex. Then, $F'$ contains at least $|\mathtt{Cap}_2|$ edges in {\rm $\extra$} for any function $\lambda$.
\end{claim}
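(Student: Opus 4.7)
The plan is to show that for each cap vertex $g_i \in \mathtt{Cap}_2$, the witness set $W(\psi(g_i))$ is forced to contain a specific edge of $F'$ with one endpoint in $\edgegadget_i$ and the other in $Z$; such an edge is touched neither by cap nor by special vertices and therefore falls outside the range of every accounting function $\lambda$. Distinctness will follow because witness sets partition $V(G')$.

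First I would analyze the structure of $W(\psi(g_i))$ for an arbitrary $g_i\in \mathtt{Cap}_2$. By hypothesis, this witness set contains $g_i$ together with at least one vertex $z\in Z$, and it contains no other cap vertex. The edges of $F'$ inside the set induce a connected spanning subgraph of $G'[W(\psi(g_i))]$, so there is an $F'$-path from $g_i$ to $z$. Since the non-guard neighbors of $g_i$ in $G'$ are exactly the vertices of $\edgegadget_i$, this path must leave $g_i$ along some edge $g_iu\in F'$ with $u\in \edgegadget_i$ (as already guaranteed by Claim~\ref{claim:cap-vertex-solution-edge}). I would then inspect the remainder of the path: by the construction, a vertex of $\edgegadget_i$ has neighbors only in $\{g_i\}\cup \{s_i\}\cup (Z\setminus (X_x\cup X_y))$, while every vertex of $\specialvert$ is adjacent only to vertices in $\edgegadget$, other special vertices, and its own guards. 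Thus $\specialvert$ provides no route from $\edgegadget_i$ to $Z$, and the path must eventually traverse some edge $e_i=u'z'\in F'$ with $u'\in \edgegadget_i$ and $z'\in Z\cap W(\psi(g_i))$.

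Next I would argue that $e_i$ is in $\extra$ for \emph{every} accounting function $\lambda$. Any such $\lambda$ maps only cap and special vertices to an edge of $F'$ incident to that vertex, so the range of $\lambda$ consists entirely of edges incident to $\mathtt{Cap}\cup \specialvert$. The endpoints of $e_i$ lie in $\edgegadget_i$ and $Z$, so $e_i$ is incident to no cap vertex and no special vertex; hence $e_i\notin \mathrm{range}(\lambda)$, which means $e_i\in \extra$.

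Finally I would establish pairwise distinctness of the edges $\{e_i : g_i\in \mathtt{Cap}_2\}$. For any $g_i\ne g_j$ in $\mathtt{Cap}_2$, the assumption that each of their witness sets contains exactly one cap vertex places $g_i$ and $g_j$ in different witness sets, so $W(\psi(g_i))\cap W(\psi(g_j))=\emptyset$. Since $e_i$ has both endpoints in $W(\psi(g_i))$ and $e_j$ has both endpoints in $W(\psi(g_j))$, the two edges share no endpoint, and they are certainly distinct. Collecting these observations yields at least $|\mathtt{Cap}_2|$ distinct edges of $F'$ that lie in $\extra$ for every accounting function $\lambda$, which is exactly the conclusion of the claim. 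The only mildly delicate point is the verification that $\specialvert$ offers no bypass to $Z$; this is immediate from the construction but is the step worth stating explicitly.
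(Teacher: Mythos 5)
Your proof is correct and reaches the claim, but it travels a longer road than the paper does. The paper's argument is shorter: since $W(\psi(g_i))$ contains a vertex $z\in Z$ together with the cap vertex $g_i$, the witness set has at least two vertices, so some $F'$-edge is incident to $z$ and lies inside $W(\psi(g_i))$; the other endpoint of that edge is a $G'$-neighbor of $z$, hence lies in $Z$, $\edgegadget$, or $\mathtt{Guard}_V$, none of which meets $\mathtt{Cap}\cup\specialvert$, so the edge is automatically in $\extra$ for \emph{every} accounting function. You instead trace the $F'$-path from $g_i$ into $Z$ to pin down a specific $\edgegadget$--$Z$ edge; the core observation (vertices of $Z$ have no cap or special neighbors) is the same, and your distinctness argument via disjointness of witness sets matches the paper's. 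One small imprecision worth flagging: you assert the offending edge $e_i=u'z'$ has $u'\in\edgegadget_i$, justified by saying $\specialvert$ ``provides no route from $\edgegadget_i$ to $Z$.'' In fact the path could detour $g_i\to \edgegadget_i\to s_i\to \cdots \to s_j\to \edgegadget_j\to Z$ through the $\specialvert$-clique and a different edge-selector set before entering $Z$, so all you can conclude is $u'\in\edgegadget$, not $u'\in\edgegadget_i$. This does not damage the argument --- $\edgegadget$ is disjoint from $\mathtt{Cap}\cup\specialvert$, and the edge is still wholly inside $W(\psi(g_i))$ --- but the statement as written is not justified and should be weakened.
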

\begin{proof} 
Let $\calW_2$ be the collection of witness sets in $\calW$ that intersects with $Z$ and contains exactly one cap vertex. Consider a witness set $W(p)$ in $\calW_2$ and let $z$ be a vertex in $W(p) \cap Z$. 
 Hence, $F'$ contains at least one edge incident to $z$. For any function $\lambda$, an edge incident to a vertex of $Z$ is in $\extra$ as the vertices of $Z$ are neither adjacent to cap vertices nor special vertices. This implies that $F'$ contains an edge in $\extra$ that is incident to a vertex of $W(p)$.
 Since $\calW$ partitions vertices in $\mathtt{Cap}_2$, and every witness set in $\calW_2$ contains exactly one cap vertex, $F'$ has at least $|\mathtt{Cap}_2|$ edges in $\extra$ for the function $\lambda$.
\end{proof}

\begin{claim}\label{claim:disjoint-from-Z}
Let $\mathtt{Cap}_3$ be a subset of $\mathtt{Cap}$ such that every vertex in $\mathtt{Cap}_3$ is in a witness set in $\cal W$ that is disjoint from $Z$; contains exactly one cap vertex; and 
 at least two vertices of $\mathtt{ES}$. Then, $F'$ contains at least $|\mathtt{Cap}_3|$ edges in {\rm $\extra$} for a function $\lambda$.
\end{claim}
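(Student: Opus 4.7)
The plan is a counting argument applied to a spanning tree of each witness set in the collection $\calW_3 := \{W \in \calW \mid W \text{ satisfies conditions (i), (ii), (iii) of the claim}\}$. Since every set in $\calW_3$ contains exactly one cap vertex and witness sets are pairwise disjoint, we have $|\calW_3| = |\mathtt{Cap}_3|$. I will construct a single accounting function $\lambda$ tailored to produce, for each $W(p) \in \calW_3$, at least one edge of $F'$ incident to $W(p)$ that falls in $\extra$; disjointness of witness sets then immediately gives the $|\mathtt{Cap}_3|$-bound.

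For each $W(p) \in \calW_3$ I would fix a spanning tree $T_p$ of $G'[W(p)]$ with $E(T_p) \subseteq F'$ (possible after the standard replacement, since $W(p)$ is connected in $G'$). Let $g_i$ be the unique cap vertex in $W(p)$; I root $T_p$ at an arbitrary vertex $u_e \in W(p) \cap \edgegadget$, which exists because condition (iii) gives $|W(p) \cap \edgegadget| \geq 2$. I then set $\lambda(g_i)$ and, for every special vertex $s_j \in W(p) \cap \specialvert$, $\lambda(s_j)$ to be the edge of $T_p$ joining that vertex to its parent. A short structural check confirms this is well defined and produces an edge incident to the vertex: vertices of $\mathtt{Guard}_V \cup \mathtt{Guard}_E \cup \mathtt{Guard}_S$ are pendants so they are necessarily leaves of $T_p$, the cap vertex $g_i$ is adjacent in $G'$ only to $\edgegadget_i \cup \mathtt{Guard}_E$, and each special vertex $s_j$ is adjacent only to $\specialvert \cup \edgegadget_j \cup \mathtt{Guard}_S$, so their parents in $T_p$ exist and lie among their $G'$-neighbors. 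For vertices of $\mathtt{Cap} \cup \specialvert$ lying outside $\bigcup \calW_3$, extend $\lambda$ using the construction of Claim~\ref{claim:special-vertex-t-edges}. Since the $W(p)$'s are vertex-disjoint, distinct vertices receive edges lying in distinct witness sets, so the global $\lambda$ is one-to-one, hence a valid accounting function.

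The counting is then immediate. Tree $T_p$ has exactly $|W(p)| - 1$ edges, and $\lambda$ restricted to $W(p)$ consumes at most $1 + |W(p) \cap \specialvert|$ of them, leaving at least
\[
|W(p)| - 1 - (1 + |W(p) \cap \specialvert|) \;=\; |W(p) \cap \edgegadget| + |W(p) \cap (\mathtt{Guard}_V \cup \mathtt{Guard}_E \cup \mathtt{Guard}_S)| - 1 \;\geq\; 2 - 1 \;=\; 1
\]
edges of $T_p$ outside the range of $\lambda$, i.e., in $\extra$. Summing over the vertex-disjoint collection $\calW_3$ yields at least $|\calW_3| = |\mathtt{Cap}_3|$ edges of $F'$ in $\extra$ with respect to this $\lambda$.

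The step I anticipate requiring the most care is verifying that rooting $T_p$ at an $\edgegadget$-vertex really turns the $1 + |W(p) \cap \specialvert|$ assignments into a valid and injective choice of edges, in particular that the parent-edges for the cap vertex and special vertices are indeed edges of $G'$ (and hence of $F'$) incident to them. This is however a straightforward case analysis using the adjacency structure of $G'$ sketched above, and the lack of any interaction between different $W(p) \in \calW_3$ means the argument composes across the family without further complication.
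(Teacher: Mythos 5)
Your proof is correct, but it takes a genuinely different and arguably cleaner route than the paper's. The paper starts from the accounting function fixed after Claim~\ref{claim:special-vertex-t-edges} (and possibly already altered by Claim~\ref{claim:multi-cap}), and for each $W(p)\in\calW_3$ it traces an explicit path in the spanning tree $T$ to exhibit one edge that must be in $\extra$, splitting into two cases according to whether the second $\mathtt{ES}$-vertex of $W(p)$ lies in $\edgegadget_i$ or in some $\edgegadget_j$ with $j\neq i$, and in the latter case surgically re-pointing $\lambda$ along the chain of special vertices between $s_i$ and $s_j$ so that $s_iu_e^i$ becomes extra. You instead build $\lambda$ directly, rooting each $T_p$ at an $\edgegadget$-vertex and assigning to every cap or special vertex of $W(p)$ its parent edge; a bare count ($|W(p)|-1$ tree edges, at most $1+|W(p)\cap\specialvert|$ of them consumed, and $|W(p)\cap\edgegadget|\geq 2$) then hands you the extra edge with no case analysis and no path-tracing. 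What each buys: your construction is shorter and more uniform; the paper's incremental-modification scheme makes it automatic that a single $\lambda$ simultaneously certifies Claims~\ref{claim:multi-cap}, \ref{claim:intersect-with-Z} and \ref{claim:disjoint-from-Z}, which is exactly what the proof of Lemma~\ref{lem:rev} uses. Your $\lambda$ is in fact compatible with those claims as well --- Claim~\ref{claim:intersect-with-Z} holds for any accounting function, and Claim~\ref{claim:multi-cap}'s modifications touch only special vertices in witness sets containing two or more cap vertices, which are disjoint from $\calW_3$ --- but since the present claim is applied inside Lemma~\ref{lem:rev} together with the other two, that compatibility is worth stating explicitly.
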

\begin{proof}
 Let $\calW_3$ be a collection of witness sets in $\calW$ that are disjoint from $Z$; contains exactly one cap vertex; and at least two vertices of $\mathtt{ES}$.
 Consider a witness set $W(p)$ in $\calW_3$. We argue that $F'$ has at least one edge in {\rm $\extra$} that is incident to a vertex of $W(p)$ for a function $\lambda$.

 Let $T$ be a spanning tree of $G'[W(p)]$ such that $E(T)\subseteq F'$.
Consider the accounting function $\lambda$ which is constructed/modified before considering $W(p)$. Let $g_i$ be the cap vertex contained in $W(p)$.
By Claim~\ref{claim:cap-vertex-solution-edge}, there exists $u_e^i \in \edgegadget_i$ such that $g_iu_e^i \in F'$.
We consider two cases depending on whether another vertex in $\edgegadget \cap W(p)$ is in $\edgegadget_i$ or not.
Suppose that there exists a vertex $u_{e'}^i \in \edgegadget_i \cap W(p)$ such that $u_{e'}^i\neq u_{e}^i$ and
if $g_iu_{e'}^i \in E(T)$, then either $g_iu_{e}^i$ or $g_iu_{e'}^i$ is in $\extra$ for $\lambda$. Suppose that $g_iu_{e'}^i \notin E(T)$ and consider a path from $g_i$ to $u_{e'}^i$, say $P_{g_iu_{e'}^i}$, in the spanning tree $T$. Since $W(p)$ does not intersect with $Z$, $E(P_{g_iu_{e'}^i}) = \{g_iu_e^i, u_e^is_i, s_iu_{e'}^i\}$, either $s_iu_e^i$ or $s_iu_{e'}^i$ is in $\extra$ for $\lambda$. Now, suppose that $u_e^i$ is the only vertex in $ \edgegadget_i \cap W(p)$. Since $W(p)$ contains at least two vertices of $\mathtt{ES}$, there exists a vertex $u_{e'}^j \in \edgegadget_j \cap W(p)$, where $j\in [t], j\neq i$. Consider a path $P_{g_iu_{e'}^j}$ from $g_i$ to $u_{e'}^j$, since $W(p)$ is disjoint from $Z$, $u_{e}^is_i,u_{e'}^js_j \in E(P_{g_iu_{e'}^j})$ (otherwise there can not be a path from $g_i$ to $u_{e'}^j$), and the path from $s_i$ to $s_j$ in $T$ contains only special vertices. Let $S$ be the set of vertices in the path from $s_i$ to $s_j$ in $T$. We modify $\lambda$ to define $\lambda(s)=s\mathtt{succ}[s]$, for all $s \in S$. Note that $\lambda$ is still a {\em one-to-one} function, and $s_iu_{e}^i$ is in $\extra$. Hence, for every witness set in $\mathtt{Cap}_3$, we have an edge in $\extra$ for some function $\lambda$. 
\end{proof}

\begin{proof}(of Lemma~\ref{lem:rev})
 Let $\mathtt{Cap}_1, \mathtt{Cap}_2,$ and $\mathtt{Cap}_3$ be the subset of $\mathtt{Cap}$ as defined in Claim~\ref{claim:multi-cap}, \ref{claim:intersect-with-Z} and \ref{claim:disjoint-from-Z}. Note that sets $\mathtt{Cap}_1, \mathtt{Cap}_2, \mathtt{Cap}_3$ are pairwise disjoint. Let $\mathtt{Cap}_4$ be the collection of cap vertices in $\mathtt{Cap} \setminus (\mathtt{Cap}_1 \cup \mathtt{Cap}_2 \cup \mathtt{Cap}_3)$.

 We first argue that $|\mathtt{Cap}_4| \geq 2\gamma t$. Since there are $t$ many cap vertices, $|\mathtt{Cap}_1| + |\mathtt{Cap}_2| + |\mathtt{Cap}_3| + |\mathtt{Cap}_4| = t$.
 Recall the accounting function $\lambda$, which was fixed before Claim~\ref{claim:multi-cap}. We modify this function in Claim~\ref{claim:multi-cap} and \ref{claim:disjoint-from-Z} at some special vertices to obtain another accounting function. Note that the modifications to $\lambda$ in Claim~\ref{claim:multi-cap} are at special vertices, which are contained in witness sets that contains at least two cap vertices. The modifications to $\lambda$ in Claim \ref{claim:disjoint-from-Z} are at special vertices, which are contained in witness sets that contains exactly one cap vertex. Since witness sets partition special vertices, one modification does not affect another. Since Claim~\ref{claim:intersect-with-Z} holds true for any accounting function, we know that for the function $\lambda$, there are at least $ \nicefrac{1}{2} \cdot |\mathtt{Cap}_1| + |\mathtt{Cap}_2| + |\mathtt{Cap}_3|$ many edges in \extra. Since there are at most $(\nicefrac{1}{2} - \gamma)t$ many edges in \extra\ for any accounting function, $|\mathtt{Cap}| - |\mathtt{Cap}_4| = |\mathtt{Cap}_1| + |\mathtt{Cap}_2| + |\mathtt{Cap}_3| \le (1 - 2\gamma)t$. Hence, $|\mathtt{Cap}_4| \ge 2\gamma t$. This implies that there are at least $2\gamma t$ many cap vertices, which are contained in a witness set, that does not contain any other cap vertex, no vertex of $Z$, and precisely one vertex of $\mathtt{ES}$.

 Let $\calW^\star$ be the subset of witness sets that contain at least one cap vertex. Let $\calW_1, \calW_2,$ and $\calW_3$ be the subset of $\calW^\star$ as defined in the proofs of Claim~\ref{claim:multi-cap}, \ref{claim:intersect-with-Z} and \ref{claim:disjoint-from-Z}.
Let $\calW_4$ be the collection of remaining witness sets in $\calW^\star \setminus (\calW_1 \cup \calW_2 \cup \calW_3)$.  
Note that any witness set in $\calW_4$ contains exactly one cap vertex, no vertex of $Z$, and exactly one vertex of $\mathtt{ES}$.
Hence, $|\calW_4| = |\mathtt{Cap}_4| \geq 2\gamma t$. 

 Let $W(p)$ be a witness set in $\calW_4$ and $u_e$ be a vertex in $W(p) \cap \edgegadget$. We argue that all the non-neighbors of $u_e$ in $Z$ are in $V(F')$. For the contradiction, suppose that there exists a non-neighbor of $u_e$, say $w$, in $Z$ that does not belong to $V(F')$. Note that $W(p)$ contains a cap vertex. By Claim~\ref{claim:clique-side-ver}, $\psi(w)$ and $p$ are in $\tilde{C}$.
 Since $W(p)$ neither contains a vertex of $Z$ nor any other vertex of $\mathtt{ES}$, $p$ and $\psi(w)$ are not adjacent to each other, a contradiction to the fact that $\tilde{C}$ is a clique. Hence, our assumption is wrong and all the non-neighbors of $u_e$ in $Z$ are in $V(F')$. Recall that for every vertex $v$ in $G$, the reduction algorithm has added $\rho$ many copies of $v$ in $G'$, and the set of these vertices is   
 denoted by $X_{v}$. If the vertex $u_e$ in graph $G'$ corresponds to the edge $e = v_1v_2$ then all vertices in $X_{v_1} \cup X_{v_2}$ are contained in $V(F')$.  
  
 Let $Y$ be the subset of $\mathtt{ES}$ such that every vertex in $Y$ is in a witness set in $\calW_4$. Since each witness set in $\calW_4$ contains exactly one vertex of $Y$, $|Y| \geq 2\gamma t$.
 Let $Y_{E}$ be the set of edges in $G$, which corresponds to vertices in $Y$.
 Let $\bar{N}_{Z}(Y)$ be the set of non-neighbors of $Y$ in $Z$. We know that $\bar{N}_Z(Y)\subseteq V(F')$.  
 Since size of $F'$ is at most $|F'|\leq (\nicefrac{5}{2})t-\gamma t$ and $2t$ edges are incident to $\mathtt{Cap} \cup \specialvert$ (Claim~\ref{claim:special-vertex-t-edges}) there are at most $\nicefrac{t}{2}-\gamma t$ edges incident to the vertices of $Z$. Since every edge can be incident to at most two vertices of $\bar{N}_Z(Y)$, $|\bar{N}_Z(Y)|\leq t-2\gamma t$.
Let $S$ be the set of vertices, which are endpoint of some edge in $Y_E$ in $G$.
Since we have added $\rho$ copies for each vertex in $S$, we have $\rho |S| = |\bar{N}_Z(Y)| \le t-2\gamma t$.
This implies there are at most $(1 - 2\gamma)\nicefrac{t}{\rho} \le \nicefrac{k}{\delta}$ (as $\rho = \lceil \nicefrac{\delta t}{k} \rceil$) vertices which span at least $2\gamma t$ many edges in $G$.
Since  a witness set in $\calW_4$ contains exactly one cap vertex and exactly one vertex of $\mathtt{ES}$, due to Claim~\ref{claim:cap-vertex-solution-edge}, any two edges in $Y_E$ corresponds to two vertices in $G'$, which are in different edge selector sets, that corresponds to different color class in edge coloring $\phi$ in $G$. Hence, the set of edges $Y_E$ is colorful. This completes the proof of the lemma.  
\end{proof}

We are now in a position to prove the main theorem of this section.

\begin{reptheorem}{thm:split-no-lossy} Assuming {\sf Gap-ETH}, no \FPT\ time algorithm can approximate {\sc Split Contraction} within a 
 factor of $\left(\nicefrac{5}{4}-\delta \right)$, for any fixed constant $\delta>0$. 
\end{reptheorem}
\begin{proof}
 For the sake of contradiction, assume that for a given fixed $\delta > 0$, there exists an \FPT\ time algorithm, say $\calA_{\delta}$, which can approximate {\sc Split Contraction} within a factor of $(\nicefrac{5}{4} - \delta)$.

 Consider an instance $(G, k)$ of \colordensecliqueperfect. We run the reduction algorithm mentioned in this section to obtain an instance $(G', k')$ of \textsc{Split Contraction}, where $k' = 2t + \lceil \nicefrac{\delta t}{k}\rceil, k' \in \calO(k^2)$. Let $\texttt{opt}(G')$ be the number of minimum edges that needs to be contracted in $G'$ to convert it into a split graph.

 By Lemma~\ref{lem:fwd}, there exists a set of edges $F'$ in $G'$ such that $G'/F'$ is a split graph and the size of $F'$ is at most $k'$. This implies that $\texttt{opt}(G') \le k'$. Let $\tilde{F}$ be the set of edges returned by algorithm $\calA_{\delta}$ when the input is $(G', k')$. Since $\calA_{\delta}$ returns an approximate solution within factor $(\nicefrac{5}{4} - \delta)$, we know that $|\tilde{F}| \le (\nicefrac{5}{4} - \delta) \cdot \texttt{opt}(G') \le (\nicefrac{5}{4} - \delta)\cdot k'$. By Lemma~\ref{lem:rev}, there exists a set of at most $\nicefrac{k}{\delta}$ vertices in $G$ that spans at least $(\nicefrac{3\delta}{2})t$ edges in $G$. Note that the proof of Lemma~\ref{lem:rev} can easily be converted into a polynomial time algorithm to obtain these set of vertices and edges in $G$ given $F'$. Since $\calA_{\delta}$ is an \FPT\ time approximation algorithm, it runs in time $f(k')\cdot |V(G')|^{\calO(1)} = f(k) \cdot |V(G)|^{\calO(1)}$ time.

 We can conclude that there is a $f(k)\cdot |V(G)|^{\calO(1)}$-time algorithm such that, given an integer $k$ and an edge colored graph $G$ containing a colorful-$k$-clique, always outputs a vertex set of size at most $\nicefrac{k}{\delta}$ vertices that span at least $2\gamma t$ colorful edges. Fix positive constants $\epsilon, \alpha$ such that $0 < \epsilon < 1$; $1 <\alpha$ and $3 \le 4 \alpha \epsilon$. For $\delta = \nicefrac{1}{\alpha}$, the above conclusion contradicts Lemma~\ref{lemma:original-problem}. Hence, our assumption is wrong, which implies the correctness of the theorem. 
\end{proof}
\section{(No) \fpt Approximation Algorithm for Chordal Contraction}
\label{hardness_chordal}
In this section, we show that unlike \textsc{Clique Contraction} or \textsc{Split Contraction}, one can not obtain a lossy kernel of any size for \textsc{Chordal Contraction}.
It is known that for every $\alpha \ge 1$ and parameterized optimization problem $\Pi$, $\Pi$ admits a fixed parameter tractable $\alpha$-approximation algorithm if and only if $\Pi$ has an $\alpha$-approximate kernel \cite[Proposition~$2.11$]{lossy}.
We prove that \chc\ parameterized by the solution size $k$, cannot be approximated within a factor of $F(k)$ in \FPT\ time.
Towards this we give a reduction from {\sc Set Cover} and use a known result that no \FPT\ time algorithm can approximate {\sc Set Cover} within a factor of $F(k)$, where $F(k)$ is the function of $k$ alone~\cite{DBLP:conf/stoc/SLM18}.

The parameterized optimization version of \chc\ is defined as follows.

$$\textsc{ChC}(G,k,F)=\begin{cases}
\min\left\{|F|,k+1\right\}\;\; &\text{if}\; G/F\; \text{is a chordal graph}\\
\infty \; \; &\text{otherwise}.
\end{cases}$$




Lokshtanov et al.~\cite{elimNew} proved that \textsc{Chordal Contraction}, parameterized by solution size $k$, does not admit a (classical) kernel of any size under widely believed assumption. They proved this by presenting a \emph{parameter preserving} reduction from \textsc{Hitting Set} problem to \textsc{Chordal Contraction}. In the following lemma, we argue that such \emph{parameter preserving} reduction can also be obtained in case of the optimization version of problems. Our reduction is the same as the reduction given by Lokshtanov et al.~\cite{elimNew}. We need some more arguments to show that there exists a \texttt{$1$-appt} from \textsc{Set Cover$/k$} to \textsc{Chordal Contraction} when parameterized by solution size. For the sake of completeness, we give the full reduction.
%

\begin{lemma}\label{lemma:1-appt-chordal} There exists an $1$-approximate polynomial parameter transformation (\texttt{$1$-appt}) from \textsc{Set Cover$/k$} to \textsc{Chordal Contraction} parameterized by solution size. 
\end{lemma}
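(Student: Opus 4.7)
The plan is to give a polynomial-time reduction from \textsc{Set Cover$/k$} to \textsc{Chordal Contraction} that preserves the parameter exactly, together with a polynomial-time solution-lifting algorithm, whose combined effect is a $1$-\texttt{appt}. Concretely, I will construct $(G,k')$ from $(U,\mathcal{S},k)$ with $k'=k$ such that $\OPT_{\textsc{ChC}}(G,k)=\OPT_{\textsc{SC}}(U,\mathcal{S},k)$, and then from any $F'\subseteq E(G)$ with $G/F'$ chordal produce in polynomial time a set cover $\mathcal{S}'$ with $|\mathcal{S}'|\le |F'|$. The reduction I propose to use is the existing W[2]-hardness construction of Lokshtanov et al.~\cite{elimNew}, augmented with guard cliques so that the optimum sizes match exactly rather than merely being polynomially related; this is the extra step needed to upgrade the classical reduction to a $1$-\texttt{appt}.

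Construction: introduce a single hub vertex $h$; for each $S_j\in\mathcal{S}$ a vertex $s_j$ adjacent to $h$; and for each element $u_i\in U$ an induced cycle $C_i$ of length at least four passing through $h$ and through exactly the vertices $\{s_j : u_i\in S_j\}$ (padded with private vertices if needed to reach length four). To every vertex except $h$ attach a pendant clique of size $k+2$; these guard cliques ensure that no contraction touching a non-$h$ witness set can be ``cheap'', since any such contraction would leave at least one induced $C_4$ formed by the guard and its host alive. Return $(G,k)$.

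Forward direction: given a set cover $\mathcal{S}'=\{S_{j_1},\ldots,S_{j_k}\}$, let $F=\{hs_{j_1},\ldots,hs_{j_k}\}$. After contracting $F$, the enlarged witness $v^{\star}$ containing $h$ is adjacent in $G/F$ to every other vertex of each cycle $C_i$, because each $C_i$ originally contained some $s_j$ with $u_i\in S_j$ now merged into $v^{\star}$; so every $C_i$ collapses to a triangle and no new long induced cycle is created. Hence $G/F$ is chordal and $\textsc{ChC}(G,k,F)\le k$, giving $\OPT_{\textsc{ChC}}(G,k)\le \OPT_{\textsc{SC}}(U,\mathcal{S},k)$.

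Solution lifting: given $F'$ with $G/F'$ chordal and $|F'|\le k$, first use the pendant guards to \emph{normalize} $F'$ without increasing its size into an equivalent solution in which the unique non-trivial witness set $W^{\star}$ contains $h$. Set $\mathcal{S}'=\{S_j : s_j\in W^{\star}\}$. Chordality of $G/F'$ together with the structure of each cycle $C_i$ implies that some $s_j$ with $u_i\in S_j$ lies in $W^{\star}$ for every $i$, so $\mathcal{S}'$ covers $U$. Since $G[W^{\star}]$ is connected, $F'$ must contain at least $|W^{\star}|-1\ge|\mathcal{S}'|$ edges; returning $\mathcal{S}'$ therefore yields the desired lifting inequality and, combined with the forward direction, makes the reduction a $1$-\texttt{appt}. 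The main technical obstacle is the normalization step: I expect to have to show, by a careful case analysis on the interaction between element-cycles, $s$-vertices and guard cliques (extending the arguments of~\cite{elimNew}), that any non-canonical $F'$ can be polynomially rewritten into a canonical one of no larger size, and symmetrically that any chordal-contraction solution below the set-cover optimum would translate back to a set cover contradicting that optimum.
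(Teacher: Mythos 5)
The overall strategy — reduce from \textsc{Set Cover}, build a hub vertex so contractions can be normalized onto hub–set edges, and lift any chordal-contraction solution $F'$ back to a set cover of size at most $|F'|$ — is the right one, and your observation that a connected witness set $W^{\star}$ forces $|F'|\ge|W^{\star}|-1\ge|\mathcal{S}'|$ is the same size accounting the paper uses. However, the construction as you describe it does not work, and the key normalization step is not actually carried out.

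First, the per-element gadget is broken. You put all of $h$ and every $s_j$ with $u_i\in S_j$ on one ``induced'' cycle $C_i$, but by your own construction every $s_j$ is adjacent to $h$. If $u_i$ lies in three or more sets, then $h$ has three or more neighbors on $C_i$, so $C_i$ has chords $hs_j$ and cannot be induced. (The paper avoids this by never putting $s_j$ on the element cycle at all: for each $u_i$ it adds a fresh path $a_i - c_i - b_i$ with $a_i,b_i$ adjacent to $g$, so the element's constraint cycle is $\{g,a_i,c_i,b_i\}$, and the only connection to sets is that $c_i$ is adjacent to each $s_j$ with $u_i\in S_j$. Contracting $gs_j$ then creates the chord $gc_i$.) Second, pendant cliques of size $k+2$ do not act as guards in the way you claim: a clique hanging off a single vertex is chordal and does not create any induced $C_4$, so ``any such contraction would leave at least one induced $C_4$ formed by the guard and its host alive'' is false. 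The paper uses no guard cliques whatsoever; the reduction is exactly the Lokshtanov et al.\ reduction, unmodified.

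Third, and most importantly, the normalization step — which you yourself flag as ``the main technical obstacle'' — is not done, and the version you sketch asserts something false: there need not be a unique non-trivial witness set, nor need it contain $h$. The paper handles this by a concrete, local edge-replacement argument rather than a structural claim about witness sets: if $F'$ contains any edge of the form $ga_i$, $gb_i$, $a_ic_i$, $b_ic_i$, or $s_jc_i$, replace it with $gs_j$ for some $S_j\ni u_i$ and verify case by case that every induced four-cycle that the discarded edge destroyed is also destroyed by the new edge. After exhausting these replacements all solution edges are of the form $gs_j$, and reading off the $s_j$ in $W(g)$ gives the set cover. You need to supply an argument of exactly this kind (or an alternative) to close the gap; appealing to ``a careful case analysis'' without doing it leaves the central claim unproven.
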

\begin{proof} To prove this lemma, we present a reduction algorithm, $R_{\calA}$, which given an instance $((U, \calS), k)$ of \textsc{Set Cover$/k$} outputs an instance $(G, k')$ of \textsc{Chordal Contraction}. We also present a solution lifting algorithm that takes as input an instance $((U, \calS), k)$ of \textsc{Set Cover$/k$}; an output instance $(G, k') = R_{\calA}((U, \calS), k)$ of \textsc{Chordal Contraction}; and a solution $F$ to the instance $(G, k')$; and outputs a solution $\calF$ to $((U, \calS), k)$ such that \textsc{SC$/k((U, \calS), k, \calF)$} = \textsc{ChC}$(G, k, F)$.

 Without loss of generality, we assume that every element of the universe $U$ is contained in some set $S_i$ in $\calS$ as otherwise $((U, \calS), k)$ is a trivial instance. We first present a reduction algorithm.

\smallskip
\noindent \textbf{Reduction Algorithm :}
Given an instance $((U,\calS), k)$ of the \textsc{Set Cover} problem with $U = \{u_1, \cdots , u_n \}$ and $\calS = \{S_1, \cdots, S_m\}$, the algorithm constructs graph $G$ as follows: It creates a vertex $s_j$ for each set $S_j$ in $\calS$ and three vertices $a_{i}, b_{i}$ and $c_{i}$ for each element $u_i$ in the universe $U$. It also adds a special vertex $g$ to $G$. It adds following edges in $G$.  
 \begin{itemize}
 \item an edge between any two different vertices corresponding to sets; (In other words, the algorithm converts set $\{s_{1}, s_2, \cdots, s_m\}$ into a clique by adding all edges $s_js_{j^{\prime}}$ for $1 \le j, j^{\prime} \le m$ and $j \neq j^{\prime}$.)
 \item edge $gs_j$ for every $j$ in $\{1, 2, \cdots, m\}$ and edges $ga_i, gb_i$ for every $i$ in $\{1, 2, \cdots, n\}$; 
 \item edges $a_ic_i$ and $b_ic_i$ for every $i$ in $\{1, 2, \cdots, n\}$;
 \item for an element $x_{i}$ and a set $S_j$, if $x_i$ is in $S_j$ then it adds edge $c_is_{j}$;
 \end{itemize}
The algorithm returns $(G, k)$ as an instance of \textsc{Chordal Contraction}.
See Figure \ref{fig:redc4} for an illustration.

It is easy to verify that graph $G$ does not contain any induced cycle of length five or more. 
We have created cycles of length four for each element in the universe which intersects with each other only in $g$. 
Informally speaking, all these cycles can be killed by introducing the edge $gc_j$ for every cycle.
To introduce all these chords with at most $k$ contractions, we need to carefully select at most $k$ sets (and contracts edges of the form $gs_i$) which \emph{covers} all the elements.
We argue that introducing chords of the form $gs_j$ also kills other cycles of length four in the graph.
 
 \begin{figure}[t]
  \centering
  \includegraphics[scale=0.75]{./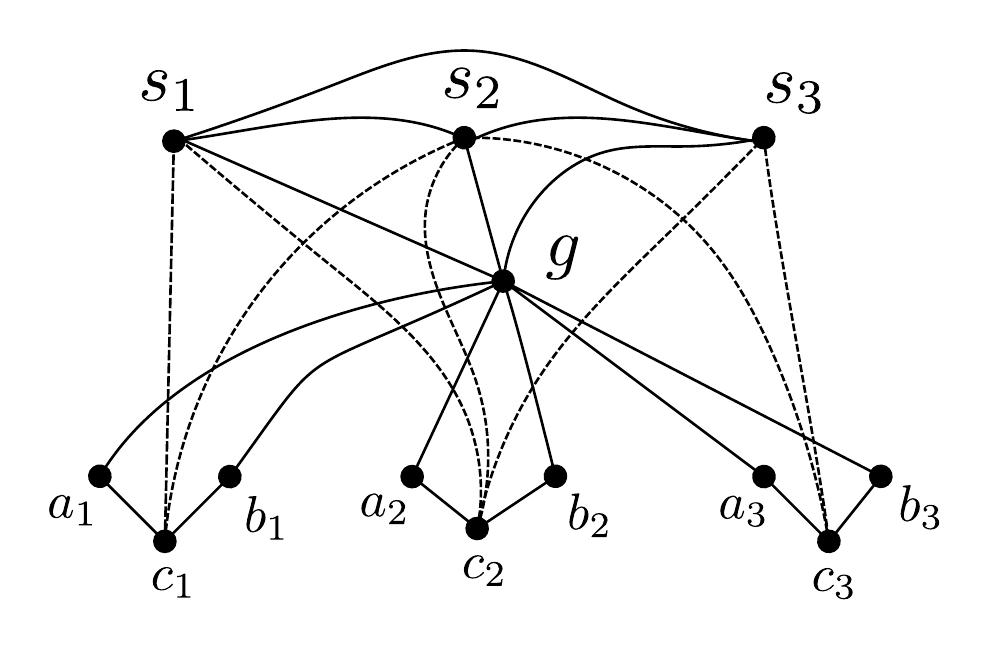}
  \caption{Reduction from an instance $((U, \calS), k)$ of \textsc{Set Cover} to an instance $(G, k)$ of \textsc{Chordal Contraction}. Here $U = \{u_1, u_2, u_3\}; \calS = \{S_1, S_2, S_3\}$ where $S_1 = \{u_1, u_2\}, S_2 = \{u_1, u_2, u_3\}$, and $S_3 = \{u_2, u_3\}$. Straight lines (eg. $a_1g$) are part of gadget construction whereas dashed lines (eg. $c_1s_1$) are added because of containment relationship between elements and sets. Please refer to the Reduction Algorithm in the proof of  Lemma~\ref{lemma:1-appt-chordal}. \label{fig:redc4}}
 \end{figure}

\smallskip
\noindent \textbf{Solution Lifting Algorithm:} Let $F^{\circ}$ be the given solution for $(G, k)$. If $|F^{\circ}| \ge k + 1$ then the algorithm returns $\calF = \calS$ as a solution. Otherwise, the algorithm first constructs another solution for $(G, k)$ with the following two operations. $(a)$ If $F^{\circ}$ contains an edge of the form $ga_i, gb_i, a_ic_i$ or $b_ic_i$, then replace it by $gs_j$, where $S_j$ is any set containing $u_i$. $(b)$ If $F^{\circ}$ contains an edge of the form $s_jc_i$, then replace it by $gs_j$. Let $F$ be the solution obtained from $F^{\circ}$ by exhaustively applying these two operations. Consider a $G/F$-witness structure of $G$ and let $W(g)$ be the witness set containing $g$. Define $\calF$ as a collection of sets in $\calS$ whose corresponding vertices are contained in $W(g)$. The algorithm returns $\calF$ as a solution for $((U, \calS), k)$.

We justify the two modification operations defined in solution lifting algorithm.
It is easy to see that $|F| \le |F^{\circ}|$.
We slightly abuse notations and rename new vertex added while contracting edge $gs_j$ as $g$.
The only cycle affected by contracting edges of the form $ga_i, gb_i, a_ic_i$ or $b_ic_i$ is $\{g, a_i, c_i, b_i\}$.
By contracting an edge of the form $gs_j$ where $S_j$ is any set containing $u_i$, we introduce another chord $gc_i$ which \emph{destroys} the cycle.
Similarly, if $F^{\circ}$ contains an edge of the form $s_jc_i$, then the only four-cycles of $G$ that gets a chord in $G/\{s_jc_i\}$ are: $\{g, a_i, c_i, b_i\}$, $\{s_j, g, a_i, c_i\}$, and $\{s_j, g, b_i, c_i\}$. 
All of these cycles get a chord when the edge $gs_j$ is contracted instead.
This implies that if $G/F^{\circ}$ is a chordal graph then so is $G/F$.

We now prove that the cycles present in the graph $G$ are of a very specific type. This claim is similar to Proposition~$1$ in \cite{elimNew}. Note that because of Claim~$1$, to convert $G$ into a chordal graph it is sufficient to introduce chords $gc_i$ for every $i$ in $\{1, 2, \cdots, n\}$.
  
\vspace{0.2cm}
\noindent \emph{Claim 1.} Graph $G$ does not contain any induced cycle of length five or more. The only induced cycles of length four in the graph $G$ are of one of the three forms: $(i)$ $\{g, a_{i}, c_{i}, s_{j}\}$ for some element $u_i$ and set $S_j$ containing it; $(ii)$ $\{g, b_{i},c_{i},s_{j}\}$ for some element $u_i$ and set $S_j$ containing it; $(iii)$ $\{g, a_{i}, c_{i}, b_{i}\}$ for some element $u_i$.\\
\noindent \emph{Proof:} We define subsets $T, A, B, C$ of $V(G)$ as collections of vertices $s_j$'s, $a_i$'s, $b_i$'s and $c_i$'s respectively. Formally, $T = \{s_1, s_2, \cdots , s_m\}$; $A = \{a_1, a_2, \cdots, a_n\}$; $B = \{b_1, b_2, \cdots, b_n\}$ and $C = \{c_1, c_2, \cdots, c_n\}$. Note that $G[T]$ is a clique whereas $A, B, C$ are independent sets in $G$.

Since $G[T \cup \{g\}]$ is a clique, any induced cycle of length at least four contains at most two vertices of $T \cup \{g\}$.
As $G \setminus (T \cup \{g\})$ is a collection of induced paths on three vertices, and hence acyclic, the largest induced cycle possible in $G$ is of length five. We note that every induced paths is of the form $\{a_i, c_i, b_i\}$ and only other vertex adjacent to $a_i, b_i$ is $g$. Hence, such path can not be part of an induced $C_5$. 
 This implies that $G$ does not contain an induced cycle of length five or more.  


Assume that there exists an induced $C_4$ with two vertices, say $s_j, s_{j^{\prime}}$, in $T$. By construction, the only vertices which are adjacent with $s_j, s_{j^{\prime}}$ are in the set $C \cup \{g\}$. Since $g$ is adjacent with both these vertices, it can not be part of induced $C_4$ that contains $s_j, s_{j^{\prime}}$. This implies that the remaining two vertices in $C_4$ are from $C$. As $s_j, s_{j^{\prime}}$ are adjacent to each other, the remaining two vertices in $C_4$ must be adjacent to each other. This contradicts the fact that $C$ is an independent set in $G$. Hence our assumption is wrong and no such induced $C_4$ exists.

Consider an induced $C_4$ which contains exactly one vertex, say $s_j$, in $T$. Assume that this induced $C_4$ does not contain $g$. By construction, only neighbors of $s_j$ outside $T$ are in $C$. Let $c_i, c_j$ are two vertices contained in this induced $C_4$. Since the only common neighbor of $c_i, c_j$ outside $T$ is $g$, our assumption that this $C_4$ does not contain $g$ is wrong. This implies every such induced $C_4$ contains $g$. Since the only other neighbor of $s_j$ is in $C$, one of the remaining vertex in this cycle is from set $C$. 
Let that vertex be $c_i$.
Since $a_i$ or $b_i$ are the only common neighbors of $g$ and $c_i$, the only possible cycles are of the form $(i)$ or $(ii)$ mentioned in the claim. 
As there is no edge between following pair of vertices- $(g, c_i)$, $(s_j, a_i)$ and $(s_j, b_i)$, this cycle is indeed an induced $C_4$.

Consider an induced $C_4$ which does not intersect with $T$. Since $G \setminus (T \cup \{g\})$ is acyclic, every cycle of this type contains $g$. In this cycle, neighbors of $g$ are from sets $A$ and $B$. By construction, vertices $a_{i}$ and $b_{i^{\prime}}$ have a common neighbor only if $i = i^{\prime}$. This implies the only possible induced $C_4$ which does not intersect with $T$ is of the form $(iii)$.

As we have considered all cases exhaustively, this proves the claim. \hfill $\diamond$

\vspace{0.2cm}

We now prove that any solution of \textsc{Set Cover/$k$} naturally leads to a solution for \textsc{Chordal Contraction}. For any subset $\calF$ of $\calS$, let $F_{\calF}$ be the set of edges in $G$ which are incident on $g$ and $s_i$ for some $s_i$ in $\calF$. 

\vspace{0.2cm}
\noindent\emph{Claim 2:} If $\calF$ is a set cover of instance the $((U,\calS), k)$, then $G/F_{\calF}$ is a chordal graph.\\
\noindent\emph{Proof :} Let $H$ be the graph obtained from $G$ by contracting all edges in $F_{\calF}$. Since $\calF$ covers all the elements of $U$, contracting all the edges in $F_{\calF}$ introduces edge $gc_i$ for every $i$ in $\{1, 2, \cdots,n\}$ in graph $H$. By Claim~$1$, all the induced cycles in $G$ are of the form $\{g, a_{i}, c_{i}, s_{j}\}$ or $\{g, b_{i},c_{i},s_{j}\}$ or $\{g, a_{i}, c_{i}, b_{i}\}$ for some element $u_i$. Hence, there is no induced cycle of length four or more in $H$; thus it is a chordal graph. \hfill $\diamond$
\vspace{0.2cm}

Let $F$ be the given solution for $(G, k)$ such that $G/F$ is a chordal graph and $|F|$ is at most $k$.  
Let $\calF$ be the solution for \textsc{Set Cover/$k$} instance returned by the solution lifting algorithm. Recall that the modification operation $(a)$ or $(b)$ mentioned in the solution lifting algorithm does not change the size of $F$. 

\vspace{0.2cm}
\noindent\emph{Claim 3:} $\calF$ is a set cover of size at most $|F|$ for $(U, \calS)$.\\
\noindent \emph{Proof:} Consider the four-cycle given by $\{g, a_i, c_i, b_i\}$ in graph $G$. Since $G/F$ is a chordal graph, there exists an edge $gc_i$ in $E(G/F)$. This implies there is an edge $gs_j$ in $F$ for some set $S_j$ which contains $u_i$. Hence there is a set $S_j$ in $\calF$ which contains element $u_i$. Since this is true for any element in $U$, $\calF$ is a set cover for $(U, \calS)$. As $|W(g)|$ is at most $|F| + 1$ and it contains vertex $g$, upper bound on $\calF$ follows. \hfill $\diamond$
\vspace{0.2cm}

We are now in the position to conclude the proof. Claim~$2$ implies that $\OPT_{\textsc{ChC}(G, k)} \le \OPT_{\textsc{SC$/k$}}((U, \calS), k)$. Moreover, if $|F| \ge k + 1$, then solution lifting algorithm returns $\calF = \calS$. In this case, \textsc{ChC}$(G, k, F) = k + 1 = $ \textsc{SC$/k((U, \calS), k, \calS)$}. If $|F| \le k$ and $G/F$ is a chordal graph then by Claim~$3$, \textsc{SC$/k((U, \calS), k, \calF)$} $\le$ \textsc{ChC}$(G, k, F)$. This implies that there exists a \texttt{$1$-appt} from \textsc{Set Cover$/k$} to \textsc{Chordal Contraction} when parameterized by solution size.
\end{proof}

Karthik et al.~\cite[see conclusion]{DBLP:conf/stoc/SLM18} showed that assuming \FPT $\neq$ {\sf W[1]}, no \FPT\ time algorithm can approximate  {\sc Set Cover} within a factor of $F(k)$. Pipelining this result with 
Lemma~\ref{lemma:1-appt-chordal}, we get the following result.

\begin{reptheorem}{thm:chordal-no-lossy} Assuming \FPT $\neq$ {\sf W[1]}, no \FPT\ time algorithm can approximate  {\sc Chordal Contraction} within a 
 factor of $F(k)$. Here, $F(k)$ is a function depending on $k$ alone.
\end{reptheorem}
\section{Conclusion}
\label{section_conclusion}
In this paper, we studied the {\sc $\cal F$-Contraction} problem, where $\cal F$ is a subfamily of chordal graphs, in the realm of parameterized approximation.  We showed that \textsc{Clique Contraction} admits 
a {\sf PSAKS}. On the other hand, \textsc{Split Contraction} admits a factor $(2+\epsilon)$-\FPT-approximation algorithm, for any $\epsilon > 0$. In fact, we showed that for any $\epsilon > 0$, \textsc{Split Contraction} admits an $(2 + \epsilon)$-approximate kernel with $\calO(k^{f(\epsilon)})$ vertices.  We complemented this result by showing that, assuming {\sf Gap-ETH}, no \FPT\ time algorithm can approximate {\sc Split Contraction} within a 
factor of $\left(\frac{5}{4}-\delta \right)$, for any fixed constant $\delta>0$.  Finally, we showed that, assuming \FPT $\neq$ {\sf W[1]}, {\sc Chordal Contraction} does not admit any $F(k)$-\FPT-approximation algorithm.   
 Our results add to this growing list of collection of \FPT-approximable and \FPT-in-approximable problems.    
 
 We find it extremely interesting that  three closely related  problems have different behavior with respect to \FPT-approximation.  Our paper also shows that further classification of problems using lossy kernels are of interest and could shed new light on even well-studied problems.  
The paper naturally leads to the following question: can the gap between upper and lower bounds for \textsc{Split Contraction} brought closer? We conjecture that \textsc{Split Contraction} does admit $\frac{5}{4}$-\FPT-approximation algorithm.

\bibliographystyle{alpha}
\bibliography{references,references-saket}
\end{document}